\documentclass[journal]{IEEEtran}
\usepackage{amsmath,amsfonts,amssymb}
\usepackage{algorithm}
\usepackage{algorithmic}

\usepackage[table]{xcolor}

\usepackage{array}
\usepackage{textcomp}
\usepackage{stfloats}
\usepackage{url}
\usepackage{verbatim}
\usepackage{graphicx}
\usepackage{xcolor}
\usepackage{cite}
\hyphenation{op-tical net-works semi-conduc-tor IEEE-Xplore}
\usepackage{hyperref,url}
\usepackage{xcolor}
\usepackage{bbm}
\usepackage{subcaption}
\usepackage{enumitem}
\usepackage{float}
\usepackage[normalem]{ulem}
\usepackage{amsthm}
\theoremstyle{plain}

\newtheorem{definition}{Definition}
\newtheorem{assumption}{Assumption}
\newtheorem{theorem}{Theorem}
\newtheorem{corollary}{Corollary}
\newtheorem{lemma}{Lemma}
\theoremstyle{definition}
\newtheorem{remark}{Remark}
\newtheorem*{remark*}{Remark}

\begin{document}

\title{Diffusion Stochastic Learning Over Adaptive Competing Networks}

\author{Yike Zhao{$^*$}, \IEEEmembership{Student Member, IEEE}, Haoyuan Cai{$^*$}, \IEEEmembership{Student Member, IEEE}, and Ali H. Sayed{$^*$}, \IEEEmembership{Fellow, IEEE}\\
$^*$\'Ecole Polytechnique F\'ed\'erale de Lausanne, Switzerland
\thanks{
The authors are with the Institute of Electrical and Micro Engineering, EPFL, Switzerland. Emails: yike.zhao@epfl.ch, haoyuan.cai@epfl.ch, ali.sayed@epfl.ch}}




\maketitle

\begin{abstract}
This paper studies a stochastic dynamic game between two competing teams, each consisting of a network of collaborating agents. Unlike fully cooperative settings, where all agents share a common objective, each team in this game aims to minimize its own distinct objective. In the adversarial setting, their objectives could be conflicting as in zero-sum games. Throughout the competition, agents share strategic information within their own team while simultaneously inferring and adapting to the strategies of the opposing team.
 We propose diffusion learning algorithms to address two important classes of this network game:  i) a zero-sum game characterized by weak cross-team subgraph interactions, and ii) a general non-zero-sum game exhibiting strong cross-team subgraph interactions.
We analyze the stability performance of the proposed algorithms under reasonable assumptions and illustrate the theoretical results through experiments on Cournot team competition and decentralized GAN training.
\end{abstract}

\begin{IEEEkeywords}
Competing networks, multi-agent game, Nash equilibrium,  diffusion learning
\end{IEEEkeywords}

\section{Introduction}

\IEEEPARstart{A}{daptive} cooperative networks using diffusion learning have achieved considerable success in addressing distributed optimization problems \cite{sayed2014adaptive, sayed2022inference1}.
Although many existing works concentrate primarily on single-model optimization tasks with a single-objective function, many practical applications involve more complex game-theoretic scenarios, such as Cournot team competition in economics \cite{bischi2000global, elettreby2006dynamical, ahmed2006multi, raab2009cournot}, multi-GAN systems \cite{hoang2018mgan}, and competitive e-sports games \cite{jaderberg2019human, vinyals2019grandmaster}. As a result, these applications require networks to operate in non-cooperative environments.


Motivated by this gap, this paper extends the {\em stochastic competing networks} problem from \cite{vlaski2021competing}, where each network aims to minimize its own global objective function defined as follows:
\begin{subequations}
\begin{align}
    \min_{x \in \mathbb{R}^{M_1}}  J^{(1)}\left(x,y\right) &\triangleq \sum_{k \in \mathcal{N}^{(1)}} p_k J_k\left(x,y\right)\label{eq:global_game_ya}\\
    \min_{y \in \mathbb{R}^{M_2}}  J^{(2)}\left(x,y\right) &\triangleq \sum_{k \in \mathcal{N}^{(2)}} p_{k} J_k\left(x,y\right) \label{eq:global_game_yb}\\
    \text{ where } J_k \left(x,y\right) &\triangleq \mathbb{E} Q_k \left(x,y; \boldsymbol{\xi}_k\right), k \in \mathcal{N}
\label{eq:global_game_yc}
\end{align}
\end{subequations}
where $\mathcal{N}^{(1)} \triangleq \{1, \ldots, K_1\}$ and $\mathcal{N}^{(2)} \triangleq \{K_1 + 1, \ldots, K\}$ denote the index sets of agents in teams 1 and 2, respectively, with $K = K_1 + K_2$, and $\mathcal{N} \triangleq \{\mathcal{N}^{(1)}, \mathcal{N}^{(2)}\}$ represents all agents in the game. The global objectives of teams 1 and 2 are given by $J^{(1)}(\cdot,\cdot)$ and $J^{(2)}(\cdot,\cdot)$,  
and $J_{k}(\cdot, \cdot)$ is 
the local objective of agent $k$.
This problem involves two distinct teams, whose strategies (i.e., solution vectors) are represented by $x $ and $y$, respectively. 
The weights $p_k > 0$ satisfy $\sum_{k \in \mathcal{N}^{(t)}} p_k = 1$ for each team $t \in \{1,2\}$. 
Note that formulation
\eqref{eq:global_game_ya}-\eqref{eq:global_game_yc} includes two-network zero-sum games as a special case when
$J^{(1)}\left(x,y\right) = - J^{(2)}\left(x,y\right)$.
This zero-sum formulation encompasses the important class of minimax problems studied in the literature (e.g. \cite{lin2020gradient,cai2024accelerated,cai2024diffusion}).
In this work, we specifically focus on the stochastic scenario, where each agent computes its local loss $Q_{k}(\cdot; \boldsymbol{\xi})$ using only local samples $\{\boldsymbol{\xi}_{k}\}$.
 This scenario is important in large-scale and real-world applications, where evaluating the full gradient of the true risk function is impractical due to
 communication constraints, unknown data distributions, or environmental uncertainty. Furthermore,  each team operates as a fully decentralized network, with agents interacting solely with their immediate neighbors, ensuring scalability and robustness without relying on centralized coordination.

Although problem \eqref{eq:global_game_ya}-\eqref{eq:global_game_yc}  is not entirely new, only a limited number of studies have explored a similar setup \cite{lou2015nash, huang2024no, zhao2025diffusion, meng2023linear,zimmermann2021solving}. Moreover,  existing studies exhibit two main limitations: i) they assume a bipartite graph structure to model cross-team interactions, which is often less practical in real-world applications \cite{lou2015nash, huang2024no, zhao2025diffusion}, and ii) they focus on deterministic settings \cite{meng2023linear,zimmermann2021solving}, thereby excluding more realistic scenarios that involve continuously streaming stochastic data.
In comparison, we relax these stringent requirements and develop novel algorithms that allow the network to continuously learn in an adaptive manner while searching for the equilibrium state.
In the following, we review relevant literature in distributed optimization and game theory, which provides the foundational basis for developing our proposed framework to address the  competing networks problem.

\subsection{Related works}

Distributed optimization problems commonly rely on a communication graph where agents collaborate to optimize a common objective \cite{chen2012diffusion, sayed2022inference1}. 
In this setting, the performance of distributed gradient algorithms has been extensively analyzed
\cite{chen2012diffusion, jakovetic2014fast, nedic2009distributed}. Beyond standard minimization tasks, distributed min-max optimization, which is often interpreted as a cooperative game, has also been studied in various contexts \cite{cai2024diffusion, lin2020gradient, beznosikov2022decentralized}.

In practical scenarios, networked games often involve noncooperative players optimizing their own objectives, each of which depends not only on their own actions but also on those of other players. As a result, agents must track the strategies of others and adapt their decisions accordingly.
When agents are only partially connected, existing approaches typically rely on gradient-based methods for updating their own actions, while employing leader-following consensus protocols \cite{ye2017distributed, ye2020distributed, fang2020distributed} or gossip-based methods \cite{salehisadaghiani2016distributed, salehisadaghiani2018distributed} to track the actions of other players. Other methods involve designing augmented gradient dynamics to ensure stability and convergence \cite{gadjov2018passivity} or applying Nesterov-type acceleration techniques to improve convergence speed \cite{tatarenko2020geometric}.
In certain settings, such as energy consumption games \cite{ye2016game}, each agent’s objective is a function of its own action and a weighted sum of all agents’ actions. This structure allows algorithms to track the aggregated actions of other players rather than the full action profile \cite{ye2016game, deng2018distributed, koshal2016distributed, belgioioso2020distributed, lei2022distributed}, reducing the complexity of information sharing. 
Furthermore, some studies \cite{parise2019variational, yu2017distributed} consider only local interactions, where agents adjust their strategies based on neighboring information without requiring full knowledge of the entire network’s action profile.

In networked systems, cooperative and noncooperative interactions often occur simultaneously. Agents may work together within a team to achieve shared objectives while simultaneously competing against other teams with conflicting interests.
This setting applies to a wide range of real-world applications, including Cournot-team competition and smart grid power management \cite{zimmermann2021solving, yu2023distributed}. More broadly, such game problems can be modeled using theoretical frameworks like two-network zero-sum games \cite{gharesifard2013distributed, lou2015nash, huang2024no, shi2019nash} and multicluster games, where multiple networks compete simultaneously \cite{ye2018nash1, ye2019unified, ye2017distributed1, zeng2019generalized, meng2023linear, zimmermann2021solving, pang2023distributed, zhou2022distributed, chen2023generalized, zhu2025accelerated, wang2025gradient, yu2023distributed, yu2024distributed}.

For two-network zero-sum games, many works \cite{gharesifard2013distributed, lou2015nash, huang2024no} assume a bipartite network without isolated nodes for cross-team information exchange. Specifically, the work \cite{gharesifard2013distributed} analyzes the convergence of a distributed gradient dynamics, while the work \cite{lou2015nash} develops a distributed subgradient-based algorithm for time-varying graphs. The work \cite{huang2024no} introduces a distributed no-regret mirror descent method that also achieves convergence. To relax the strong cross-team subgraph assumption, the work \cite{shi2019nash} 
 proposes an incremental strategy requiring only a leader-to-leader connection. However, it assumes each agent knows its network's Laplacian and introduces additional consensus steps per iteration. Note that all these algorithms employ 
 decaying step size to achieve convergence. Beyond the zero-sum setting, the work \cite{vlaski2021competing} addresses the same stochastic competing networks problem as ours, and develops a diffusion learning algorithm utilizing a constant step size. However, it focuses on consensus analysis without providing convergence arguments.

For multicluster games, most existing studies focus on continuous-time algorithms \cite{ ye2018nash1, ye2019unified, ye2017distributed1, zeng2019generalized}. In fact, discrete-time algorithms are more interesting, as practical data-driven systems typically operate with discrete input-output dynamics.
The work \cite{meng2023linear} introduces a distributed gradient-tracking algorithm with linear convergence. 
The main drawback of this work 
is that 
it restricts the cross-team interactions to team leaders.
The work \cite{zimmermann2021solving}
considers a directed graph and 
employs identical constant step size for all agents,
while the work \cite{pang2023distributed} 
allows agents to adopt a local learning rate.
Additionally, the work \cite{zhou2022distributed} modifies the local objectives to the one that depends on all agents' actions, which introduces increased memory and communication overhead.
Some other works explore time-varying graphs \cite{zhu2025accelerated, wang2025gradient}, with time-varying objectives \cite{yu2023distributed, yu2024distributed}, or coupled constraints \cite{chen2023generalized}. 
Note that the aforementioned works \cite{meng2023linear, chen2023generalized, yu2023distributed} assume undirected graphs for communication. Although some studies address directed graphs \cite{zimmermann2021solving, pang2023distributed, zhou2022distributed,zhu2025accelerated, wang2025gradient, yu2024distributed}, they  use additional right-stochastic matrices for within-team cooperation, which require additional steps \cite{zimmermann2021solving}. Furthermore, nearly all these works \cite{zimmermann2021solving, pang2023distributed, zhou2022distributed, chen2023generalized, zhu2025accelerated, wang2025gradient, yu2023distributed, yu2024distributed} require the local cost to be convex, which limits their applicability to more complex scenarios. Moreover, all these works focus on {\em deterministic settings} excluding the online scenario that data is continuously streaming in.

\subsection{Contributions}
The contributions of this work are summarized as follows: 
\begin{enumerate}
    \item 
We propose novel diffusion learning strategies to solve stochastic two-network game problems and analyze their convergence.
Our work relaxes the strong assumption of a bipartite graph without isolated nodes for the cross-team graph structure, as considered in previous studies (e.g. \cite{gharesifard2013distributed, lou2015nash, huang2024no}). Unlike existing methods, our algorithm employs a constant step size, enabling agents to continuously adapt toward an equilibrium state.
Furthermore, simulation results illustrate that the proposed methods achieve faster convergence and offer an improved stability range for the step size compared to the baseline \cite{vlaski2021competing}.

    

    \item 
  Our algorithm is simple and efficient to implement, as it requires only left-stochastic matrices for within-team communication, avoiding the additional step of constructing right-stochastic matrices \cite{zimmermann2021solving, pang2023distributed, zhou2022distributed, zhu2025accelerated, wang2025gradient, yu2024distributed}. Moreover, we do not impose convexity on local objective functions and instead assume that the global gradient operator satisfies strong monotonicity.

    \item 
We apply our algorithms to important practical applications, including the Cournot team competition in economics and decentralized GAN training. Simulation results confirm that our proposed algorithms outperform existing baselines.

\end{enumerate}
\vspace{0.3em}

\begin{table}[!htbp]  
    \centering
    \captionsetup{labelformat=empty}
    \caption{{\bf Notation:} The table below summarizes some of the key symbols in this article.}

    \begin{tabular}{|c||c|} 
    
    \hline
    $K_t$ & Number of agents in Team $t$, $K= K_1 + K_2$ \\\hline
    
    $\mathcal{N}^{(t)}$ &
    $\mathcal{N}^{(1)} \triangleq \{1, \ldots, K_1\}$, $\mathcal{N}^{(2)} \triangleq \{K_1, \ldots, K\}$
    \\\hline
$\mathcal{N}$ & $\mathcal{N} \triangleq \{\mathcal{N}^{(1)}, \mathcal{N}^{(2)}\}$
 \\
 \hline
    $J^{(t)} (\cdot)$ & Global risk function of Team/network $t$ \\\hline

    $J_k (\cdot;\cdot), Q_k(\cdot;\cdot)$ & Local risk, and loss functions of agent $k$ \\\hline

    $\boldsymbol{\xi}_k$ & Random sample of agent $k$ \\
    \hline
   $\mathbb{E}[\cdot]$ & Expectation operator
    \\ \hline 

$\otimes$ & Kronecker product\\
\hline
    \end{tabular}
    
    \label{tab:notations}
\end{table}



\section{Network Model and Algorithms}



Most distributed optimization methods rely on a single connected graph to exchange local estimates.  However, in competitive environments, agents must coordinate their actions within the team while also processing adversarial inputs to respond effectively to their opponents.
To understand how information flows in such settings, we introduce the network topology and the combination and inference matrices, which define communication links and their strengths for within-team sharing and cross-team estimation.
Building on the network structure, we develop diffusion learning algorithms that allow agents to continuously adapt their strategies, and ultimately reach an equilibrium state.

\subsection{Network Model}

\begin{figure*}[htbp]
\centering
\begin{subfigure}[b]{0.30\textwidth}
\centering
\includegraphics[width=\textwidth]{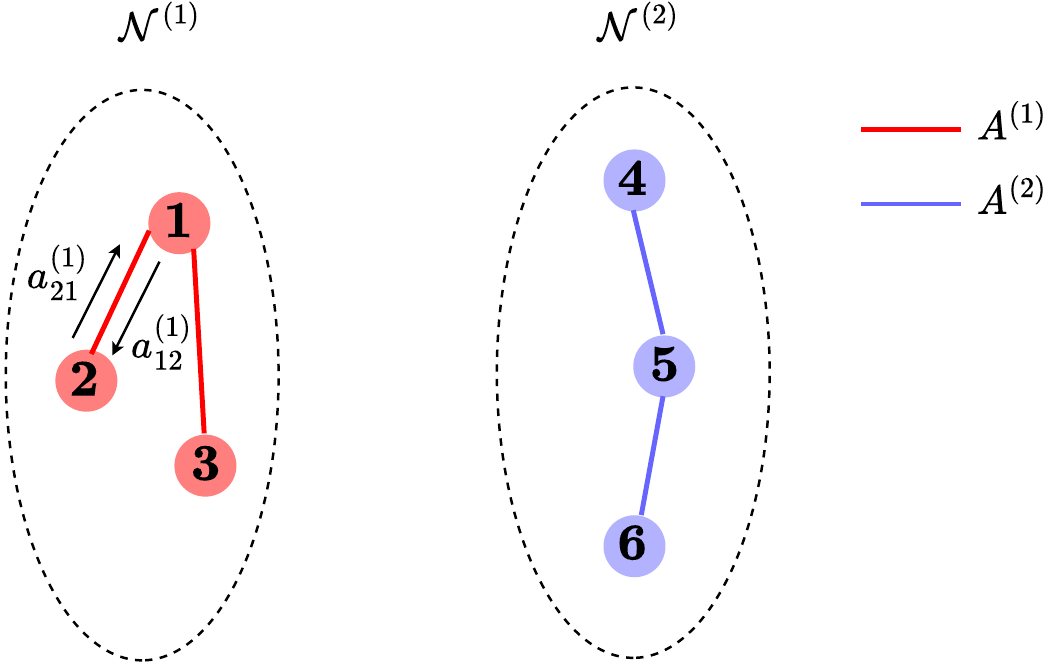}
\caption{Within-team subgraphs.}
\label{fig:subfig1}
\end{subfigure}
\hfill
\begin{subfigure}[b]{0.30\textwidth}
\centering
\includegraphics[width=\textwidth]{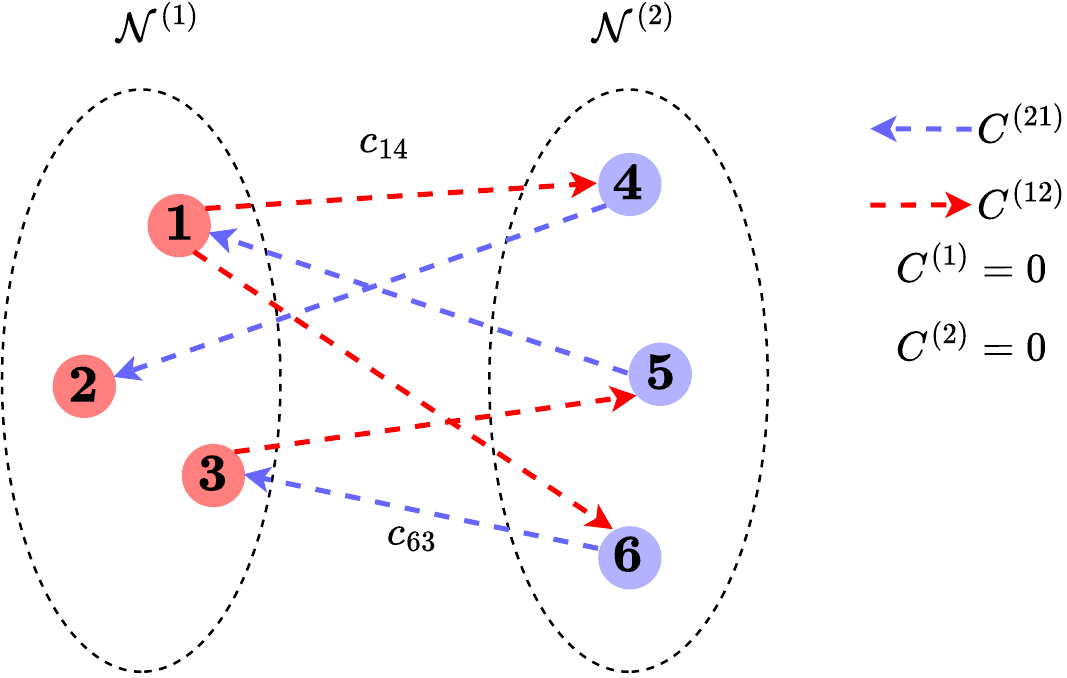}
\caption{Strong cross-team subgraph.}
\label{fig:subfig2}
\end{subfigure}
\hfill
\begin{subfigure}[b]{0.30\textwidth}
\centering
\includegraphics[width=\textwidth]{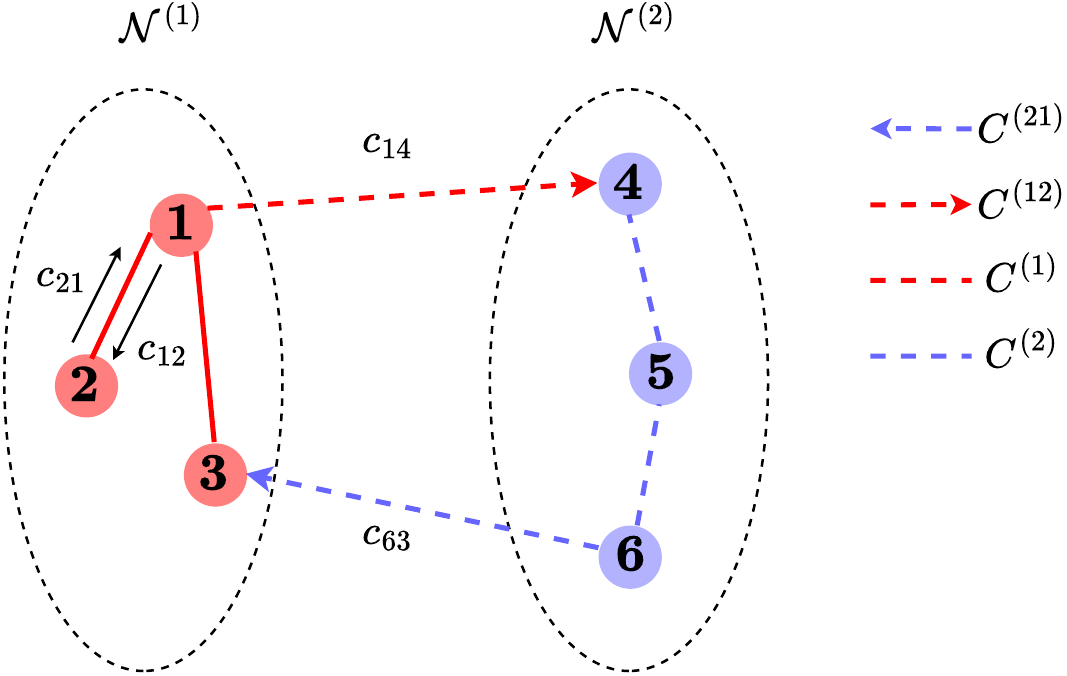}
\caption{Weak cross-team subgraph.}
\label{fig:subfig3}
\end{subfigure}
\caption{Illustration of within-team subgraphs, cross-team subgraphs, and associated combination and {\color{black} inference} matrices.}
\label{fig:main}
\end{figure*}

 Following standard graph theory notation  \cite{sayed2022inference1, godsil2013algebraic},  
we represent the interactions among all game agents as a graph $\mathcal{G} = ({\mathcal{N}, \mathcal{E}}$), where $\mathcal{E} \subseteq \mathcal{N} \times \mathcal{N}$ is the set of edges representing communication links. 
This $\mathcal{G}$ consists of the following subgraphs:
\begin{itemize}
\item \textbf{Within-team subgraphs} $\mathcal{G}^{(1)}$ and $\mathcal{G}^{(2)}$: Each team communicates internally through its respective subgraph, defined as:
\begin{equation}
\mathcal{G}^{(t)} = \{ \mathcal{N}^{(t)}, \mathcal{E}^{(t)} \}, \quad t \in {1,2}
\end{equation}
where $\mathcal{N}^{(t)}$ is the set of agents in Team $t$ and $\mathcal{E}^{(t)}$ defines the communication links within the team. 
\item \textbf{Cross-team subgraph} $\mathcal{G}^{(c)}$: The interactions between the two teams are modeled by the subgraph:
\begin{equation}
    \mathcal{G}^{(c)} = \{ \mathcal{N}^{(1)} \cup \mathcal{N}^{(2)}, \mathcal{E}^{(c)} \}
\end{equation}
where $\mathcal{E}^{(c)} \subseteq (\mathcal{N}^{(1)} \times \mathcal{N}^{(2)}) \cup (\mathcal{N}^{(2)} \times \mathcal{N}^{(1)})$ represents  inter-team communication links for sharing the adversarial information.
\end{itemize}

For within-team cooperation, agents in Team $t$ share information with teammates through $\mathcal{G}^{(t)}$, which is characterized by combination matrix $A^{(t)} \in \mathbb{R}^{K_t \times K_t}$. For $A^{(1)}$, the entry $a_{lk}^{(1)} \geq 0$ represents the weight scaling the information flowing from agent $l$ to agent $k$ within the team, ensuring proper mixing of information. $A^{(2)}$ is defined similarly for Team 2. An illustration of the within-team subgraphs and the combination matrices is shown in Figure~\ref{fig:subfig1}.

In addition to sharing cooperative strategies within their respective teams through the subgraphs $\mathcal{G}^{(1)}$ and $\mathcal{G}^{(2)}$,
agents must also infer the adversary strategy. As this process could rely on the whole graph $\mathcal{G}$, we 
introduce the following inference matrix $C \in \mathbb{R}^{K \times K}$:

\begin{equation}
    C \triangleq \begin{bmatrix}
       C^{(1)}  & C^{(12)} \\
       C^{(21)}  & C^{(2)}
    \end{bmatrix}
\end{equation}
where $C^{(12)}$ and $C^{(21)}$ account for direct cross-team communication links from Team 1 to Team 2 and from Team 2 to Team 1, while $C^{(1)}$ and $C^{(2)}$ correspond to internal links within each team used to refine cross-team information, as the discussion will reveal.



The cross-team subgraph 
$\mathcal{G}^{(c)}$
  is crucial in competitive settings, as it enables interactions between opposing teams. Without such interconnections, competition cannot take place. In this work, we examine both a special case and a general case of $\mathcal{G}^{(c)}$.
\begin{definition}[{\bf Strong cross-team subgraph}] \label{def:strong_def}
A cross-team subgraph $\mathcal{G}^{(c)}$ is said to be strong if every agent in one team $t$ has at least one incoming link from an agent in the other team $t'$, i.e., every $k \in \mathcal{N}^{(t)}$ has at least one incoming link from $\ell \in \mathcal{N}^{(t^{\prime})}$, so that $(\ell,k) \in \mathcal{E}^{(c)}$.
\end{definition}
The above structure is also referred to as a bipartite graph without isolated nodes \cite{lou2015nash}, ensuring that every agent has at least one neighboring opponent. In a strong cross-team subgraph, each agent can directly observe their adversaries’ strategies through cross-team connections. Since direct observations are more reliable than indirect estimates obtained through teammates, the need for internal information refinement diminishes. As a result, the matrices capturing within-team communication,
$C^{(1)}, C^{(2)}$, become unnecessary and are set to
 $C^{(1)} = 0$ and $C^{(2)} = 0$. We illustrate the strong cross-team subgraph and the associated inference matrix in Figure \ref{fig:subfig2}.
\begin{definition}[{\bf Weak cross-team subgraph}] \label{def:weak_def}
 A cross-team subgraph $\mathcal{G}^{(c)}$ is is said to be weak if each team $t$ has at least one incoming link from the other team $t^{\prime}$, i.e., 
$\mathcal{E}^{(c)} \cap (\mathcal{N}^{(t)} \times \mathcal{N}^{(t^{\prime})}) \neq \emptyset$. 
\end{definition}

A weak cross-team subgraph requires at least one directed edge from Team \(1\) to Team \(2\) and at least one directed edge in the other direction from Team \(2\) to Team \(1\), ensuring minimal cross-team connectivity.
 In this setting, only a subset of agents has direct access to the opponent's strategy, making matrices $C^{(12)}$ and $C^{(21)}$
insufficient for fully observing the opponent’s actions. 
To compensate, teammates must communicate internally within their teams via
$C^{(1)}$ and $C^{(2)}$  $(C^{(1)}\not =0, C^{(2)} \not =0)$
to learn about the opponent's actions.
We illustrate the weak cross-team subgraph and the associated  inference matrix in Figure \ref{fig:subfig3}.

\subsection{Algorithm Development}
In this section, we develop algorithms to address two important classes of the network game. Our method  aims to relax the reliance on the strong cross-team subgraph structure, a common assumption in previous two-network zero-sum frameworks \cite{lou2015nash, gharesifard2013distributed, huang2024no}. Additionally, we extend the applicability of our methods by generalizing the cost function so that we can also handle the situation
$ J^{(1)}\left(x,y\right) \not= - J^{(2)}\left(x,y\right)$. 
A key challenge in these settings is designing a framework that integrates learning algorithms while respecting network constraints on information flow.  Our proposed algorithms build upon the Adapt-then-Combine (ATC) diffusion learning framework \cite{sayed2014adaptation}, where each node first updates its model using local data and then shares these updates with neighboring nodes.

\textbf{Zero-sum network-game under a weak cross-team subgraph}:
For a zero-sum game satisfying
\begin{equation}
\label{eq:zero-sum}
    J^{(1)}\left(x,y\right) = - J^{(2)}\left(x,y\right)
\end{equation}
the learning process can be divided into two phases: within-team diffusion learning and cross-team diffusion learning.
For instance, let us consider the update procedure of a single agent $k \in \mathcal{N}^{(1)}$ in Team 1 to describe this process. At the start of each iteration $i$, agent $k$ uses  
its past strategy (or iterate)
$\boldsymbol{x}_{k, i-1}$
and the available estimated adversarial strategy (or iterate)
$\boldsymbol{y}_{k, i-1}$ to carry out an update based on the following ATC strategy:
\begin{align}
\boldsymbol{\phi}_{k, i} &= \boldsymbol{x}_{k, i-1}- \mu \widehat{\nabla_x J}_k(\boldsymbol{x}_{k, i-1}, \boldsymbol{y}_{k, i-1}) \quad \text{(Adaptation)}
\label{eq:within_phi}
\\
\boldsymbol{x}_{k, i} &= \sum_{\ell \in \mathcal{N}^{(1)}} a_{\ell k}^{(1)} \boldsymbol{\phi}_{ \ell, i}
  \quad  \hspace{3em} \text{(Within-team diffusion)}  \label{eq:within_x}
\end{align}
Importantly, agent $k$ also needs to update its estimate of Team 2’s information (i.e., to update $\boldsymbol{y}_{k,i-1}$ to $\boldsymbol{y}_{k,i}$). 
A straightforward approach would be to passively receive data transmitted through the cross-team subgraph ${\cal G}^{(c)}$. 
However, due to the weak cross-team topology, this information may need to traverse a multi-hop communication path before reaching agent $k$ and  become outdated, leading to inaccurate observation.
To mitigate the impact of this delay, 
 we exploit the structure of the zero-sum formulation.
When  
$\nabla_y J_k \left(x, y\right)$ is close to $\nabla_y J^{(1)} \left(x, y\right)$ for a positive
constant $c$ (see condition \eqref{eq:bdis} further ahead):
\begin{equation}
    \label{eq:alg_bdis}
    \left\| \nabla_y J_k \left(x, y\right) - \nabla_y J^{(1)} \left(x, y\right)\right\| \le c
\end{equation}
We can deduce from \eqref{eq:zero-sum} that  
\begin{equation}
    \begin{aligned}
        &\quad \left\| \nabla_y J_k \left(x, y\right) + \nabla_y J^{(2)} \left(x, y\right)\right\| \le c
    \end{aligned}
\end{equation}
The above relation suggests that the local gradient $- \nabla_y J_k \left(x, y\right)$ can approximate $\nabla_y J^{(2)} \left(x, y\right)$.
This enables agent $k$ to infer the adversarial strategy through a hybrid approach that combines direct observations with estimates obtained via an additional estimation step added to \eqref{eq:within_phi}--\eqref{eq:within_x}:
\begin{align}
\!\!\!\!\boldsymbol{\psi}_{k,i} &= \boldsymbol{y}_{k,i-1} + \mu \widehat{\nabla_y J}_k(\boldsymbol{x}_{k, i-1}, \boldsymbol{y}_{k, i-1}) \quad \hspace{0.5em}\text{(Inference)} \label{eq:psi_update}\\
\!\!\!\!\boldsymbol{y}_{k,i} &= \!\!
\sum_{\ell \in \mathcal{N}^{(2)}} \!\!\!c_{\ell k} \boldsymbol{y}_{\ell,i} + \!\!\!\!\sum_{\ell \in \mathcal{N}^{(1)}} \!\!\!c_{\ell k} \boldsymbol{\psi}_{\ell,i} \hspace{0.3em}\text{(Cross-team diffusion)}
\label{eq:cross_diffusion}
\end{align}
Note that the first term in \eqref{eq:cross_diffusion} vanishes if agent $k$ does not have an incoming link from any node $\ell \in \mathcal{N}^{(2)}$, meaning that  $\boldsymbol{y}_{k,i}$ is  entirely determined by the estimated information from its neighbors. According to weakly-connected graph theory \cite{ying2016information, salami2017social}, a network that receives information from another network without reciprocal links will, in the long run, be dominated by the sending network. However, the update step \eqref{eq:psi_update} is particularly useful for accelerating convergence in the transient stage when the adversarial information has not yet fully propagated in Team 1, and the gradient calculation in \eqref{eq:psi_update} can be removed after a sufficient number of iterations to reduce the computational complexity. Inspired by this combination strategy, we name the proposed algorithm  
\textbf{A}dapt-then-\textbf{C}ombine and {\bf I}nfer-then-\textbf{C}ombine (\textbf{ATC-ITC})
algorithm.


To summarize the above steps concisely, we introduce the following network notation for Team 1:
\begin{align}
\boldsymbol{\mathcal{X}}^1_{i} 
&\triangleq \mbox{col} \{\boldsymbol{x}_{1,i}, \ldots, \boldsymbol{x}_{K_1,i} \} \in  \mathbb{R}^{K_1 M_1 \times 1} \label{eq:X1i}
\\
\boldsymbol{\mathcal{Y}}^1_{i} 
&\triangleq \mbox{col} \{\boldsymbol{y}_{1,i}, \ldots, \boldsymbol{y}_{K_1,i} \} \in  \mathbb{R}^{K_1 M_2 \times 1}\label{eq:Y1i}
\end{align}
and $\boldsymbol{\mathcal{Y}}^2_{i},\boldsymbol{\mathcal{X}}^2_{i}$
are defined similarly for Team 2. 
Here, we use a superscript to distinguish the network quantities associated with Teams 1 and 2.
Therefore,
$\boldsymbol{\mathcal{Y}}^1_{i}$
and 
$\boldsymbol{\mathcal{X}}^2_{i}$
represent inferred adversarial information by Teams 1 and 2, respectively.
In addition, the networked stochastic gradients are defined as follows:
\begin{align}
    \boldsymbol{\mathcal{G}}^1_{x, i} &\triangleq \mbox{col} \Big\{\widehat{\nabla_x J}_k(\boldsymbol{x}_{k, i-1}, \boldsymbol{y}_{k, i-1}) \vphantom{\sum} \Big\}{}_{\vphantom{A^A} k=1}^{K_1}\label{eq:G1xi} \\
    \boldsymbol{\mathcal{G}}^1_{y, i} &\triangleq \mbox{col} \Big\{- \widehat{\nabla_y J}_k(\boldsymbol{x}_{k, i-1}, \boldsymbol{y}_{k, i-1})\Big\}{}_{\vphantom{A^A} k=1}^{K_1}\label{eq:G1yi}
\end{align} 
where $\widehat{\nabla_x J}_k (\cdot, \cdot)$ and $\widehat{\nabla_y J}_k (\cdot, \cdot)$ are stochastic gradients associated with $x$ and $y$ at agent $k$;
$\boldsymbol{\mathcal{G}}^2_{x, i}$ and $\boldsymbol{\mathcal{G}}^2_{y, i}$ are defined similarly. 
Finally, the augmented combination matrices are defined as follows:
\begin{align}
    \mathcal{C}^{(1)} &= C^{(1)} \otimes I_{M_2}, \quad 
    &\mathcal{C}^{(2)} &= C^{(2)} \otimes I_{M_1} \notag \\
    \mathcal{C}^{(21)} &= C^{(21)} \otimes I_{M_2}, \quad 
    &\mathcal{C}^{(12)} &= C^{(12)} \otimes I_{M_1} \notag \\    \mathcal{A}^{(1)} &= A^{(1)} \otimes I_{M_1}, \quad 
    &\mathcal{A}^{(2)} &= A^{(2)} \otimes I_{M_2}
\end{align}
With the notation so defined, we summarize the preceding procedures in \textbf{Algorithm \ref{alg:network_learning_zerosum}}.

\begin{algorithm}
\caption{
\textbf{A}dapt-then-\textbf{C}ombine and \textbf{I}nfer-then-\textbf{C}ombine (\textbf{ATC-ITC}) algorithm}  \label{alg:network_learning_zerosum}
\begin{algorithmic}[1]
\STATE \textbf{Initialize:} strategies $\boldsymbol{\mathcal{X}}^1_{-1},  \boldsymbol{\mathcal{Y}}^1_{-1},  \boldsymbol{\mathcal{X}}^2_{-1}, \boldsymbol{\mathcal{Y}}^2_{-1} \leftarrow 0$, step size $\mu$

\FOR{$i = 0,\cdots$}
\STATE \underline{\texttt{Within-team adapt-then-combine} }\\$\boldsymbol{\mathcal{X}}^1_{i} = {\mathcal{A}^{(1)}}^{\top} \left(\boldsymbol{\mathcal{X}}^1_{i-1} - \mu \boldsymbol{\mathcal{G}}^1_{x, i} \right)$\\ $\boldsymbol{\mathcal{Y}}^2_{i} = {\mathcal{A}^{(2)}}^{\top} (\boldsymbol{\mathcal{Y}}^2_{i-1} - \mu \boldsymbol{\mathcal{G}}^2_{y, i} )$

\STATE \underline{\texttt{Cross-team infer-then-combine}}

\STATE $\boldsymbol{\mathcal{\psi}}^1_{i} = \boldsymbol{\mathcal{Y}}^1_{i-1} - \mu \boldsymbol{\mathcal{G}}^1_{y, i}$
\vspace{0.2em}
\STATE $\boldsymbol{\mathcal{\psi}}^2_{i} = \boldsymbol{\mathcal{X}}^2_{i-1} - \mu \boldsymbol{\mathcal{G}}^2_{x, i}$

\STATE $\boldsymbol{\mathcal{Y}}^1_{i} = {\mathcal{C}^{(21)}}^{\top} \boldsymbol{\mathcal{Y}}^2_{i} + {\mathcal{C}^{(1)}}^{\top} \boldsymbol{\mathcal{\psi}}^1_{i}$

\STATE$\boldsymbol{\mathcal{X}}^2_{i} = {\mathcal{C}^{(12)}}^{\top} \boldsymbol{\mathcal{X}}_{i}^1 + {\mathcal{C}^{(2)}}^{\top} \boldsymbol{\mathcal{\psi}}^2_{i}$
\ENDFOR
\end{algorithmic}
\end{algorithm}

\textbf{General network-game under a strong cross-team subgraph}:
For a general game $J^{(1)} \left(x, y\right) \not = J^{(2)} \left(x, y\right)$,
the estimation  procedure in 
recursion
\eqref{eq:psi_update} may not yield accurate results since $\left\| \nabla_y J_k \left(x, y\right) + \nabla_y J^{(2)} \left(x, y\right)\right\|$
can be large.
However,  competition is still possible when a strong cross-team subgraph is present. In this scenario, agents can observe the adversary strategy by directly combining neighboring opponents' information. Since now $C^{(1)} = C^{(2)} = 0$, agent $k$ fuses Team 2's information as follows:
\begin{align}
\boldsymbol{y}_{k, i} = \sum_{\ell \in \mathcal{N}^{(2)}} c_{\ell k}  \boldsymbol{y}_{\ell, i}
\label{cross_tema_learn}
\end{align} 
The above step is the only change compared to \textbf{Algorithm 1}.
Using the notation from 
\eqref{eq:X1i}-\eqref{eq:G1xi} for Team 1 and the same notation for Team 2,
we summarize the new procedure 
in \textbf{Algorithm 2}
and we name it 
\textbf{A}dapt-then-\textbf{C}ombine and \textbf{C}ombine (\textbf{ATC-C}).
In this setup, each agent accesses the latest adversarial information and computes the stochastic gradient only once per iteration, reducing computational costs and improving efficiency.


\begin{algorithm}
\caption{\textbf{A}dapt-then-\textbf{C}ombine and \textbf{C}ombine (\textbf{ATC-C})} \label{alg:network_learning_general_strong}
\begin{algorithmic}[1]
\STATE \textbf{Initialize:} $i$=0, actions $\boldsymbol{\mathcal{X}}_{-1}, \boldsymbol{\mathcal{Y}}_{-1} \leftarrow 0$, step size $\mu$

\WHILE{not done}
    \STATE \underline{\texttt{Within-team adapt-then-combine}} \\$\boldsymbol{\mathcal{X}}^1_{i} = {\mathcal{A}^{(1)}}^{\top} (\boldsymbol{\mathcal{X}}^1_{i-1} - \mu \boldsymbol{\mathcal{G}}^1_{x, i})$\\ $\boldsymbol{\mathcal{Y}}^2_{i} = {\mathcal{A}^{(2)}}^{\top} (\boldsymbol{\mathcal{Y}}^2_{i-1} - \mu \boldsymbol{\mathcal{G}}^2_{y, i})$
    
    \STATE \underline{\texttt{Cross-team combine}}\\ $\boldsymbol{\mathcal{Y}}^1_{i} = {\mathcal{C}^{(21)}}^{\top} \boldsymbol{\mathcal{Y}}^2_{i}$\\
    $\boldsymbol{\mathcal{X}}^2_{i} = {\mathcal{C}^{(12)}}^{\top} \boldsymbol{\mathcal{X}}^1_{i}$
    \STATE $i \gets i + 1$
\ENDWHILE
\end{algorithmic}
\end{algorithm}

\begin{remark*}
     Both {\bf ATC-ITC} and {\bf ATC-C} can solve zero-sum network-games under strong cross-team subgraphs since every strong cross-team subgraph is also a weak one, and general network games encompass zero-sum network games. General network-games under weak cross-team subgraphs can be addressed by {\bf ATC-ITC}, assuming partially observable stochastic gradients (see Assumption~\ref{ass:add_grad} and Corollary~\ref{corollary:ATC-ETC}). 
\end{remark*}


\section{Convergence Analysis}
In this section, we conduct a stability performance analysis for the proposed algorithms. To support the theoretical analysis, we introduce several mild assumptions that are commonly used in the literature on distributed learning. Our objective is to provide convergence guarantees, ensuring that the algorithms converge to a Nash equilibrium at which both networks achieve an equilibrium state.
The notion of a Nash equilibrium is defined as follows.
 \begin{definition}[\textbf{Nash equilibrium}]
  A point $(x^\star, y^\star)$ is a Nash equilibrium if the following condition holds
\begin{equation}
    J^{(1)}(x^\star, y^\star) \leq J^{(1)}(x, y^\star), \; J^{(2)}(x^\star, y^\star) \leq J^{(2)}(x^\star, y)
\end{equation}  
\end{definition}
\subsection{Assumptions}

We introduce assumptions regarding the structural properties of the combination matrices $A^{(1)}$, $A^{(2)}$ and the  inference matrix $C$, and conditions related to the cost function.

\begin{assumption}[\textbf{Within-team combination matrix}] 
\label{ass:within_conn}
    For $t \in \{1,2\}$, the combination matrix $A^{(t)} \in \mathbb{R}^{K_t \times K_t}$ is primitive and left-stochastic, i.e. $\mathbbm{1}^{\top} A^{(t)} = \mathbbm{1}^{\top}$. \hfill\qed
\end{assumption}
The above assumption is standard in distributed optimization to guarantee consensus among agents\cite{sayed2022inference1}.
According to \cite{sayed2014adaptation}, 
$A^{(1)}$ and $A^{(2)}$ have Perron eigenvectors $p^{(1)}$ and $p^{(2)}$, with positive entries that satisfy $A^{(1)}p^{(1)} = p^{(1)}$,  $A^{(2)}p^{(2)} = p^{(2)}$, and $\mathbbm{1}_{K_1}^{\top} p^{(1)}= \mathbbm{1}_{K_2}^{\top} p^{(2)} = 1$. We denote 
\begin{equation}
    p \triangleq \left[\begin{array}{c}p^{(1)}  \\p^{(2)} \end{array}\right]
\end{equation}
where the $\{p_k\}_{k=1}^{K_1}$ and $\{p_k\}_{k=K_1 +1}^{K}$ correspond to the coefficients used in \eqref{eq:global_game_ya} and \eqref{eq:global_game_yb}, respectively.


For the inference matrix $C$, we consider the following assumption.

\begin{assumption}[\textbf{Cross-team  inference matrix}] \label{ass:cross-conn}
The inference matrix $C$ is partitioned into blocks  $
[C^{(1)}, C^{(12)}; C^{(21)}, C^{(2)}]$
and satisfies either one of the following conditions depending on whether we are dealing with a weak (case (b)) or strong (case (a)) cross-team subgraph:  
\begin{enumerate}[label=(\alph*), ref=\theassumption\alph*]  
    \item \label{ass:cross-conn_strong} For every $t \in \{1,2\}$, the diagonal blocks satisfy $C^{(t)} = 0$. Moreover, for any $t, t^{\prime} \in \{1,2\}, t \neq t^{\prime}$, the off-diagonal block
    $C^{(t^{\prime} t)} \in \mathbb{R}^{K_{t^{\prime}} \times K_t}$ is left-stochastic, i.e.,  $
    \mathbbm{1}^{\top} C^{(t^{\prime} t)} = \mathbbm{1}^{\top}.$
    
    \item \label{ass:cross-conn_weak} $C$ is left-stochastic, and for every $t \in \{1,2\}$, the diagonal block matrices $C^{(t)}$ are irreducible. Moreover, for $t, t^{\prime} \in \{1,2\}, t \neq t^{\prime}$, there exists at least one $k \in \mathcal{N}^{(t)}$ and $\ell \in \mathcal{N}^{(t^{\prime})}$ such that $c_{\ell k} > 0$.
\end{enumerate}
\hfill\qed
\end{assumption}

Assumption \ref{ass:cross-conn_strong} corresponds to the strong cross-team subgraph case discussed earlier and is widely adopted in two-network zero-sum literature \cite{lou2015nash, talebi2019distributed}. This assumption ensures that every agent within a team receives direct incoming links from at least some opponent agents. While somewhat restrictive, it plays a crucial role in facilitating cross-team adversarial information observation and proving the convergence of \textbf{Algorithm 2}.
Furthermore, we introduce Assumption \ref{ass:cross-conn_weak}, requiring only that at least one node per team directly accesses information from the opponent agent. Meanwhile, the block matrices
$C^{(1)}$
and $C^{(2)}$
must be 
irreducible, which implies that the underlying graphs are connected \cite{meyer2023matrix};  otherwise, some nodes would remain isolated and unable to participate in the network game.

For convenience of analysis,
we define the following concatenated {\color{black} gradient} vectors:
\begin{align} 
    F(z) &\triangleq \left[\begin{array}{c}
    \nabla_x J^{(1)} \left(x, y\right) \\
    \nabla_y J^{(2)} \left(x, y\right) 
\end{array}\right],
\text{ where }     z \triangleq \left[\begin{array}{c} x \\ y \end{array}\right]\label{eq:Fz_defn}
\end{align}
This gradient mapping $F(z)$ captures the gradients of each team's global cost function with respect to its own strategy, and is commonly used in the context of game theory \cite{zimmermann2021solving, pang2023distributed}.
\begin{assumption}[\textbf{Strong monotonicity}]
\label{ass:strong mono}
    The global gradient operator \( F(z) \) is \(\nu\)-strongly monotone, i.e., for every $z_1, z_2 \in \mathbb{R}^{M}$, we have 
    \begin{equation}
        \left(F\left(z_1\right) - F\left(z_2\right)\right)^\top \left(z_1 - z_2\right) \geq \nu \|z_1 - z_2\|^2
        \label{strongmonotone}
    \end{equation}
for a positive constant $\nu$.
\hfill\qed%
\end{assumption}
Assumption \ref{ass:strong mono} is commonly used in the context of the variational inequality \cite{komlosi1999stampacchia, ryu2016primer}.  Condition \eqref{strongmonotone} generalizes the setup of
strongly-convex strongly-concave minmax problems.

\begin{assumption}[\textbf{Lipschitz gradients}]
\label{ass:lip}
    For each $t \in \{1,2\}$ and $k \in \mathcal{N}^{(t)}$, we assume the gradients associated with each local risk function $J_k (\cdot, \cdot)$ are $L_f$-Lipschitz, i.e, for any $x_1, x_2 \in \mathbb{R}^{M_1}, y_1, y_2 \in \mathbb{R}^{M_2}$:
    \begin{equation}
    \label{eq:lip}
        \begin{aligned}
            &\| \nabla_w J_k (x_1, y_1) - \nabla_w J_k (x_2, y_2)\|\\
            & \leq L_f \left(\|x_1 - x_2\| + \|y_1 - y_2\|\right),
        \end{aligned}
    \end{equation}
    where  $w = x \text{ or } y$.\hfill\qed
\end{assumption}

\begin{assumption}[\textbf{Bounded gradient disagreement}]
\label{ass:bdis}
    For each $t \in \{1,2\}$ \textcolor{black}{and $k \in \mathcal{N}^{(t)}$,} the gradient disagreement between the local risk functions and the global risk function is bounded, i.e., for any $x_1 \in \mathbb{R}^{M_1}, y_1 \in \mathbb{R}^{M_2}$:
    \begin{equation}
    \label{eq:bdis}
        \| \nabla_w J_k (x_1, y_1) - \nabla_w J^{(t)} (x_1, y_1)\| \leq G
    \end{equation}
    where  $w = x \text{ or } y$.\hfill\qed
\end{assumption}

\begin{assumption}[\textbf{Gradient noise process}]
\label{ass:gradnoise}
    We define the filtration generated by the random processes as $\boldsymbol{\mathcal{F}}_i = \{(\boldsymbol{x}_{k, j}, \boldsymbol{y}_{k, j}) \mid k = 1, \dots, K, j = -1,\dots, i \}.$ For each $t \in \{1,2\}$ and $k \in \mathcal{N}^{(t)}$, we assume the stochastic gradients are unbiased with bounded variance conditioned on $\boldsymbol{\mathcal{F}}_i$, i.e., for any $\boldsymbol{x}, \boldsymbol{y} \in \boldsymbol{\mathcal{F}}_i$, 
    \textcolor{black}{
    \begin{equation}
    \label{eq:unbiased}
        \mathbb{E} \{ \widehat{\nabla_w J}_k (\boldsymbol{x}, \boldsymbol{y}) \mid \boldsymbol{\mathcal{F}}_i\} = \nabla_w J_k (\boldsymbol{x}, \boldsymbol{y})
    \end{equation}
    \begin{equation}
    \label{eq:b_var}
        \mathbb{E} \{ \| \widehat{\nabla_w J}_k (\boldsymbol{x}, \boldsymbol{y}) - \nabla_w J_k (\boldsymbol{x}, \boldsymbol{y}) \|^2 \mid \boldsymbol{\mathcal{F}}_i \} \leq \sigma^2
    \end{equation}
    }
    where  $w = x \text{ or } y$.\hfill\qed
\end{assumption}
Note that similar gradient assumptions are also adopted in distributed stochastic optimization works such as \cite{vlaski2021distributed, cai2024diffusion}.

\subsection{Main Results}

Under these assumptions, the proposed algorithms are proven to converge to a Nash equilibrium. We first verify the existence and uniqueness of the Nash equilibrium by leveraging results from \cite{facchinei2003finite}.
\begin{lemma}({\bf \cite[Proposition 1.4.2, Theorem 2.3.3]{facchinei2003finite}})
Under Assumption \ref{ass:strong mono}, there exists a unique Nash equilibrium $z^\star = [x^\star;y^\star]$ for problem \eqref{eq:global_game_ya}-\eqref{eq:global_game_yb}.
\end{lemma}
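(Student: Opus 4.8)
The plan is to recast the search for a Nash equilibrium as the problem of finding a zero of the game mapping $F(z)$ defined in \eqref{eq:Fz_defn}, and then to invoke the strong monotonicity of Assumption~\ref{ass:strong mono} to establish both existence and uniqueness. Since the strategies $x$ and $y$ range over the entire spaces $\mathbb{R}^{M_1}$ and $\mathbb{R}^{M_2}$ without constraints, the associated variational inequality over $\mathbb{R}^{M}$ reduces to the stationarity system $F(z^\star) = 0$. First I would establish that this system is equivalent to the Nash equilibrium conditions, which is precisely the role played by \cite[Proposition 1.4.2]{facchinei2003finite}.

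For the forward (necessity) direction, if $(x^\star, y^\star)$ is a Nash equilibrium, then $x^\star$ globally minimizes $J^{(1)}(\cdot, y^\star)$ and $y^\star$ globally minimizes $J^{(2)}(x^\star, \cdot)$; first-order optimality over an open domain gives $\nabla_x J^{(1)}(x^\star, y^\star) = 0$ and $\nabla_y J^{(2)}(x^\star, y^\star) = 0$, i.e. $F(z^\star) = 0$. The reverse (sufficiency) direction is more delicate and is where I would concentrate effort, since a stationary point need not be a minimizer unless each team's objective is convex in its own variable. I would extract this convexity from Assumption~\ref{ass:strong mono} by specializing \eqref{strongmonotone} to pairs $z_1 = (x_1, y)$ and $z_2 = (x_2, y)$ that differ only in the first block; the second block of $z_1 - z_2$ then vanishes, leaving $(\nabla_x J^{(1)}(x_1, y) - \nabla_x J^{(1)}(x_2, y))^{\top}(x_1 - x_2) \ge \nu \|x_1 - x_2\|^2$. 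This shows $J^{(1)}(\cdot, y)$ is $\nu$-strongly convex in $x$ for every fixed $y$, and symmetrically $J^{(2)}(x, \cdot)$ is $\nu$-strongly convex in $y$. Consequently the stationarity condition $F(z^\star) = 0$ upgrades to the global team-wise minimization required by the Nash definition.

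With the equivalence in hand, uniqueness is immediate: if $z_1^\star$ and $z_2^\star$ both solve $F(z^\star) = 0$, then substituting into \eqref{strongmonotone} yields $0 = (F(z_1^\star) - F(z_2^\star))^{\top}(z_1^\star - z_2^\star) \ge \nu \|z_1^\star - z_2^\star\|^2$, forcing $z_1^\star = z_2^\star$. For existence I would appeal to \cite[Theorem 2.3.3]{facchinei2003finite}: strong monotonicity makes $F$ coercive, and a continuous coercive mapping on $\mathbb{R}^{M}$ admits a zero via a Minty--Browder or topological-degree argument. The continuity of $F$ follows from the Lipschitz gradient bound in Assumption~\ref{ass:lip}.

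The main obstacle I anticipate is the sufficiency step, namely confirming that the block-restricted strong monotonicity genuinely certifies each stationary point as a \emph{global} team-wise minimizer rather than a mere critical point. Once the per-variable strong convexity is isolated this becomes routine, but it is the conceptual crux that links the monotone-operator machinery of \cite{facchinei2003finite} to the game-theoretic equilibrium notion used throughout the paper.
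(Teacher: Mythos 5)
Your proposal is correct and follows essentially the same route as the paper, which offers no argument beyond citing \cite[Proposition 1.4.2, Theorem 2.3.3]{facchinei2003finite}: your reduction of the Nash conditions to $F(z^\star)=0$ is exactly the role of Proposition 1.4.2, and your coercivity/uniqueness argument is exactly Theorem 2.3.3. Your block-restriction step showing that Assumption~\ref{ass:strong mono} yields per-player strong convexity (so that stationarity upgrades to global team-wise minimization) is a correct and worthwhile filling-in of a hypothesis that the citation implicitly requires but the paper never verifies.
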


Before presenting the main results, we introduce the following notation: 
\begin{align}
    \boldsymbol{x}_{c,i} &\triangleq  \sum_{k \in \mathcal{N}^{(1)}} p_k \boldsymbol{x}_{k,i}, \quad &\boldsymbol{y}_{c,i} &\triangleq  \sum_{k \in \mathcal{N}^{(2)}} p_k \boldsymbol{y}_{k,i}\label{eq:centroid_x}  \\ \boldsymbol{\mathcal{X}}_{c,i} &\triangleq \mathbbm{1}_{K_1} \otimes \boldsymbol{x}_{c,i}, \quad &\boldsymbol{\mathcal{Y}}_{c,i} &\triangleq \mathbbm{1}_{K_2} \otimes \boldsymbol{y}_{c,i}\\
    \boldsymbol{\mathcal{X}}^{\prime}_{c,i} &\triangleq \mathbbm{1}_{K_2} \otimes \boldsymbol{x}_{c,i}, \quad &\boldsymbol{\mathcal{Y}}^{\prime}_{c,i} &\triangleq \mathbbm{1}_{K_1} \otimes \boldsymbol{y}_{c,i} \label{eq:centroid_network_x} 
\end{align}
By taking the weighted average of the strategies within each team, we define the variables $\boldsymbol{x}_{c,i}$ and $\boldsymbol{y}_{c,i}$ as the centroid strategies for Teams 1 and 2, respectively. The variable $\boldsymbol{\cal X}_{c,i}$ is a repeated copy of $\boldsymbol{x}_{c,i}$ and has the same dimension as Team 1’s networked strategy $\boldsymbol{\cal X}^1_{i}$, while $\boldsymbol{\cal X}^\prime_{c,i}$ matches the size  of Team 2's inferred strategy $\boldsymbol{\cal X}^2_{i}$. The quantities $\boldsymbol{\mathcal{Y}}_{c,i}$, $\boldsymbol{\cal Y}^\prime_{c,i}$ are defined in a similar way. These variables are introduced solely to support our analysis and do not appear in the proposed algorithms. For both algorithms, we can establish the following recursion for 
$\boldsymbol{\mathcal{X}}_{c,i}$:
\begin{align}
\label{eq:centroid_X_evolution}
\boldsymbol{\mathcal{X}}_{c,i} &= \mathbbm{1}_{K_1} \otimes \boldsymbol{x}_{c,i} = \left(\mathbbm{1}_{K_1} {p^{(1)}}^{\top} \otimes I_{M_1}\right) \boldsymbol{\mathcal{X}}^1_{i} \notag \\
    &= \left(\mathbbm{1}_{K_1} {p^{(1)}}^{\top} \otimes I_{M_1}\right) {\mathcal{A}^{(1)}}^{\top} \left(\boldsymbol{\mathcal{X}}^1_{i-1} - \mu \boldsymbol{\mathcal{G}}^1_{x, i} \right) \notag \\
    &= \left(\mathbbm{1}_{K_1} {p^{(1)}}^{\top} \otimes I_{M_1}\right) \left(\boldsymbol{\mathcal{X}}^1_{i-1} - \mu \boldsymbol{\mathcal{G}}^1_{x, i} \right) \notag \\
    &= \boldsymbol{\mathcal{X}}_{c,i-1} - \mu \left(\mathbbm{1}_{K_1} {p^{(1)}}^{\top} \otimes I_{M_1}\right)\boldsymbol{\mathcal{G}}^1_{x, i}
\end{align}
The above relation indicates that the network centroid $\boldsymbol{x}_{c,i}$
is updated through the following recursion
\begin{equation}
\label{eq:centroid_x_evolution}
    \boldsymbol{x}_{c,i} = \boldsymbol{x}_{c,i-1} - \mu \sum_{k \in \mathcal{N}^{(1)}} p_k \widehat{\nabla_x J}_k (\boldsymbol{x}_{k,i-1}, \boldsymbol{y}_{k,i-1})
\end{equation}
The recursion in \eqref{eq:centroid_x_evolution} is used for our convergence analysis of \textbf{ATC-ITC} and \textbf{ATC-C} since the local models of each team will converge to a neighborhood around their corresponding network centroid. In other words, the network centroid $\boldsymbol{x}_{c,i}$ acts as a ``proxy" for studying the network behavior. 


\begin{lemma}
[{\bf Within-team and cross-team consensus}]
\label{lemma:zerosum_within_cross}
Let
Assumptions  \ref{ass:within_conn}, \ref{ass:cross-conn_weak},  \ref{ass:lip}, \ref{ass:bdis}, \ref{ass:gradnoise}
hold for \textbf{ATC-ITC} under zero-sum objectives \eqref{eq:zero-sum},
and
Assumptions \ref{ass:within_conn}, \ref{ass:cross-conn_strong}, \ref{ass:lip}, \ref{ass:bdis}, \ref{ass:gradnoise} hold for 
\textbf{ATC-C}.
The iterates 
$\boldsymbol{\mathcal{X}}^1_{i}$,$\boldsymbol{\mathcal{Y}}^2_{i}$, and the inferred adversary iterates 
$\boldsymbol{\mathcal{Y}}^1_{i}$, $\boldsymbol{\mathcal{X}}^2_{i}$ of both teams, 
cluster within $\mathcal{O}(\mu^2)$ around the corresponding
$\boldsymbol{x}_{c,i}$
and $\boldsymbol{y}_{c,i}$, namely,
 \begin{equation}
        \begin{aligned}
            &\mathbb{E} \{\|\boldsymbol{\mathcal{X}}^1_{i} - \boldsymbol{\mathcal{X}}_{c,i}\|^2 + \|\boldsymbol{\mathcal{Y}}^2_{i} - \boldsymbol{\mathcal{Y}}_{c,i}\|^2 \\
            &\quad + \|\boldsymbol{\mathcal{X}}^2_{i} - \boldsymbol{\mathcal{X}}^\prime_{c,i}\|^2 + \|\boldsymbol{\mathcal{Y}}^1_{i} - \boldsymbol{\mathcal{Y}}^\prime_{c,i}\|^2\} \leq O\left(\mu^2\right) 
        \end{aligned}
    \end{equation}
    when the step size $\mu$ is sufficiently small and after sufficient iterations (i.e., $i \geq i_\alpha$ for \textbf{ATC-ITC}, $i \geq i_\beta$ for \textbf{ATC-C}), where
    \begin{align}
        i_\alpha &=  \frac{\log \left(O\left(\mu^2\right)\right)}{\log \left(\alpha\right)} \\
        i_\beta &=  \frac{\log \left(O\left(\mu^2\right)\right)}{\log \left(\beta\right)}
    \end{align}
      and $\alpha < 1$ and $\beta < 1$ are constants depending on $A^{(1)}, A^{(2)}$, and $C$.
     \hfill\qed 
\end{lemma}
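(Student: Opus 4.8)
The plan is to treat both algorithms through a single analysis of four coupled mean-square consensus errors, splitting only at the cross-team step. I would introduce the centroid projection $\mathcal{P}^{(1)} \triangleq \mathbbm{1}_{K_1}{p^{(1)}}^{\top}\otimes I_{M_1}$ (and $\mathcal{P}^{(2)}$ analogously), so that $\boldsymbol{\mathcal{X}}_{c,i}=\mathcal{P}^{(1)}\boldsymbol{\mathcal{X}}^1_i$ and the within-team error is $\check{\boldsymbol{\mathcal{X}}}^1_i\triangleq \boldsymbol{\mathcal{X}}^1_i-\boldsymbol{\mathcal{X}}_{c,i}=(I-\mathcal{P}^{(1)})\boldsymbol{\mathcal{X}}^1_i$, with $\check{\boldsymbol{\mathcal{Y}}}^2_i$ defined the same way and the inferred-adversary errors $\check{\boldsymbol{\mathcal{Y}}}^1_i\triangleq \boldsymbol{\mathcal{Y}}^1_i-\boldsymbol{\mathcal{Y}}^{\prime}_{c,i}$, $\check{\boldsymbol{\mathcal{X}}}^2_i\triangleq \boldsymbol{\mathcal{X}}^2_i-\boldsymbol{\mathcal{X}}^{\prime}_{c,i}$. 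The goal is to derive a contractive recursion for each, then assemble them.

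I would handle the within-team errors first, since they are identical for both algorithms. Subtracting the centroid recursion \eqref{eq:centroid_X_evolution} from the ATC update gives $\check{\boldsymbol{\mathcal{X}}}^1_i=\mathcal{B}^{(1)}\check{\boldsymbol{\mathcal{X}}}^1_{i-1}-\mu(I-\mathcal{P}^{(1)}){\mathcal{A}^{(1)}}^{\top}\boldsymbol{\mathcal{G}}^1_{x,i}$, where $\mathcal{B}^{(1)}=(I-\mathcal{P}^{(1)}){\mathcal{A}^{(1)}}^{\top}$ fixes the centroid direction and, by the primitivity in Assumption~\ref{ass:within_conn}, acts on the complement with spectral radius $\lambda_2(A^{(1)})<1$. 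The decisive structural point is that $(I-\mathcal{P}^{(1)})$ annihilates the common $p$-weighted gradient: splitting the surviving blocks as $[\nabla_x J_k-\nabla_x J^{(1)}]+[\nabla_x J^{(1)}(\cdot_k)-\nabla_x J^{(1)}(\cdot_c)]+[\nabla_x J^{(1)}(\cdot_c)-\sum_\ell p_\ell\nabla_x J_\ell]$, the first term is bounded by $G$ (Assumption~\ref{ass:bdis}) and the remaining two by $L_f$ times consensus spreads (Assumption~\ref{ass:lip}), so the distance to the unknown equilibrium cancels entirely. Conditioning on $\boldsymbol{\mathcal{F}}_{i-1}$ to kill the noise cross terms and bound the variance by $\sigma^2$ (Assumption~\ref{ass:gradnoise}), applying Jensen's inequality to the left-stochastic mixing, and working in a weighted norm in which $\|\mathcal{B}^{(1)}\|\le\lambda_2(A^{(1)})+\epsilon$ (needed because $\mathcal{B}^{(1)}$ is non-normal), I get $\mathbb{E}\|\check{\boldsymbol{\mathcal{X}}}^1_i\|^2\le(\lambda_2(A^{(1)})+\epsilon)^2\,\mathbb{E}\|\check{\boldsymbol{\mathcal{X}}}^1_{i-1}\|^2+O(\mu^2)+O(\mu^2)\,\mathbb{E}\|\check{\boldsymbol{\mathcal{Y}}}^1_{i-1}\|^2$, the last term arising because Team~1's gradients are evaluated at its inferred adversary iterate; symmetrically for $\check{\boldsymbol{\mathcal{Y}}}^2_i$.

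For the cross-team errors I would split. Under \textbf{ATC-C} (Assumption~\ref{ass:cross-conn_strong}), left-stochasticity of $C^{(21)}$ gives ${\mathcal{C}^{(21)}}^{\top}(\mathbbm{1}_{K_2}\otimes\boldsymbol{y}_{c,i})=\boldsymbol{\mathcal{Y}}^{\prime}_{c,i}$, so $\check{\boldsymbol{\mathcal{Y}}}^1_i={\mathcal{C}^{(21)}}^{\top}\check{\boldsymbol{\mathcal{Y}}}^2_i$ is $O(\mu^2)$ immediately from Team~2's consensus. Under \textbf{ATC-ITC} (Assumption~\ref{ass:cross-conn_weak}) the inference step is itself recursive; substituting $\boldsymbol{\psi}^1_i=\boldsymbol{\mathcal{Y}}^1_{i-1}-\mu\boldsymbol{\mathcal{G}}^1_{y,i}$ and using the left-stochasticity identity ${C^{(21)}}^{\top}\mathbbm{1}_{K_2}=(I-{C^{(1)}}^{\top})\mathbbm{1}_{K_1}$, I would reduce it to
\[
\check{\boldsymbol{\mathcal{Y}}}^1_i={\mathcal{C}^{(1)}}^{\top}\check{\boldsymbol{\mathcal{Y}}}^1_{i-1}+{\mathcal{C}^{(21)}}^{\top}\check{\boldsymbol{\mathcal{Y}}}^2_i-{\mathcal{C}^{(1)}}^{\top}\big(\boldsymbol{\mathcal{Y}}^{\prime}_{c,i}-\boldsymbol{\mathcal{Y}}^{\prime}_{c,i-1}\big)-\mu{\mathcal{C}^{(1)}}^{\top}\boldsymbol{\mathcal{G}}^1_{y,i}.
\]
The homogeneous operator is ${\mathcal{C}^{(1)}}^{\top}$, whose spectral radius is strictly below one: by the weakly-connected network theory cited in the paper, a left-stochastic $C$ whose irreducible block $C^{(1)}$ loses column mass to at least one cross-link (Assumption~\ref{ass:cross-conn_weak}) is substochastic and irreducible, so $\rho(C^{(1)})<1$ by Perron--Frobenius. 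The three driving terms are each $O(\mu)$ in norm — the Team~2 consensus error, the one-step centroid increment $\boldsymbol{y}_{c,i}-\boldsymbol{y}_{c,i-1}$ (which is $O(\mu)$ by \eqref{eq:centroid_x_evolution}), and the explicitly $\mu$-scaled gradient — hence $O(\mu^2)$ after squaring, yielding $\mathbb{E}\|\check{\boldsymbol{\mathcal{Y}}}^1_i\|^2\le(\rho(C^{(1)})+\epsilon)^2\mathbb{E}\|\check{\boldsymbol{\mathcal{Y}}}^1_{i-1}\|^2+O(\mu^2)$; $\check{\boldsymbol{\mathcal{X}}}^2_i$ is symmetric.

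Finally I would collect the four scalar recursions into $\boldsymbol{v}_i\preceq \Gamma\,\boldsymbol{v}_{i-1}+O(\mu^2)\,\mathbbm{1}$, where $\Gamma\ge 0$ has diagonal entries near $\lambda_2(A^{(1)})^2,\lambda_2(A^{(2)})^2,\rho(C^{(1)})^2,\rho(C^{(2)})^2$ (all $<1$) and off-diagonal couplings of order $O(\mu^2)$. Since $\Gamma|_{\mu=0}$ is diagonal with sub-unit entries, a continuity/Gershgorin argument shows $\rho(\Gamma)<1$ for $\mu$ small, and iterating gives $\boldsymbol{v}_i\preceq \Gamma^i\boldsymbol{v}_0+(I-\Gamma)^{-1}O(\mu^2)\mathbbm{1}$; the steady-state term is $O(\mu^2)$ while the transient $\Gamma^i\boldsymbol{v}_0$ decays at rate $\alpha=\rho(\Gamma)$ (resp.\ $\beta$ for \textbf{ATC-C}, where only three errors couple), and requiring it to fall below $O(\mu^2)$ produces exactly the thresholds $i_\alpha=\log(O(\mu^2))/\log\alpha$ and $i_\beta=\log(O(\mu^2))/\log\beta$. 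The main obstacle is this assembly together with the \textbf{ATC-ITC} inference recursion: one must confirm that the inferred-adversary dynamics genuinely contract with $\rho(C^{(1)})<1$ rather than retaining a neutral unit mode — this is precisely where the weak-connectivity hypothesis is indispensable — and that the mutual gradient coupling between within-team and cross-team errors stays at order $\mu^2$ so that $\Gamma$ remains stable and the decoupled contraction rates survive. Tracking the non-normality of $\mathcal{B}^{(1)}$ and ${\mathcal{C}^{(1)}}^{\top}$ through equivalent weighted norms is routine but must be carried along to justify the squared spectral-radius factors.
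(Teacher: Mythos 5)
Your overall architecture is workable, and several pieces are essentially right: the within-team error recursion with $(I-\mathcal{P}^{(1)})$ annihilating the common $\mathbbm{1}\otimes\nabla_x J^{(1)}(\boldsymbol{x}_{c,i-1},\boldsymbol{y}_{c,i-1})$ component is sound (it is the projection-form counterpart of the paper's $P_L P_R^{\top}$ decomposition); the \textbf{ATC-C} cross-team step $\check{\boldsymbol{\mathcal{Y}}}^1_i={\mathcal{C}^{(21)}}^{\top}\check{\boldsymbol{\mathcal{Y}}}^2_i$ via left-stochasticity is exactly the paper's Lemma~\ref{lemma:plain_cross}; your algebraic identity for the \textbf{ATC-ITC} inference error (using ${C^{(1)}}^{\top}\mathbbm{1}_{K_1}+{C^{(21)}}^{\top}\mathbbm{1}_{K_2}=\mathbbm{1}_{K_1}$) checks out; and $\rho(C^{(1)})<1$ does follow from irreducibility plus strict substochasticity of at least one column under Assumption~\ref{ass:cross-conn_weak}.

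The genuine gap is in your claim that the \textbf{ATC-ITC} driving terms $-{\mathcal{C}^{(1)}}^{\top}\bigl(\boldsymbol{\mathcal{Y}}^{\prime}_{c,i}-\boldsymbol{\mathcal{Y}}^{\prime}_{c,i-1}\bigr)$ and $-\mu{\mathcal{C}^{(1)}}^{\top}\boldsymbol{\mathcal{G}}^1_{y,i}$ are each $O(\mu)$ in norm. Assumptions \ref{ass:lip}--\ref{ass:gradnoise} bound gradient \emph{differences}, \emph{disagreements}, and \emph{noise}, not the gradients themselves, which may grow linearly in the iterates; and Lemma~\ref{lemma:zerosum_within_cross} is established without strong monotonicity, so there is no a priori bound on the iterates either. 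The centroid increment equals $-\mu\sum_{k\in\mathcal{N}^{(2)}}p_k\widehat{\nabla_y J}_k\approx-\mu\nabla_y J^{(2)}(\boldsymbol{x}_{c,i-1},\boldsymbol{y}_{c,i-1})$ and the inference term contributes $\approx+\mu\nabla_y J^{(1)}(\boldsymbol{x}_{c,i-1},\boldsymbol{y}_{c,i-1})$ per entry: each is $\mu$ times a potentially unbounded quantity, and neither is hit by a projection that kills it (unlike your within-team analysis, ${\mathcal{C}^{(1)}}^{\top}$ does not annihilate constant blocks). What saves the argument is precisely the zero-sum hypothesis \eqref{eq:zero-sum}: $\nabla_y J^{(1)}=-\nabla_y J^{(2)}$ makes these two unbounded common components cancel, leaving residuals controlled by $G$, $\sigma$, and $L_f$ times consensus errors. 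Your sketch never invokes \eqref{eq:zero-sum} anywhere, which is the tell-tale sign: for \textbf{ATC-ITC} the lemma is not provable without it. The paper builds this cancellation in structurally by stacking $[\boldsymbol{\mathcal{Y}}^2_i;\boldsymbol{\mathcal{Y}}^1_i]$ under the single matrix $B^{(y)}$ and measuring consensus through $\mathcal{Q}_R^{\top}$ with $Q_R^{\top}\mathbbm{1}_K=0$, so that the common part of all $K$ stacked gradient entries (which agree to within $G$ \emph{because} of zero-sum) is projected out before any norm bound is taken. A secondary, fixable flaw: in your assembly, the coupling of $\check{\boldsymbol{\mathcal{Y}}}^1_i$ to $\check{\boldsymbol{\mathcal{Y}}}^2_i$ carries an $O(1)$ coefficient $\|{\mathcal{C}^{(21)}}^{\top}\|^2$; folding it into the $O(\mu^2)$ constant presumes the very bound being proved. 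Keeping it as a matrix entry makes $\Gamma|_{\mu=0}$ triangular rather than diagonal — the continuity argument for $\rho(\Gamma)<1$ still goes through, but not as you stated it.
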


\begin{proof}
    The proof can be found in
      Appendix \ref{proof:lemma2_atcetc} for \textbf{ATC-ITC} and Appendix \ref{proof:lemma2_atcc} for \textbf{ATC-C}.
\end{proof}
 Lemma \ref{lemma:zerosum_within_cross} shows that each team’s local strategies cluster around its own network centroid, while each team’s inferred strategies cluster around the opposing team's network centroid. With a sufficiently small step size, the deviations from the centroids remain small, thereby justifying the use of the network centroids
$\boldsymbol{x}_{c,i}$ and $ \boldsymbol{y}_{c,i}$
as reliable proxies for decision-making.
Regarding the recursion of $\boldsymbol{z}_{c, i} = [\boldsymbol{x}_{c,i};\boldsymbol{y}_{c,i}]$, we present the following lemma, which characterizes the learning dynamics of the network centroid after sufficiently large iterations.
\begin{lemma}[{\bf Learning dynamics}]
\label{lemma:zerosum_learning_dynamic}
Let
Assumptions  \ref{ass:within_conn}, \ref{ass:cross-conn_weak}, \ref{ass:lip}, \ref{ass:bdis}, \ref{ass:gradnoise}
hold for \textbf{ATC-ITC} under zero-sum objectives \eqref{eq:zero-sum},
and
Assumptions \ref{ass:within_conn}, \ref{ass:cross-conn_strong}, \ref{ass:lip}, \ref{ass:bdis}, \ref{ass:gradnoise} hold for 
\textbf{ATC-C}. The 
centroid $\boldsymbol{z}_{c,i} = [\boldsymbol{x}_{c,i}; \boldsymbol{y}_{c,i}]$ 
follows the following dynamics 
\begin{equation}
        \boldsymbol{z}_{c,i} = \boldsymbol{z}_{c,i-1} - \mu F(\boldsymbol{z}_{c-1,i}) + \boldsymbol{d}_{c,i}
    \end{equation}  
    where the noisy term $\mathbb{E} \{\|\boldsymbol{d}_{c,i}\|^2\} \leq O(\mu^2)$,
for 
sufficiently small
$\mu$ 
 and sufficient iterations
i.e., $i \geq i_\alpha$ for \textbf{ATC-ITC}, $i \geq i_\beta$ for \textbf{ATC-C}.\hfill\qed
\end{lemma}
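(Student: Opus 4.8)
The plan is to read the perturbation term $\boldsymbol{d}_{c,i}$ directly off the centroid recursion already established in \eqref{eq:centroid_x_evolution}. Since the within-team adapt-then-combine step is identical in \textbf{ATC-ITC} and \textbf{ATC-C}, the scalar recursion \eqref{eq:centroid_x_evolution} for $\boldsymbol{x}_{c,i}$ and its exact mirror image for $\boldsymbol{y}_{c,i}$ (obtained from the $\mathcal{A}^{(2)}$-update with the sign convention that makes Team~2's own-strategy step a descent on $J^{(2)}$) hold for both algorithms. Stacking the two and comparing with the target dynamics, I would simply \emph{define}
\begin{equation}
\boldsymbol{d}_{c,i} \triangleq \mu\, F(\boldsymbol{z}_{c,i-1}) - \mu \begin{bmatrix} \sum_{k \in \mathcal{N}^{(1)}} p_k\, \widehat{\nabla_x J}_k(\boldsymbol{x}_{k,i-1}, \boldsymbol{y}_{k,i-1}) \\ \sum_{k \in \mathcal{N}^{(2)}} p_k\, \widehat{\nabla_y J}_k(\boldsymbol{x}_{k,i-1}, \boldsymbol{y}_{k,i-1}) \end{bmatrix},
\end{equation}
so that $\boldsymbol{z}_{c,i} = \boldsymbol{z}_{c,i-1} - \mu F(\boldsymbol{z}_{c,i-1}) + \boldsymbol{d}_{c,i}$ holds by construction. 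The whole task then reduces to bounding $\mathbb{E}\{\|\boldsymbol{d}_{c,i}\|^2\}$.

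The next step is to rewrite each block of $\boldsymbol{d}_{c,i}$ using the fact that the global gradients are weighted averages of the local ones: from \eqref{eq:global_game_ya} we have $\nabla_x J^{(1)}(x,y)=\sum_{k\in\mathcal{N}^{(1)}} p_k \nabla_x J_k(x,y)$, and similarly $\nabla_y J^{(2)}(x,y)=\sum_{k\in\mathcal{N}^{(2)}} p_k \nabla_y J_k(x,y)$. The $x$-block of $\boldsymbol{d}_{c,i}$ then becomes $\mu\sum_k p_k\big(\nabla_x J_k(\boldsymbol{z}_{c,i-1}) - \widehat{\nabla_x J}_k(\boldsymbol{x}_{k,i-1},\boldsymbol{y}_{k,i-1})\big)$, and I would split each summand into a \emph{Lipschitz/consensus} part $\nabla_x J_k(\boldsymbol{x}_{c,i-1},\boldsymbol{y}_{c,i-1}) - \nabla_x J_k(\boldsymbol{x}_{k,i-1},\boldsymbol{y}_{k,i-1})$ and a \emph{gradient-noise} part $\nabla_x J_k(\boldsymbol{x}_{k,i-1},\boldsymbol{y}_{k,i-1}) - \widehat{\nabla_x J}_k(\boldsymbol{x}_{k,i-1},\boldsymbol{y}_{k,i-1})$; the $y$-block is handled symmetrically.

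For the bounds I would use Jensen's inequality (since $p_k>0$, $\sum_k p_k=1$) and $\|a+b\|^2\le 2\|a\|^2+2\|b\|^2$ to reduce to per-agent mean-square terms. The consensus part is controlled by Assumption~\ref{ass:lip}: $\|\nabla_x J_k(\boldsymbol{x}_{c,i-1},\boldsymbol{y}_{c,i-1}) - \nabla_x J_k(\boldsymbol{x}_{k,i-1},\boldsymbol{y}_{k,i-1})\| \le L_f(\|\boldsymbol{x}_{c,i-1}-\boldsymbol{x}_{k,i-1}\| + \|\boldsymbol{y}_{c,i-1}-\boldsymbol{y}_{k,i-1}\|)$, where the two deviations are exactly the within-team and cross-team consensus residuals that Lemma~\ref{lemma:zerosum_within_cross} bounds by $O(\mu^2)$ in mean square once $i\ge i_\alpha$ (resp. $i\ge i_\beta$); multiplied by the outer $\mu^2$ this contributes $O(\mu^4)$. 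The gradient-noise part has conditional mean-square at most $\sigma^2=O(1)$ by \eqref{eq:b_var} in Assumption~\ref{ass:gradnoise} (taking total expectation via the tower rule over $\boldsymbol{\mathcal{F}}_{i-1}$), and multiplied by $\mu^2$ contributes the dominant $O(\mu^2)$. Summing gives $\mathbb{E}\{\|\boldsymbol{d}_{c,i}\|^2\}\le O(\mu^2)$ for $i\ge i_\alpha$ (resp. $i_\beta$), as claimed.

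The main obstacle I anticipate is not the algebra of the split but justifying that the Lipschitz step is anchored at the \emph{correct} reference point for the inferred adversarial arguments. For $k\in\mathcal{N}^{(1)}$ the quantity $\boldsymbol{y}_{k,i-1}$ is agent $k$'s \emph{estimate} of Team~2's strategy, whose dynamics differ between \textbf{ATC-ITC} (inference-then-combine over a weakly connected cross-team graph) and \textbf{ATC-C} (pure cross-team combination). The argument only closes because Lemma~\ref{lemma:zerosum_within_cross} certifies, for each algorithm under its own assumption set, that $\boldsymbol{\mathcal{Y}}^1_i$ clusters around $\boldsymbol{\mathcal{Y}}^\prime_{c,i}=\mathbbm{1}_{K_1}\otimes\boldsymbol{y}_{c,i}$ (and $\boldsymbol{\mathcal{X}}^2_i$ around $\boldsymbol{\mathcal{X}}^\prime_{c,i}$) within $O(\mu^2)$; this is precisely what lets me replace $\boldsymbol{y}_{k,i-1}$ by $\boldsymbol{y}_{c,i-1}$ at cost $O(\mu^2)$. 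I would therefore emphasize that all algorithm-specific behavior is absorbed into Lemma~\ref{lemma:zerosum_within_cross}, after which the centroid-dynamics argument is uniform across both algorithms, and I would state the $i\ge i_\alpha$ / $i\ge i_\beta$ requirement explicitly as it is inherited verbatim from that lemma.
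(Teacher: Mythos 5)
Your proposal is correct and follows essentially the same route as the paper's proof: the paper likewise defines $\boldsymbol{d}_{c,i}$ as the residual in the centroid recursion \eqref{eq:centroid_x_evolution} (and its Team-2 counterpart), splits it into a Lipschitz/consensus term and a gradient-noise term, and bounds these via Assumptions \ref{ass:lip} and \ref{ass:gradnoise} together with Lemma \ref{lemma:zerosum_within_cross}, yielding the same $O(\mu^4)+O(\mu^2)$ accounting. Your closing remark—that the cross-team consensus bound on the inferred iterates $\boldsymbol{\mathcal{Y}}^1_i$, $\boldsymbol{\mathcal{X}}^2_i$ is what makes the argument uniform across \textbf{ATC-ITC} and \textbf{ATC-C}—is exactly the role Lemma \ref{lemma:zerosum_within_cross} plays in the paper.
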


The proof of Lemma \ref{lemma:zerosum_learning_dynamic} is given in Appendix \ref{proof:lemma3}. The lemma  characterizes the long-term behavior of 
$\boldsymbol{z}_{c,i}$, showing that it approximately follows a gradient descent trajectory with an added random perturbation proportional to $\mathcal{O}(\mu^2)$. Note that Lemmas \ref{lemma:zerosum_within_cross} and \ref{lemma:zerosum_learning_dynamic} hold without requiring the assumption of strong monotonicity.

\begin{theorem}[{\bf Mean-square-error stability}]
\label{thm:zerosum_conv}
Let
Assumptions  \ref{ass:within_conn}, \ref{ass:cross-conn_weak},  \ref{ass:strong mono}, \ref{ass:lip}, \ref{ass:bdis}, \ref{ass:gradnoise}
hold for \textbf{ATC-ITC} under zero-sum objectives \eqref{eq:zero-sum},
and
Assumptions \ref{ass:within_conn}, \ref{ass:cross-conn_strong}, \ref{ass:strong mono}, \ref{ass:lip}, \ref{ass:bdis}, \ref{ass:gradnoise} hold for 
\textbf{ATC-C}.
The mean square error
decays at the following 
form for 
sufficiently small
$\mu$ 
 and sufficient iterations
(i.e., $i \geq i_\alpha$ for \textbf{ATC-ITC}, $i \geq i_\beta$ for \textbf{ATC-C}):
\begin{equation}
    \mathbb{E} \left[\|\boldsymbol{z}_{c,i} - z^\star\|^2\right] \leq 
    d \, \mathbb{E}\left[\|\boldsymbol{z}_{c,i-1} - z^\star\|^2\right] + O\left(\mu^2\right)
\end{equation}
where $d < 1$ is a constant depending on $\mu, \nu$, and $L_f$.
Furthermore,
the centroid $\boldsymbol{z}_{c,i}$ asymptotically converges to 
the Nash equilibrium $\boldsymbol{z}^\star$ 
in the mean-square-error sense
    \begin{equation}
        \limsup_{i \rightarrow \infty} \mathbb{E}\{\|\boldsymbol{z}_{c,i} - z^{\star}\|^2\} \leq O(\mu)
    \end{equation}
for sufficiently small step size $\mu$. \hfill\qed
\end{theorem}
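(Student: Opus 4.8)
The plan is to show that the centroid error $\tilde{\boldsymbol{z}}_i \triangleq \boldsymbol{z}_{c,i} - z^\star$ obeys a contractive mean-square recursion and then to unroll it. First I would record the first-order characterization of the Nash equilibrium: since $x^\star$ minimizes $J^{(1)}(\cdot,y^\star)$ and $y^\star$ minimizes $J^{(2)}(x^\star,\cdot)$, stationarity forces $\nabla_x J^{(1)}(z^\star)=0$ and $\nabla_y J^{(2)}(z^\star)=0$, so that $F(z^\star)=0$ by the definition in \eqref{eq:Fz_defn}; uniqueness of $z^\star$ is supplied by Lemma 1. Next I would invoke the learning-dynamics recursion of Lemma \ref{lemma:zerosum_learning_dynamic}, valid for $i\ge i_\alpha$ (ATC-ITC) or $i\ge i_\beta$ (ATC-C), and subtract the identity $z^\star = z^\star - \mu F(z^\star)$ to obtain
\[
\tilde{\boldsymbol{z}}_i = \tilde{\boldsymbol{z}}_{i-1} - \mu\big(F(\boldsymbol{z}_{c,i-1}) - F(z^\star)\big) + \boldsymbol{d}_{c,i}.
\]

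Writing $\Delta\boldsymbol{F}\triangleq F(\boldsymbol{z}_{c,i-1})-F(z^\star)$, I would take squared norms and expectations and expand the deterministic part $\|\tilde{\boldsymbol{z}}_{i-1}-\mu\Delta\boldsymbol{F}\|^2$. The core of the argument applies Assumption \ref{ass:strong mono} to the cross term, $\tilde{\boldsymbol{z}}_{i-1}^\top\Delta\boldsymbol{F}\ge \nu\|\tilde{\boldsymbol{z}}_{i-1}\|^2$, and Assumption \ref{ass:lip} to the quadratic term, $\|\Delta\boldsymbol{F}\|^2\le L_F^2\|\tilde{\boldsymbol{z}}_{i-1}\|^2$, where the global operator $F$ inherits Lipschitz continuity from the local $J_k$ since each $J^{(t)}$ is a convex combination of them. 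This yields the deterministic contraction factor $1-2\mu\nu+\mu^2 L_F^2$.

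The main obstacle is handling the perturbation $\boldsymbol{d}_{c,i}$ so that the driving term stays $O(\mu^2)$ while the contraction remains below $1$. Here I would split $\boldsymbol{d}_{c,i}$ into a conditionally-zero-mean gradient-noise part, of mean-square $O(\mu^2)$ via Assumption \ref{ass:gradnoise}, and a deterministic consensus-bias part, of mean-square $O(\mu^4)$ obtained by combining the Lipschitz bound of Assumption \ref{ass:lip} with the $O(\mu^2)$ clustering guarantee of Lemma \ref{lemma:zerosum_within_cross}. Since $\tilde{\boldsymbol{z}}_{i-1}-\mu\Delta\boldsymbol{F}$ is $\boldsymbol{\mathcal{F}}_{i-1}$-measurable, its inner product with the noise part vanishes in expectation; the inner product with the bias part I would control by Young's inequality with parameter $\epsilon=\Theta(\mu^2)$, contributing $O(\mu^2)\,\mathbb{E}\|\tilde{\boldsymbol{z}}_{i-1}\|^2$ (absorbed into the $\mu^2$ correction of the contraction) plus a residual $\epsilon^{-1}O(\mu^4)=O(\mu^2)$. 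Combined with $\mathbb{E}\|\boldsymbol{d}_{c,i}\|^2=O(\mu^2)$ from Lemma \ref{lemma:zerosum_learning_dynamic}, this produces
\[
\mathbb{E}\|\tilde{\boldsymbol{z}}_i\|^2 \le \big(1-2\mu\nu+O(\mu^2)\big)\,\mathbb{E}\|\tilde{\boldsymbol{z}}_{i-1}\|^2 + O(\mu^2),
\]
with $d\triangleq 1-2\mu\nu+O(\mu^2)<1$ for sufficiently small $\mu$, since $2\mu\nu$ dominates the $O(\mu^2)$ terms.

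Finally I would unroll this scalar linear recursion starting from $i_\alpha$ (respectively $i_\beta$): the transient factor $d^{\,i}$ decays geometrically, while the geometric series of the constant driving term sums to $O(\mu^2)/(1-d)=O(\mu^2)/\big(2\mu\nu-O(\mu^2)\big)=O(\mu)$. This delivers the claimed steady-state bound $\limsup_{i\to\infty}\mathbb{E}\|\boldsymbol{z}_{c,i}-z^\star\|^2\le O(\mu)$. I expect the delicate point to be the bookkeeping of $\mu$-orders in the perturbation term — specifically, exploiting the conditional zero-mean property of the gradient noise to eliminate one cross term and tuning the Young parameter to $\Theta(\mu^2)$ so the bias cross term leaves only an $O(\mu^2)$ residual rather than the naive $O(\mu)$.
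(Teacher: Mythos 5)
Your proof is correct and follows essentially the same route as the paper's: it starts from the Lemma \ref{lemma:zerosum_learning_dynamic} recursion, eliminates the gradient-noise cross term by conditioning on $\boldsymbol{\mathcal{F}}_{i-1}$, controls the consensus-bias cross term via Young's inequality together with the $O(\mu^2)$ clustering of Lemma \ref{lemma:zerosum_within_cross}, contracts the deterministic part using strong monotonicity and the Lipschitz property of $F$ (with $F(z^\star)=0$, which the paper uses implicitly and you state explicitly), and unrolls the resulting scalar recursion to get $O(\mu^2)/(1-d)=O(\mu)$. The only cosmetic difference is your Young parameter $\Theta(\mu^2)$ versus the paper's $\mu\delta$ with $\delta=\nu$, which yields contraction $1-2\mu\nu+O(\mu^2)$ rather than the paper's $(1+\nu\mu)\left(1-2\mu\nu+L^2\mu^2\right)=1-\mu\nu+O(\mu^2)$; both choices give $1-d=\Theta(\mu)$ and hence the same $O(\mu)$ steady-state bound.
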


The
proof of Theorem \ref{thm:zerosum_conv} can be found in Appendix \ref{proof:thm1}. From Lemma \ref{lemma:zerosum_within_cross}, each $\boldsymbol{z}_{k,i} = [\boldsymbol{x}_{k,i}; \boldsymbol{y}_{k,i}]$ moves towards the network centroids $\boldsymbol{z}_{c,i}$. Using the results of Theorem \ref{thm:zerosum_conv}, we conclude that $\forall k \in \mathcal{N}$,
\begin{equation}
    \limsup_{i \rightarrow \infty} \mathbb{E}\{\|\boldsymbol{z}_{k,i} - z^{\star}\|^2\} \leq O(\mu)
\end{equation}

\begin{remark*}
Although \textbf{ATC-ITC} is primarily concerned with a zero-sum network-game under weak cross-team subgraphs, we argue that it can be extended to the general game setting (i.e., nonzero sum games). 
This holds 
when 
each agent $k \in \mathcal{N}^{(1)}$ has access to a stochastic gradient, say
$\widehat{\nabla_y J}{}_k^{(2)}(\cdot,\cdot)$ which can approximate $\nabla_y J^{(2)}(\cdot,\cdot)$. For example, a natural choice could be any $\widehat{\nabla_y J}_\ell(\cdot,\cdot), \ell \in \mathcal{N}^{(2)}$ from Team 2.
This assumption is reasonable in practical scenarios where competitors have partial knowledge of how their opponents compute their costs. Such situations arise in strategic interactions where cost structures are partially transparent due to industry norms, regulatory requirements, or shared market intelligence.
\end{remark*}

\begin{assumption}[\textbf{Partially observable  stochastic gradient}]
\label{ass:add_grad}
    For $t, t^{\prime} \in \{1,2\}, t \neq t^{\prime}$, each agent $k \in \mathcal{N}^{(t)}$ has access to an additional stochastic gradient $\widehat{\nabla_{w^{(t)}} J}{}^{(t^{\prime})}_k\left(\cdot, \cdot\right)$, where $w^{(1)} = y$ and $ w^{(2)} = x$. The expectation equals to $\nabla_{w^{(t)}} J^{(t^{\prime})}_k (\boldsymbol{x}, \boldsymbol{y})$ conditioned on $\boldsymbol{\mathcal{F}}_i$ similar to \eqref{eq:unbiased}, with bounded gradient noise similar to \eqref{eq:b_var}. The $\nabla_{w^{(t)}} J^{(t^{\prime})}_k (\boldsymbol{x}, \boldsymbol{y})$ also has bounded gradient disagreement and is Lipschitz similar to \eqref{eq:bdis} and \eqref{eq:lip}.
    \hfill\qed
\end{assumption}


\begin{corollary}
\label{corollary:ATC-ETC}
     Under Assumption \ref{ass:add_grad}, replacing the Inference step in \eqref{eq:psi_update} with 
    \begin{equation}
    \label{eq:new_cross_predict}
        \boldsymbol{\psi}_{k,i} = \boldsymbol{y}_{k,i-1} - \mu \widehat{\nabla_y J}{}^{(2)}_k(\boldsymbol{x}_{k, i-1}, \boldsymbol{y}_{k, i-1}) 
    \end{equation}
    ensures that Lemmas \ref{lemma:zerosum_within_cross} and \ref{lemma:zerosum_learning_dynamic}, as well as Theorem \ref{thm:zerosum_conv}, remain valid for \textbf{ATC-ITC}.
\end{corollary}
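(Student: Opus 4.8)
The plan is to treat the corollary as an abstraction argument: the zero-sum relation \eqref{eq:zero-sum} enters the analysis of \textbf{ATC-ITC} only through a single role, namely guaranteeing that the inference direction in \eqref{eq:psi_update} is, on average, a bounded-bias proxy for the adversary's global gradient $\nabla_y J^{(2)}$. Since Assumption \ref{ass:add_grad} supplies a surrogate gradient $\widehat{\nabla_y J}{}^{(2)}_k$ enjoying exactly the same statistical and regularity properties with respect to $\nabla_y J^{(2)}$, I would argue that every estimate in the proofs of Lemmas \ref{lemma:zerosum_within_cross}, \ref{lemma:zerosum_learning_dynamic} and Theorem \ref{thm:zerosum_conv} is reproduced after a term-by-term substitution, with no further appeal to \eqref{eq:zero-sum}.

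First I would isolate the four properties of the inference-step gradient on which the original analysis rests: (i) conditional unbiasedness toward a local gradient, (ii) bounded conditional noise variance $\sigma^2$, (iii) bounded disagreement $G$ between that local gradient and the global operator $\nabla_y J^{(2)}$, and (iv) Lipschitz continuity. In the zero-sum case these are obtained for the proxy $-\widehat{\nabla_y J}_k$ by combining Assumptions \ref{ass:lip}, \ref{ass:bdis}, \ref{ass:gradnoise} with the identity $\nabla_y J^{(1)} = -\nabla_y J^{(2)}$; in particular \eqref{eq:zero-sum} is used precisely to turn $\|\nabla_y J_k - \nabla_y J^{(1)}\| \le G$ into $\|-\nabla_y J_k - \nabla_y J^{(2)}\| \le G$. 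Under \eqref{eq:new_cross_predict} the same four properties hold directly for $\widehat{\nabla_y J}{}^{(2)}_k$ by Assumption \ref{ass:add_grad}, the disagreement bound reading now $\|\nabla_y J^{(2)}_k - \nabla_y J^{(2)}\| \le G$. Thus the two inference rules are interchangeable at the level of the bounds, with the analogous statements holding symmetrically for Team 2's inference of $x$.

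Next I would trace how this substitution propagates. The modified step affects only $\boldsymbol{\psi}^1_i$ (and $\boldsymbol{\psi}^2_i$), hence only the cross-team recursion $\boldsymbol{\mathcal{Y}}^1_i = {\mathcal{C}^{(21)}}^{\top}\boldsymbol{\mathcal{Y}}^2_i + {\mathcal{C}^{(1)}}^{\top}\boldsymbol{\psi}^1_i$ and its mirror for $\boldsymbol{\mathcal{X}}^2_i$. Consequently, the only part of Lemma \ref{lemma:zerosum_within_cross} that must be re-examined is the $\mathcal{O}(\mu^2)$ bound on $\|\boldsymbol{\mathcal{Y}}^1_i - \boldsymbol{\mathcal{Y}}^{\prime}_{c,i}\|^2$ and $\|\boldsymbol{\mathcal{X}}^2_i - \boldsymbol{\mathcal{X}}^{\prime}_{c,i}\|^2$; since the inference term enters that recursion as $-\mu\,{\mathcal{C}^{(1)}}^{\top}$ times a gradient obeying (i)--(iv), the identical weakly-connected contraction argument yields the same rate. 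The within-team consensus bounds for $\boldsymbol{\mathcal{X}}^1_i, \boldsymbol{\mathcal{Y}}^2_i$ are untouched because the adapt-then-combine step is unchanged. I would then observe that the centroid dynamics of Lemma \ref{lemma:zerosum_learning_dynamic} are driven by $\boldsymbol{\mathcal{G}}^1_{x,i}$ and $\boldsymbol{\mathcal{G}}^2_{y,i}$ through the within-team updates, not by the inference step, so the recursion $\boldsymbol{z}_{c,i} = \boldsymbol{z}_{c,i-1} - \mu F(\boldsymbol{z}_{c,i-1}) + \boldsymbol{d}_{c,i}$ is recovered once the new consensus bound is in hand; the mean-square contraction of Theorem \ref{thm:zerosum_conv} then follows unchanged, as it relies only on $\nu$-strong monotonicity of $F$ (Assumption \ref{ass:strong mono}) together with the $\mathcal{O}(\mu^2)$ perturbation.

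I expect the main obstacle to be the bookkeeping needed to certify that \eqref{eq:zero-sum} is genuinely confined to the proxy step and appears nowhere else in the appendix arguments. The most subtle check is to confirm that the dynamics of the true Team-2 centroid $\boldsymbol{y}_{c,i}$, computed from Team 2's own iterates via the unchanged within-team step, and the definition of $F$ in \eqref{eq:Fz_defn} do not implicitly invoke $J^{(1)} = -J^{(2)}$. Once this audit is complete, the corollary reduces to the observation that replacing one bounded-bias, bounded-variance, Lipschitz inference direction by another preserves the structure and the rates of all three proofs.
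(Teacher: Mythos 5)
Your proposal is correct and follows essentially the same route as the paper's own proof: the paper likewise argues by substituting the stochastic gradients in $\boldsymbol{\mathcal{G}}^2_{x,i}$ and $\boldsymbol{\mathcal{G}}^1_{y,i}$ with the partially observable ones from Assumption \ref{ass:add_grad}, and then observing that the steps of Lemmas \ref{lemma:zerosum_within_cross}, \ref{lemma:zerosum_learning_dynamic} and Theorem \ref{thm:zerosum_conv} proceed identically. Your elaboration --- that the zero-sum identity is invoked only to endow the proxy $-\widehat{\nabla_y J}_k$ with the unbiasedness, bounded-variance, bounded-disagreement, and Lipschitz properties that Assumption \ref{ass:add_grad} grants the surrogate directly (this is exactly the role of \eqref{eq:zero-sum} in the disagreement step of the proof of Lemma \ref{lemma:PRTG}) --- is precisely the justification the paper leaves implicit.
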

\begin{proof}
\label{proof:corollary}
The proof follows by substituting the stochastic gradient in  
$\boldsymbol{\mathcal{G}}^2_{x, i}$ with  
\[
\mbox{col} \left\{\widehat{\nabla_x J}{}^{(1)}_k(\boldsymbol{x}_{k, i-1}, \boldsymbol{y}_{k, i-1})\right\}\!{}_{\vphantom{A^{[]}}k=K_1+1}^{K}
\]
and applying the same modification to  
$\boldsymbol{\mathcal{G}}^1_{y, i}$. The steps in Lemmas \ref{lemma:zerosum_within_cross} and \ref{lemma:zerosum_learning_dynamic}, as well as Theorem \ref{thm:zerosum_conv}, then proceed identically. 
\end{proof}

\section{Computer Simulations}

The proposed algorithms are suitable for diverse applications involving network competition, such as market modeling, decentralized GAN training, and power resource allocation.
To illustrate the effectiveness of the proposed algorithms, we explore two specific applications: Cournot team competition, which models the strategic interactions of firms operating within a shared market, and decentralized Wasserstein GAN training, which enhances training efficiency by utilizing distributed machine systems.

\subsection{Nonzero-sum Game: Cournot team-competition}
Cournot  game has been widely considered in the economics literature
\cite{bischi2000global, elettreby2006dynamical, ahmed2006multi, raab2009cournot}. 
We consider a network competition scenario in which two teams, consisting of $K_1, K_2$ firms respectively,
produce homogeneous goods
and compete in the same market to maximize their team's profits.

Let the Team \(1\)'s collaborative strategy $x$ be 
represented as:
\begin{equation}
    x = \begin{bmatrix}
        x(1)\\\vdots\\x(K_1)
    \end{bmatrix}
\end{equation}
The agent $\ell$ then chooses 
component $x(\ell)$ from the above vector
as its practical production quantity.
Consequently, the cost for agent $\ell$
for producing 
$x(\ell)$ volume of goods is given by:
\begin{equation}
    \boldsymbol{C}_\ell(x)  = \left(c_\ell+ \boldsymbol{v}_\ell\right) \, x^2(\ell)
\end{equation}
where $c_\ell > 0$ is the cost parameter, and $\boldsymbol{v}_\ell$  denotes identically, and independently distributed (i.i.d) zero-mean random  perturbation.  Furthermore, the pricing function of the product is modeled as:
\begin{equation}
    \boldsymbol{P}(x, y) = P - (w + \boldsymbol{z})\, (\mathbbm{1}^\top x + \mathbbm{1}^\top y)
\end{equation}
where $P>0$ and $w>0$ are pricing parameters, and $\boldsymbol{z}$  denotes i.i.d. zero-mean random perturbation. With the production cost and market price so defined, the expected profit 
for a single agent $\ell$ can be determined by $-J_\ell(x,y)$, where $J_\ell(x,y)$ is:
\begin{equation}
    J_\ell(x,y) = \mathbb{E} \{ \boldsymbol{C}_\ell(x)
-
x(\ell) \, \boldsymbol{P}(x, y) \}
\end{equation}
We can now formally introduce the optimization objective for the Cournot team-competition model as follows:
\begin{equation}
\label{eq:cournot_prob}
    \begin{aligned}
        \min_x \quad & J^{(1)}(x,y) = \sum_{\ell \in \mathcal{N}^{(1)}} p_\ell J_\ell(x,y) \\
        \min_y \quad & J^{(2)}(x,y) = \sum_{r \in \mathcal{N}^{(2)}} p_r J_r(x,y) 
    \end{aligned}
\end{equation}
A similar formulation for 
team-based Cournot game 
can be found in \cite{ahmed2006multi, yu2017distributed}.


{\em Experimental setup:
}
We consider the Cournot network competition problem with three agents for each team ($K_1 = K_2 = 3$). 
Agents 1 and 4 are the actual players involved in the market, therefore they can naturally observe each other's strategies through the market (i.e., agent 1 and 4 can observe $\boldsymbol{y}_{4,i}$ and $\boldsymbol{x}_{1,i}$, respectively). 
Depending on the requirement of the algorithm, we implicitly construct two cross-team subgraphs for  our algorithms, \textbf{ATC-ITC} and \textbf{ATC-C}:
\begin{enumerate}
    \item \textbf{ATC-C}:
    All agents can observe the market and collect the opponent team's strategy. The information flow for this case is illustrated at the top of Figure \ref{fig:cournot_explain}, which follows a strong cross-team subgraph.

    \item \textbf{ATC-ITC}: Only agent 1 and 4 can collect the opponent team's strategy. The information flow is depicted at the bottom of Figure \ref{fig:cournot_explain}, which follows a weak cross-team subgraph.
\end{enumerate}




\begin{figure}[htbp]
\centerline{\includegraphics[width=\columnwidth]{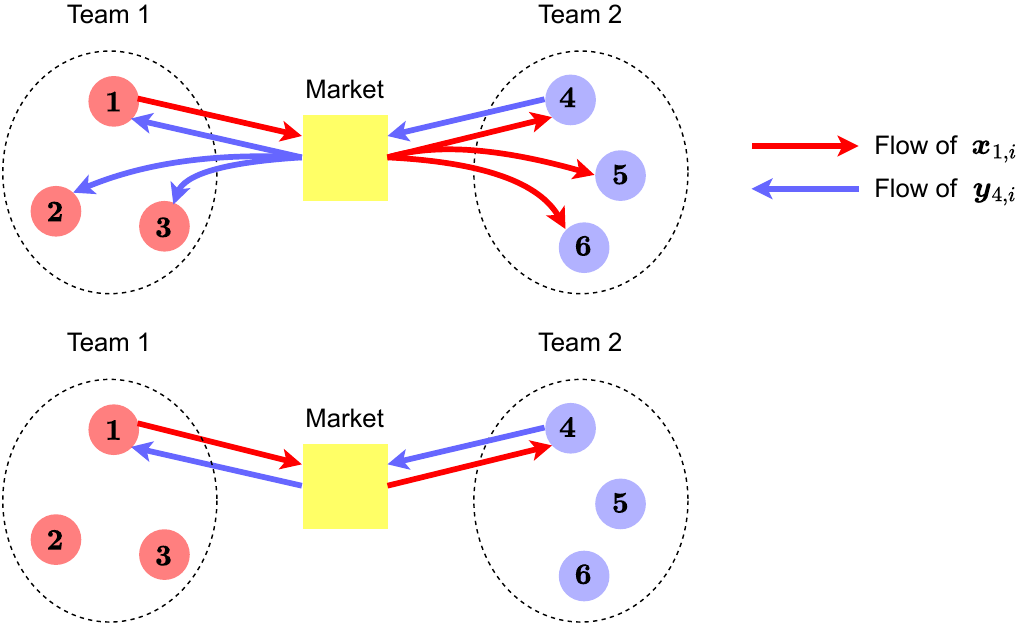}}
\caption{Cournot cross-team information flow with two representatives in each team. The top subfigure corresponds to \textbf{ATC-C}, while the bottom corresponds to \textbf{ATC-ITC}.}
\label{fig:cournot_explain}
\end{figure}

For the simulation, the cost parameters are set at $c_1 = c_4 = 5$ and $c_2 = c_3 = c_5 = c_6 = 3$. The price parameters are set at $P = 5$ and $w = 3$.  The i.i.d. random variables $\boldsymbol{v}_l, \boldsymbol{v}_r$, and $\boldsymbol{z}$ are uniformly  sampled from $[-0.1, 0.1]$. Under this setting, it can be verified that problem \eqref{eq:cournot_prob} satisfies Assumptions \ref{ass:strong mono} and \ref{ass:lip}.
Both algorithms \textbf{ATC-ITC} and \textbf{ATC-C} employ identical within-team topologies, with corresponding combination matrices $A^{(1)}$ and 
$A^{(2)}$ constructed as in \eqref{eq:within_combination_sample}, ensuring Assumption \ref{ass:within_conn} is satisfied. The combination matrix $C$ for \textbf{ATC-ITC} is provided in \eqref{eq:cross_sample}, satisfying Assumption \ref{ass:cross-conn_weak}, whereas the matrix $C$ for \textbf{ATC-C} satisfies Assumption \ref{ass:cross-conn_strong}.

 \begin{subequations}
\label{eq:combination_matrices}
    \begin{equation}
    \label{eq:within_combination_sample}
        A^{(1)} = \left[\renewcommand{\arraystretch}{1.2}\begin{array}{ccc}
        \frac{1}{3} & \frac{1}{2} & \frac{1}{2} \\
        \frac{1}{3} & \frac{1}{2} & 0 \\
        \frac{1}{3} & 0 & \frac{1}{2}
    \end{array}\right], 
        A^{(2)} = \left[ \renewcommand{\arraystretch}{1.2}
\begin{array}{ccc}
    \frac{1}{2} & \frac{1}{3} & 0 \\
    \frac{1}{2} & \frac{1}{3} & \frac{1}{2} \\
    0 & \frac{1}{3} & \frac{1}{2}
\end{array}
\right],
    \end{equation}
    \vspace{-0.5em}
    \begin{equation}
    \label{eq:cross_sample}
        C = \left[\renewcommand{\arraystretch}{1.2}\begin{array}{ccc|ccc}
            \frac{3}{10} & \frac{1}{2} & \frac{1}{2} & \frac{1}{10} & 0 & 0  \\
             \frac{3}{10} & \frac{1}{2} & 0 & 0 & 0 & 0  \\
            \frac{3}{10} & 0 & \frac{1}{2} & 0 & 0 & 0  \\
            \noalign{\hrule height 0.3pt}
            \frac{1}{10} & 0 & 0 & \frac{9}{20} & \frac{1}{3} & 0  \\
            0 & 0 & 0 & \frac{9}{20} & \frac{1}{3} & \frac{1}{2}  \\
            0 & 0 & 0 & 0 & \frac{1}{3} & \frac{1}{2}  
        \end{array}\right].
    \end{equation}
\end{subequations}

We evaluate the performance of our proposed algorithms, \textbf{ATC-ITC} and \textbf{ATC-C}, against the \textbf{Competing Diffusion (CD)} algorithm introduced in \cite{vlaski2021competing}.
Note that \textbf{CD} utilizes the same matrices $A^{(1)}, A^{(2)}$ and $C$ as \textbf{ATC-ITC}. In this nonzero-sum game, for \textbf{ATC-ITC}, we assume that at each iteration, in addition to its own stochastic loss function, each firm has access to the local loss of an adversary agent, enabling it to compute an additional gradient as required by Assumption \ref{ass:add_grad}.
From Figure \ref{fig:cournot},
we observe that 
\textbf{ATC-ITC} 
and 
 \textbf{ATC-C} converge faster than
 \textbf{CD}.
 The performance of 
\textbf{ATC-ITC}  is comparable to that of 
  \textbf{ATC-C}  even under a weaker cross-team connection.
The benefit of \textbf{ATC-ITC} over \textbf{CD}  comes from the estimate 
$\boldsymbol{\psi}_{l,i}$ instead of plain $\boldsymbol{y}_{l,i}$ in \eqref{eq:new_cross_predict}.
Furthermore, simulation results confirm that a smaller step size guarantees convergence of network centroid to a tighter $O(\mu)$-neighborhood of $z^\star$.

\begin{figure}[htbp]
\centerline{\includegraphics[width=.9\columnwidth]{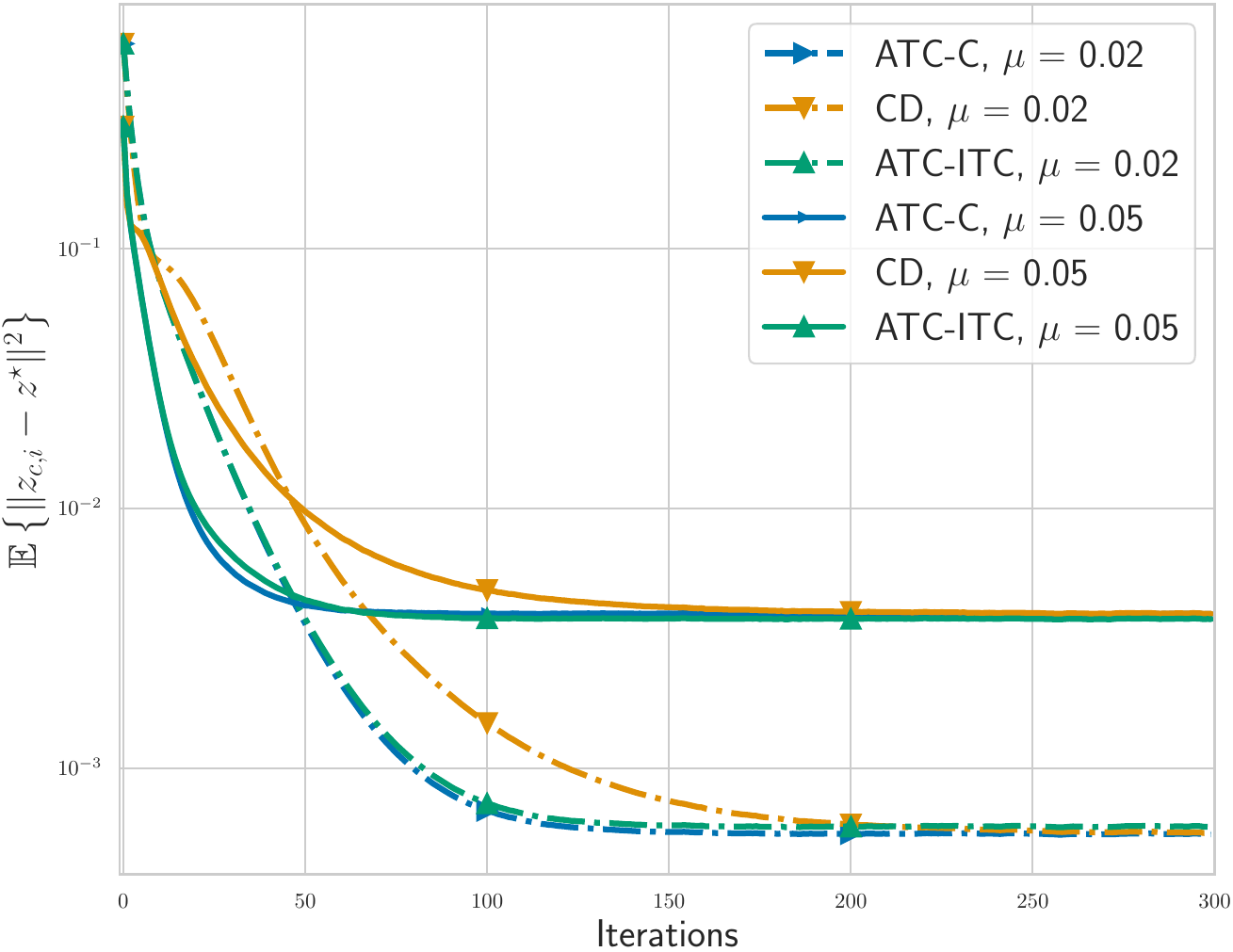}}
\caption{Performance of \textbf{ATC-ITC} (with Assumption \ref{ass:add_grad}), \textbf{CD}, \textbf{ATC-C} in Cournot team-competition}
\label{fig:cournot}
\end{figure}


\subsection{Zero-Sum Game: Decentralized Wasserstein GAN Training}
\label{sec:GAN}
In this example, we consider the application of decentralized training of an $\ell_2$-regularized Wasserstein GAN (WGAN) for learning the mean and variance of a one-dimensional Gaussian \cite{cai2024diffusion, yang2022faster}. 
 Using our framework, the training process for both the discriminator and generator can be decentralized into multiple agents of separate networks, potentially reducing computation requirements compared to the traditional distributed approach \cite{cai2024diffusion}.
The optimization problem is modeled as a two-network zero-sum game, where two teams optimize the following objectives:
\begin{equation}
\label{eq:experiment_problem}
    \begin{aligned}
        \min_x \quad & J^{(1)}(x,y) = \sum_{\ell \in \mathcal{N}^{(1)}} p_\ell J_\ell(x,y) \\
        \min_y \quad & J^{(2)}(x,y) = \sum_{r \in \mathcal{N}^{(2)}} p_r J_r(x,y) 
    \end{aligned}
\end{equation}
where local costs are defined as: 
\begin{subequations}
\renewcommand{\theequation}{\theparentequation\alph{equation}}
\begin{align}
J_\ell(x,y) \triangleq& \
 \mathbb{E}_{(\boldsymbol{u}_\ell, \boldsymbol{z}_\ell) \sim \mathcal{D}} \left[ D(y; \boldsymbol{u}_\ell) -  D(y; G(x; \boldsymbol{z}_\ell))  \notag \right.\\
& \left. + \lambda_x \| x \|^2 - \lambda_y \| y \|^2 \right]\\
J_r(x,y) \triangleq&  -
 \mathbb{E}_{(\boldsymbol{u}_r, \boldsymbol{z}_r) \sim \mathcal{D}} \left[ D(y; \boldsymbol{u}_r) - D(y; G(x; \boldsymbol{z}_r)) \right. \notag
\\
& \left. + \lambda_x \| x \|^2 - \lambda_y \| y \|^2\right]
\end{align}
\end{subequations}

\begin{figure*}[!htbp]
    \centering
    \begin{subfigure}[b]{0.38\linewidth}
        \centering
        \includegraphics[width=\linewidth]{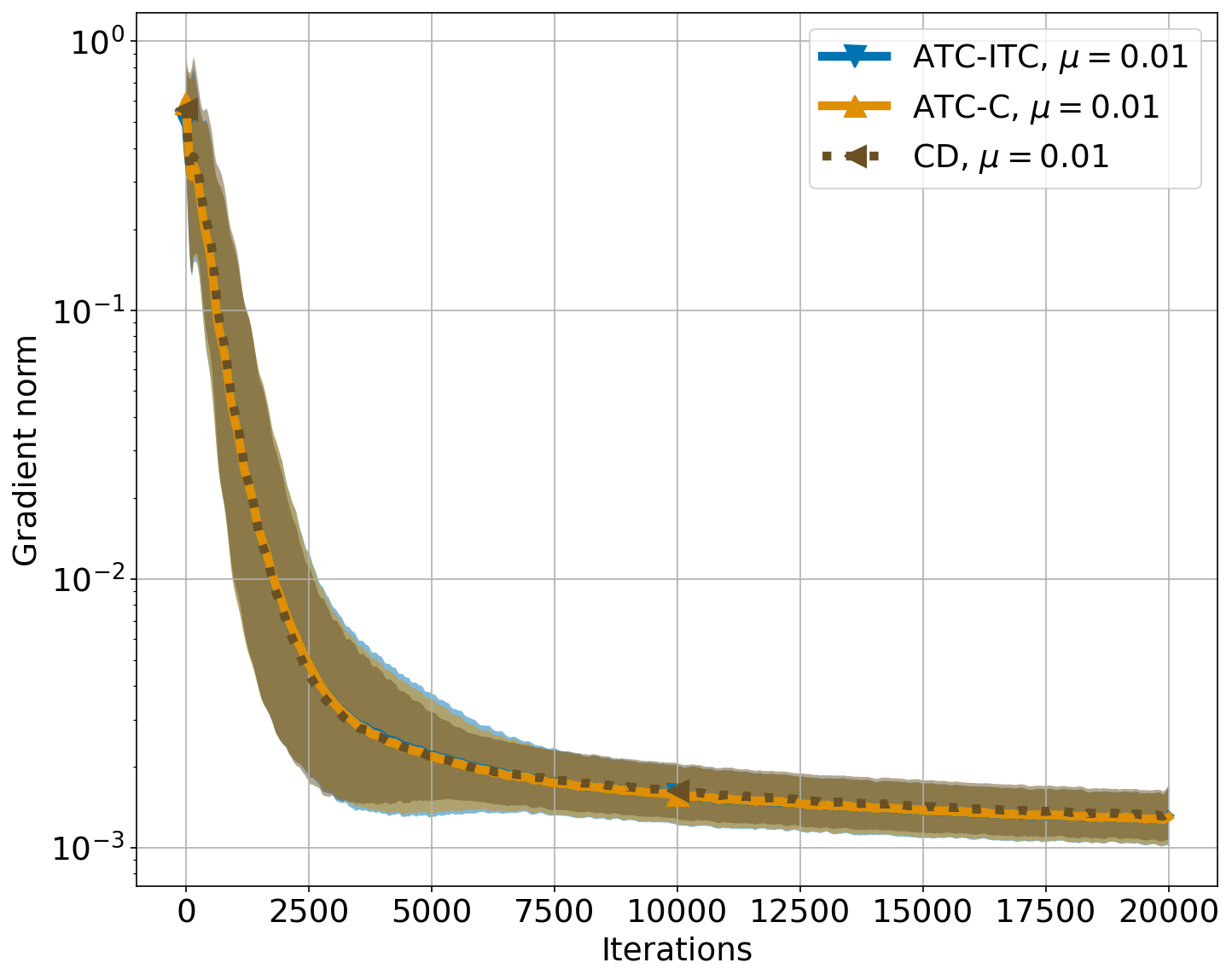}
        \caption{\scalebox{0.9}{$\frac{1}{K} \sum_{k=1}^K \left( \| \widehat{\nabla_x} J_k(\boldsymbol{x}_{k, i}, \boldsymbol{y}_{k, i}) \| + \| \widehat{\nabla_y} J_k(\boldsymbol{x}_{k, i}, \boldsymbol{y}_{k, i}) \| \right)$}}
        \label{fig:example1_mu0.05_grad}
    \end{subfigure}
    \hspace{1em}
    \begin{subfigure}[b]{0.38\linewidth}
        \centering
        \includegraphics[width=\linewidth]{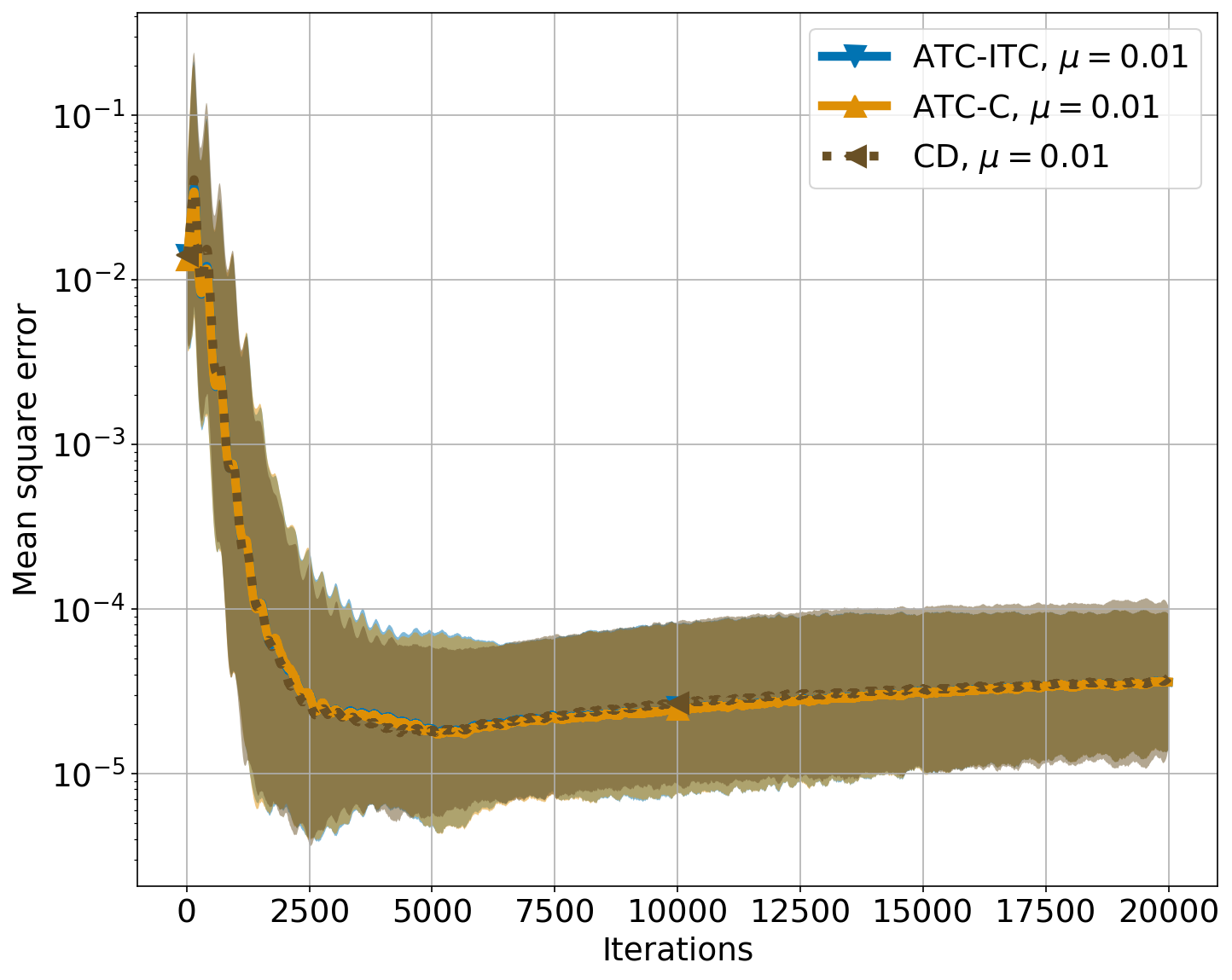}
        \caption{\scalebox{0.9}{$\frac{1}{K} \sum_{k=1}^K \left( \| \hat{\pi}_{k,i} - \pi \|^2 + \| \hat{\sigma}_{k,i} - \sigma \|^2 \right)$}}
        \label{fig:example1_mu0.05_mse}
    \end{subfigure}

    \vspace{1em}
    
    \begin{subfigure}[b]{0.38\linewidth}
        \centering
        \includegraphics[width=\linewidth]{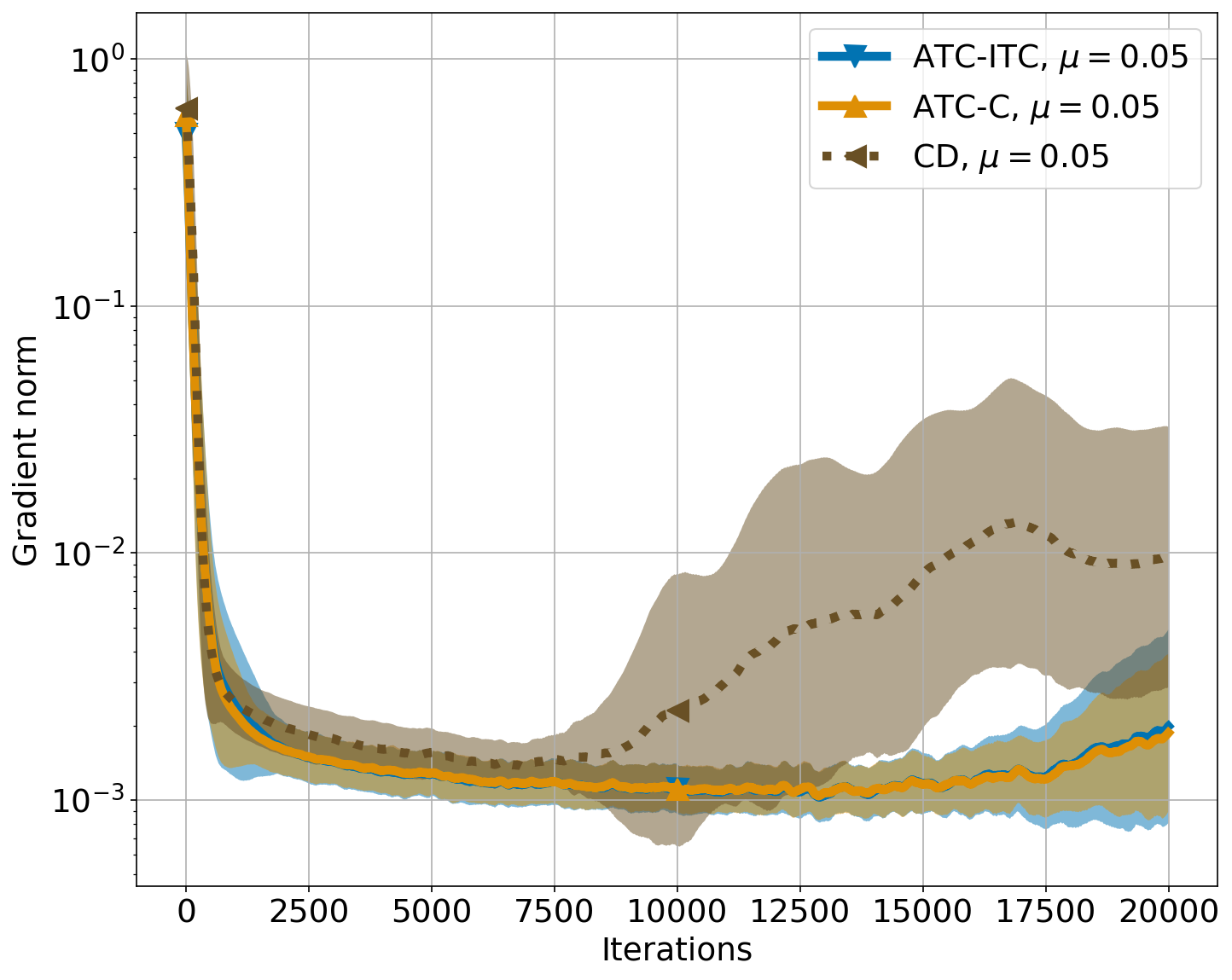}
        \caption{\scalebox{0.9}{$\frac{1}{K} \sum_{k=1}^K \left( \| \widehat{\nabla_x} J_k(\boldsymbol{x}_{k, i}, \boldsymbol{y}_{k, i}) \| + \| \widehat{\nabla_y} J_k(\boldsymbol{x}_{k, i}, \boldsymbol{y}_{k, i}) \| \right)$}}
        \label{fig:example1_mu0.01_grad}
    \end{subfigure}
    \hspace{1em}
    \begin{subfigure}[b]{0.38\linewidth}
        \centering
        \includegraphics[width=\linewidth]{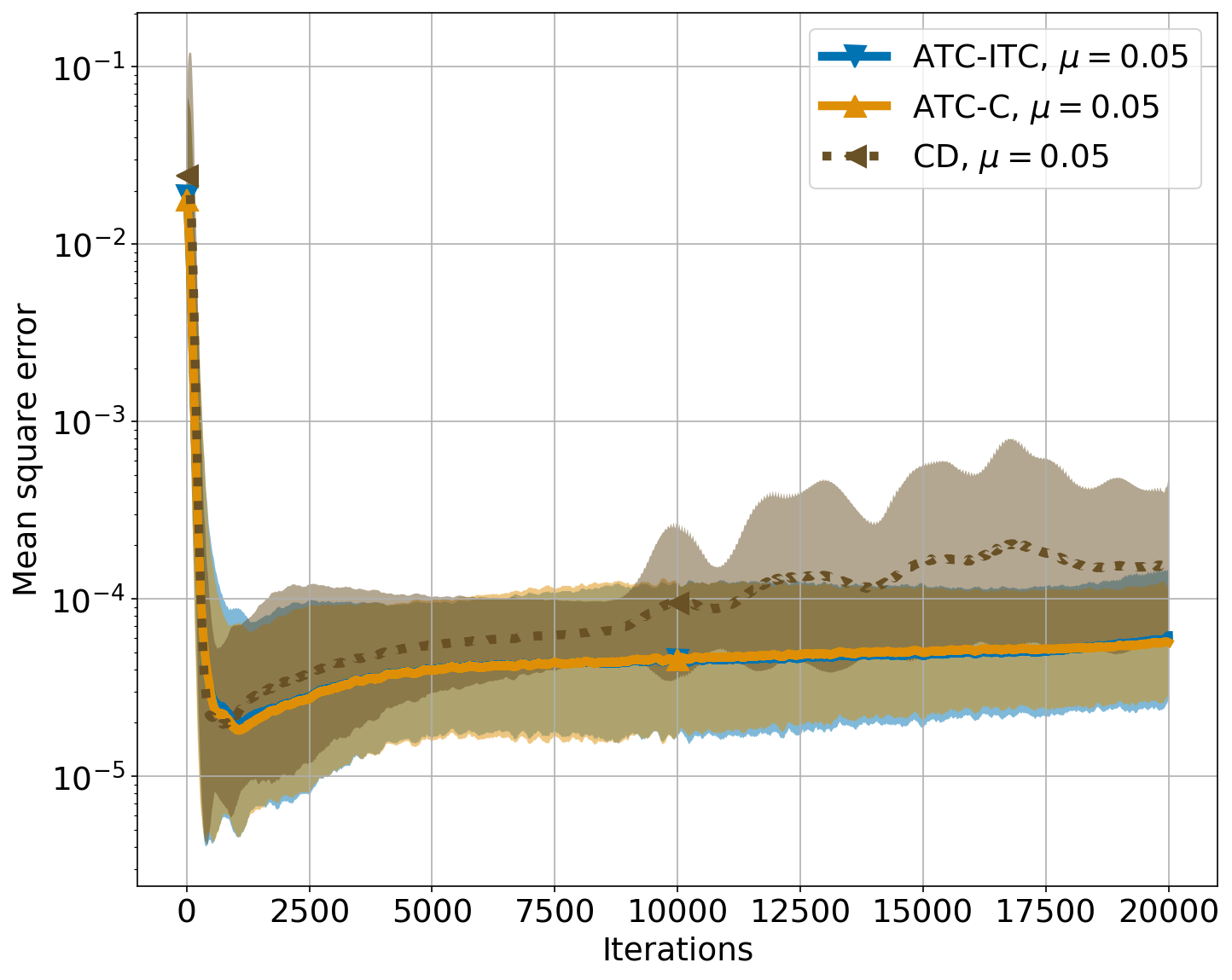}
        \caption{\scalebox{0.9}{$\frac{1}{K} \sum_{k=1}^K \left( \| \hat{\pi}_{k,i} - \pi \|^2 + \| \hat{\sigma}_{k,i} - \sigma \|^2 \right)$}}
        \label{fig:example1_mu0.01_mse}
    \end{subfigure}

    \caption{Evolution of gradient norm and mean square error distance between the true values (i.e. mean $\pi$, standard deviation $\sigma$) and estimated ones (i.e. $\hat{\pi}_{k,i}$, $\hat{\sigma}_{k,i}$):
    In (a), (b), (c), and (d), the true model is given by $\pi = 0, \sigma = 0.01$.}
    \label{fig:sample_figure}
\end{figure*}

Here, $x, y$ are regularized by constants $\lambda_x$ and $\lambda_y$, respectively. The generator $G(x;\boldsymbol{z})$ has a single hidden layer with 5 neurons, and the discriminator is parameterized as $D(y; \boldsymbol{u}) = y_1 \boldsymbol{u} + y_2 \boldsymbol{u}^2$, where $y \in \mathbb{R}^2$. The true samples $\boldsymbol{u}_\ell$ and $\boldsymbol{u}_r$ are drawn from $\mathcal{N}(\pi, \sigma^2)$, and  $\boldsymbol{z}_\ell$ and $\boldsymbol{z}_r$ are drawn from $\mathcal{N}(0, 1)$. In our experiment, we set $\lambda_x = 0.00001, \lambda_y = 0.001$, $\pi = 0$, $\sigma = 0.01$.

 For the network setup, we consider $K_1 = 6$ agents  for team 1 and $K_2 = 4$ agents  for team 2, and thus $K = K_1+K_2 = 10$. We compare the performance of our proposed algorithms, \textbf{ATC-ITC} and \textbf{ATC-C}, against the \textbf{CD} algorithm proposed in \cite{vlaski2021competing}. Two strongly-connect networks are generated for the teams using the averaging rule. For \textbf{ATC-C}, we implement a strong cross-team subgraph, whereas for \textbf{ATC-ITC} and \textbf{CD}, a weak cross-team subgraph is used, with only half of the agents in each team can directly receive information from the adversary team.

In Figures~\ref{fig:example1_mu0.05_grad} and \ref{fig:example1_mu0.05_mse}, we observe that when $\mu = 0.01$, all three algorithms exhibit similar performance in terms of the convergence speed the mean-square-error deviation measure. However, when $\mu = 0.05$, Figures~\ref{fig:example1_mu0.01_grad} and \ref{fig:example1_mu0.01_mse} indicate that the training process of \textbf{CD} is less stable compared to \textbf{ATC-ITC} and \textbf{ATC-C}, leading to degraded performance. 
In other words,
 our proposed \textbf{ATC-ITC} and \textbf{ATC-C} algorithms have a larger stability range of the step size $\mu$ compared to \textbf{CD}. 

Additionally, it is worth noting that, unlike the cooperative decentralized GAN training scheme—where each agent computes gradients with respect to both $x$ and $y$ at each iteration \cite{cai2024diffusion}, \textbf{ATC-C} requires each node to compute gradients with respect to either $x$ or $y$ only (as detailed in Algorithm~\ref{alg:network_learning_general_strong}). This highlights the advantage of \textbf{ATC-C}, which requires less computational burden per agent.

\section{Conclusion}
In this work, we proposed two diffusion learning algorithms, \textbf{ATC-ITC} and  \textbf{ATC-C}, for solving adaptive stochastic competing network problems. We examined two key scenarios: (i) a zero-sum game with weak cross-team subgraphs and (ii) a general game with strong cross-team subgraphs. We proved that both algorithms converge asymptotically to the Nash equilibrium in the mean-square-error sense.
Simulation results on Cournot team competition and decentralized GAN training illustrate their effectiveness. For future work, we plan to extend our study to multicluster games with weak cross-team connections and a broader class of functions to support more applications.


\vspace{-0.5em}
\appendices
\section{Proof of Lemma \ref{lemma:zerosum_within_cross} for \textbf{ATC-ITC}} 
\label{proof:lemma2_atcetc}
In this section, we present the proof of Lemma \ref{lemma:zerosum_within_cross} when running the \textbf{ATC-ITC} algorithm. 
\vspace{-0.5em}
\subsection{Key Lemmas}
Let us define $B^{(x)} \in \mathbb{R}^{K \times K}$ and $B^{(y)} \in \mathbb{R}^{K \times K}$ as follows:
\begin{equation}
    \!\!\!\!B^{(x)} \triangleq 
    \begin{bmatrix}
    {A}^{(1)} & A^{(1)} C^{(12)}\\
    0 & {C^{(2)}} 
\end{bmatrix}, B^{(y)} \triangleq 
    \begin{bmatrix}
    {A}^{(2)} & A^{(2)} C^{(21)}\\
    0 & {C^{(1)}} 
\end{bmatrix}
\end{equation}

\begin{lemma} \label{lemma:BxBy}
Under Assumptions \ref{ass:within_conn}, \ref{ass:cross-conn_weak}, the matrices $B^{(x)}$ and $B^{(y)}$ satisfy the following properties:
\begin{enumerate}[label=(\arabic*), ref=\thelemma(\arabic*)]
    \item \label{lemma:Bx_left} $B^{(x)}$ and $B^{(y)}$ are left-stochastic.
    \item \label{lemma:Bx_simple} 
     For both matrices $B^{(x)}$ and $B^{(y)}$,
 $1$ is the associated eigenvalue which is unique, simple, and largest in magnitude.
    \item \label{lemma:Bx-vector} There exists a unique eigenvector (up to scaling factor) associated with the eigenvalue $1$
    for each matrix, i.e., 
    \begin{equation} \label{eq:eigenvectors}
        p^{(x)} \triangleq \begin{bmatrix}
            p^{(1)} \\
            0
        \end{bmatrix}, \quad 
        p^{(y)} \triangleq \begin{bmatrix}
            p^{(2)} \\
            0
        \end{bmatrix}
    \end{equation}
    where $p^{(x)}$ and $p^{(y)} \in \mathbb{R}^{K \times 1}$ are the eigenvectors of $B^{(x)}$ and $B^{(y)}$, respectively. \hfill\qed
\end{enumerate}
\end{lemma}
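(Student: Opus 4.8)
The plan is to establish the three claims in order, exploiting the block upper-triangular structure of $B^{(x)}$ and $B^{(y)}$; I will carry out the argument for $B^{(x)}$, since the case of $B^{(y)}$ is symmetric upon interchanging the two teams (replacing $A^{(1)}, C^{(12)}, C^{(2)}$ by $A^{(2)}, C^{(21)}, C^{(1)}$). For claim (1), I would compute $\mathbbm{1}^\top B^{(x)}$ block by block. The first block column yields $\mathbbm{1}_{K_1}^\top A^{(1)} = \mathbbm{1}_{K_1}^\top$ by left-stochasticity of $A^{(1)}$ (Assumption~\ref{ass:within_conn}), while the second block column yields $\mathbbm{1}_{K_1}^\top A^{(1)} C^{(12)} + \mathbbm{1}_{K_2}^\top C^{(2)} = \mathbbm{1}_{K_1}^\top C^{(12)} + \mathbbm{1}_{K_2}^\top C^{(2)}$, which equals $\mathbbm{1}_{K_2}^\top$ exactly because the last $K_2$ columns of the left-stochastic matrix $C$ (Assumption~\ref{ass:cross-conn_weak}) sum to one. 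Hence $\mathbbm{1}^\top B^{(x)} = \mathbbm{1}^\top$.

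For claims (2) and (3) I would use that the characteristic polynomial of a block upper-triangular matrix factorizes, so the spectrum of $B^{(x)}$, counted with algebraic multiplicity, is the union of the spectra of the diagonal blocks $A^{(1)}$ and $C^{(2)}$. Since $A^{(1)}$ is primitive and left-stochastic, Perron--Frobenius theory guarantees that $1$ is a simple eigenvalue, is the unique eigenvalue of modulus one, and strictly dominates every other eigenvalue in magnitude. The entire argument therefore reduces to showing that the second diagonal block contributes nothing on the unit circle, i.e., $\rho(C^{(2)}) < 1$.

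Proving $\rho(C^{(2)}) < 1$ is the main obstacle. Here I would combine irreducibility of $C^{(2)}$ with the existence of a cross-team link. Since $C$ is left-stochastic, the columns of its lower-right block obey $\mathbbm{1}_{K_2}^\top C^{(2)} = \mathbbm{1}_{K_2}^\top - \mathbbm{1}_{K_1}^\top C^{(12)}$; and because the weak cross-team condition (Assumption~\ref{ass:cross-conn_weak}) forces at least one strictly positive entry in $C^{(12)}$, this gives $\mathbbm{1}_{K_2}^\top C^{(2)} \le \mathbbm{1}_{K_2}^\top$ with strict inequality in at least one coordinate. Reading this as $(C^{(2)})^\top \mathbbm{1}_{K_2} \le \mathbbm{1}_{K_2}$ with $\mathbbm{1}_{K_2} > 0$ and strict inequality somewhere, I would pair it with the positive left Perron eigenvector $u > 0$ of the irreducible matrix $(C^{(2)})^\top$ to obtain $\rho(C^{(2)}) \, u^\top \mathbbm{1}_{K_2} = u^\top (C^{(2)})^\top \mathbbm{1}_{K_2} < u^\top \mathbbm{1}_{K_2}$, and since $u^\top \mathbbm{1}_{K_2} > 0$ conclude $\rho(C^{(2)}) < 1$.

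Combining the two blocks then settles claims (2) and (3): the algebraic multiplicity of the eigenvalue $1$ in $B^{(x)}$ equals its multiplicity in $A^{(1)}$, namely one, so $1$ is simple and is the dominant eigenvalue. For the eigenvector I would verify directly that $B^{(x)} p^{(x)} = [\,A^{(1)} p^{(1)}; \, 0\,] = [\,p^{(1)}; \, 0\,] = p^{(x)}$, using the Perron relation $A^{(1)} p^{(1)} = p^{(1)}$ from Assumption~\ref{ass:within_conn}, and simplicity of the eigenvalue then guarantees that this eigenvector is unique up to scaling. Applying the identical chain of reasoning to $B^{(y)}$ completes the proof.
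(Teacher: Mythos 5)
Your proof is correct, and its skeleton matches the paper's: the same block-by-block column-sum computation for left-stochasticity, the same block upper-triangular spectral decomposition reducing claims (2)--(3) to the spectra of $A^{(1)}$ and $C^{(2)}$, and the same direct verification that $B^{(x)} p^{(x)} = p^{(x)}$, with simplicity of the eigenvalue giving uniqueness up to scaling. The one genuine difference is the key spectral step: the paper does not prove $\rho(C^{(2)}) < 1$ at all, but cites \cite{ying2016information} for the fact that Assumption \ref{ass:cross-conn_weak} implies it. You instead give a self-contained Perron--Frobenius argument: left-stochasticity of $C$ together with the existence of a positive entry in $C^{(12)}$ (guaranteed by the weak cross-team condition) makes the columns of $C^{(2)}$ substochastic with at least one strictly deficient column, i.e., $(C^{(2)})^{\top}\mathbbm{1}_{K_2} \le \mathbbm{1}_{K_2}$ with strict inequality in some coordinate; pairing this with the strictly positive Perron eigenvector $u$ of the irreducible block then gives $\rho(C^{(2)})\, u^{\top}\mathbbm{1}_{K_2} = u^{\top}(C^{(2)})^{\top}\mathbbm{1}_{K_2} < u^{\top}\mathbbm{1}_{K_2}$, hence $\rho(C^{(2)}) < 1$. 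This argument is sound --- irreducibility is exactly what guarantees $u > 0$, and strict positivity of $u$ is what converts the single deficient column into a strict inequality --- and it is precisely where Assumption \ref{ass:cross-conn_weak} does its work. So your route buys a lemma that is fully self-contained, at the cost of a few extra lines, whereas the paper's version is shorter but leans on an external result; everything else, including the symmetric treatment of $B^{(y)}$ via $A^{(2)}$, $C^{(21)}$, $C^{(1)}$, coincides.
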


\begin{proof}
    \label{proof:matrix}
    We show our proof for $B^{(x)}$; the proof for $B^{(y)}$ follows a similar argument.

    \begin{enumerate}[label=(\arabic*)]
    \item  The left-stochastic property can be verified via
    \begin{equation}
    \begin{aligned}
        \mathbbm{1}_{K}^{\top} B^{(x)} & = \begin{bmatrix}
            \mathbbm{1}_{K_1}^{\top} {A}^{(1)} + \mathbbm{1}_{K_2}^{\top} 0 & \mathbbm{1}_{K_1}^{\top} A^{(1)} C^{(12)} + \mathbbm{1}_{K_2}^{\top} {C^{(2)}}
        \end{bmatrix}\\
        & \stackrel{(a)}{=}
        \begin{bmatrix}
            \mathbbm{1}_{K_1}^{\top} & \mathbbm{1}_{K_1}^{\top} C^{(12)} + \mathbbm{1}_{K_2}^{\top} {C^{(2)}}
        \end{bmatrix}\\
        & \stackrel{(b)}{=}
        \mathbbm{1}^{\top}_{K}
    \end{aligned}
    \end{equation}
    where $(a)$ follows from Assumption \ref{ass:within_conn}; $(b)$ follows from Assumption \ref{ass:cross-conn_weak}.
    
    \item According to \cite{ying2016information}, Assumption \ref{ass:cross-conn_weak} ensures that $\rho\left({C^{(2)}}\right) < 1$, while Assumption \ref{ass:within_conn} guarantees that  1 is the unique eigenvalue on the spectral circle of $A^{(1)}$ and is simple (i.e., multiplicity equal to one). Since $B^{(x)}$ has a block upper-triangular structure, its largest eigenvalue is also 1, and this eigenvalue is simple, with all other eigenvalues having absolute values less than 1 (i.e., $1 > |\lambda_2(B^{(x)})| \geq |\lambda_3(B^{(x)})| \cdots$).

    \item The corresponding unique eigenvector (up to scaling) for the eigenvalue 1 is
    \begin{equation}
        p^{(x)} \triangleq \begin{bmatrix}
            p^{(1)} \\
            0
        \end{bmatrix} \in \mathbb{R}^{K \times 1}
        \label{eq:px_defn}
    \end{equation}
    which satisfies
    \begin{equation}
    \begin{aligned}
        B^{(x)} p^{(x)} &= \begin{bmatrix}
            {A}^{(1)} p^{(1)} + A^{(1)} C^{(12)} 0 \\
            0 p^{(1)} + {C^{(2)}} 0
        \end{bmatrix} \\
        &\stackrel{(a)}{=} \begin{bmatrix}
            p^{(1)} \\
            0
        \end{bmatrix}
    \end{aligned}
    \end{equation}
    where $(a)$ is due to the fact that ${A}^{(1)} p^{(1)} = p^{(1)}$.
    \end{enumerate}
\end{proof}
\vspace{-1.8em}
\begin{remark}
\label{remark:jordan}
     From Lemma \ref{lemma:BxBy} and \cite{sayed2014adaptation}, we find that $B^{(x)}$ admits the following Jordan decomposition: $B^{(x)} = P M P^{-1}$, where
\begin{equation}
\label{eq:Bxjordan}
    P = \begin{bmatrix} p^{(x)} & P_{R} \end{bmatrix}, \ 
    M  = \begin{bmatrix} 1 & 0 \\ 0 & M_{\eta} \end{bmatrix}, \
    P^{-1} = \begin{bmatrix} \mathbbm{1}_K^{\top} \\ P_{L}^{\top} \end{bmatrix}.
\end{equation}
Similarly, for $B^{(y)}$, we have the decomposition $B^{(y)} = Q N Q^{-1}$, where
\begin{equation}
    Q = \begin{bmatrix} p^{(y)} & Q_{R} \end{bmatrix}, \ 
    N  = \begin{bmatrix} 1 & 0 \\ 0 & N_{\iota} \end{bmatrix}, \
    Q^{-1} = \begin{bmatrix} \mathbbm{1}_K^{\top} \\ Q_{L}^{\top} \end{bmatrix}.
\end{equation}
Here, $M_{\eta}, N_\iota$ are 
 block Jordan matrices with the eigenvalues on the diagonal and $\eta, \iota$ on the first lower subdiagonal, respectively.
We define
\begin{align}
   \mathcal{P}_L &\triangleq P_L \otimes I_{M_1}, \hspace{-8pt}
   &\mathcal{P}_R &\triangleq P_R \otimes I_{M_1}, \hspace{-8pt}
   &\mathcal{M}_\eta &\triangleq M_\eta \otimes I_{M_1}\label{eq:PPM}\\
   \mathcal{Q}_L &\triangleq Q_L \otimes I_{M_2}, \hspace{-8pt}
   &\mathcal{Q}_R &\triangleq Q_R \otimes I_{M_2}, \hspace{-8pt}
   &\mathcal{N}_\iota &\triangleq N_\iota \otimes I_{M_2} \label{eq:QQN}
\end{align}
\end{remark}

\begin{lemma}
\label{lemma:X1iX2i}
    Under Assumptions \ref{ass:within_conn}, \ref{ass:cross-conn_weak}, for \textbf{ATC-ITC}, we have:
    \begin{align}
    \!\!\!\!\| \boldsymbol{\mathcal{X}}^1_{i} - \boldsymbol{\mathcal{X}}_{c, i} \|^2 + \| \boldsymbol{\mathcal{X}}^2_{i} - \boldsymbol{\mathcal{X}}^\prime_{c, i} \|^2 &\leq \|\mathcal{P}_L\|^2 \left\| \mathcal{P}_R^{\top} \begin{bmatrix}
    \boldsymbol{\mathcal{X}}^1_{i} \\
    \boldsymbol{\mathcal{X}}^2_{i}
    \end{bmatrix}\right\|^2 \label{eq:X1iX2i}\\
    \!\!\!\!\| \boldsymbol{\mathcal{Y}}^1_{i} - \boldsymbol{\mathcal{Y}}^\prime_{c, i} \|^2 + \| \boldsymbol{\mathcal{Y}}^2_{i} - \boldsymbol{\mathcal{Y}}_{c, i} \|^2 &\leq \|\mathcal{Q}_L\|^2 \left\| \mathcal{Q}_R^{\top} \begin{bmatrix}
    \boldsymbol{\mathcal{Y}}^1_{i} \\
    \boldsymbol{\mathcal{Y}}^2_{i}
\end{bmatrix}\right\|^2 \label{eq:Y1iY2i}
    \end{align}\hfill\qed
\end{lemma}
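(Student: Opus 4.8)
The plan is to recognize that both inequalities simply bound the deviation of the stacked iterate from its consensus (centroid) component, and to exhibit that deviation as the image of a projector assembled from the Jordan factors of $B^{(x)}$ (respectively $B^{(y)}$) supplied by Lemma~\ref{lemma:BxBy} and Remark~\ref{remark:jordan}. I will prove \eqref{eq:X1iX2i}; the proof of \eqref{eq:Y1iY2i} is identical with $B^{(y)}$, $p^{(y)}$, and $\mathcal{Q}_L,\mathcal{Q}_R$ replacing their $x$-counterparts.

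First I would introduce the stacked vector $\boldsymbol{\mathcal{W}}_i \triangleq [\boldsymbol{\mathcal{X}}^1_i;\, \boldsymbol{\mathcal{X}}^2_i] \in \mathbb{R}^{KM_1}$ and note, from \eqref{eq:centroid_network_x}, that the concatenation of the two centroid quantities is $[\boldsymbol{\mathcal{X}}_{c,i};\, \boldsymbol{\mathcal{X}}^\prime_{c,i}] = \mathbbm{1}_K \otimes \boldsymbol{x}_{c,i}$, since stacking $K_1$ and $K_2$ copies of $\boldsymbol{x}_{c,i}$ produces $K$ copies. Using the Perron eigenvector $p^{(x)} = [p^{(1)};\, 0]$ from \eqref{eq:eigenvectors} together with the definition of $\boldsymbol{x}_{c,i}$ in \eqref{eq:centroid_x}, the zero block annihilates the $\boldsymbol{\mathcal{X}}^2_i$ contribution and gives $\boldsymbol{x}_{c,i} = (p^{(x)\top} \otimes I_{M_1})\boldsymbol{\mathcal{W}}_i$. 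Consequently the stacked centroid is the rank-one projection $(\mathbbm{1}_K p^{(x)\top} \otimes I_{M_1})\boldsymbol{\mathcal{W}}_i$, and the stacked deviation is $\bigl((I_K - \mathbbm{1}_K p^{(x)\top}) \otimes I_{M_1}\bigr)\boldsymbol{\mathcal{W}}_i$, whose squared norm splits exactly into $\|\boldsymbol{\mathcal{X}}^1_i - \boldsymbol{\mathcal{X}}_{c,i}\|^2 + \|\boldsymbol{\mathcal{X}}^2_i - \boldsymbol{\mathcal{X}}^\prime_{c,i}\|^2$.

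The key step is to factor the complementary projector $I_K - \mathbbm{1}_K p^{(x)\top}$ through the Jordan data of Remark~\ref{remark:jordan}. Transposing the decomposition gives $(P^{-1})^\top P^\top = I_K$ with $(P^{-1})^\top = [\mathbbm{1}_K \ P_L]$ and $P^\top = [p^{(x)\top};\, P_R^\top]$, so that $\mathbbm{1}_K p^{(x)\top} + P_L P_R^\top = I_K$, i.e. $I_K - \mathbbm{1}_K p^{(x)\top} = P_L P_R^\top$. Substituting this and invoking the augmented definitions \eqref{eq:PPM}, the stacked deviation becomes $\mathcal{P}_L \mathcal{P}_R^\top \boldsymbol{\mathcal{W}}_i$. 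Taking squared norms and applying submultiplicativity $\|\mathcal{P}_L \mathcal{P}_R^\top \boldsymbol{\mathcal{W}}_i\| \le \|\mathcal{P}_L\|\,\|\mathcal{P}_R^\top \boldsymbol{\mathcal{W}}_i\|$ yields precisely \eqref{eq:X1iX2i}.

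I expect the main obstacle to be bookkeeping the transpose conventions correctly rather than any hard estimate. The combination steps in Algorithm~\ref{alg:network_learning_zerosum} apply $\mathcal{A}^{\top}$ and $\mathcal{C}^{\top}$, so $\boldsymbol{\mathcal{W}}_i$ evolves under $B^{(x)\top}$ and the consensus projector is $\mathbbm{1}_K p^{(x)\top}$; the relevant complementary identity is therefore $I_K - \mathbbm{1}_K p^{(x)\top} = P_L P_R^\top$, which comes from $(P^{-1})^\top P^\top = I_K$ — and \emph{not} the superficially similar $I_K - p^{(x)}\mathbbm{1}_K^\top = P_R P_L^\top$ that $P P^{-1} = I_K$ produces. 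Matching the centroid projection to the factorization in which exactly the left factor $\mathcal{P}_L$ and the right factor $\mathcal{P}_R^\top$ appear is the one place where a transpose or ordering slip would silently break the bound.
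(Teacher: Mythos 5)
Your proof is correct and is essentially the paper's own argument: both express the stacked centroid as $(\mathbbm{1}_K {p^{(x)}}^{\top} \otimes I_{M_1})$ acting on $[\boldsymbol{\mathcal{X}}^1_i;\boldsymbol{\mathcal{X}}^2_i]$, use the Jordan-decomposition identity $I_K - \mathbbm{1}_K {p^{(x)}}^{\top} = P_L P_R^{\top}$ to write the stacked deviation as $\mathcal{P}_L \mathcal{P}_R^{\top}[\boldsymbol{\mathcal{X}}^1_i;\boldsymbol{\mathcal{X}}^2_i]$, and finish by submultiplicativity of the norm. Your closing caution about which transpose of the resolution of identity to use is precisely the identity the paper invokes in step $(a)$ of its proof of Lemma~\ref{lemma:X1iX2i}.
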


\begin{proof}
    We establish \eqref{eq:X1iX2i}; a similar argument applies to \eqref{eq:Y1iY2i}.
    Using the property of $p^{(x)}$ in \eqref{eq:px_defn}, the following relationship holds:
\begin{equation}
\begin{aligned}
        \begin{bmatrix}
    \boldsymbol{\mathcal{X}}_{c,i} \\
    \boldsymbol{\mathcal{X}}^\prime_{c,i}
\end{bmatrix} &= \left(\mathbbm{1}_K {p^{(1)}}^{\top} \otimes I_{M_1}\right) \boldsymbol{\mathcal{X}}^1_{i} \\&= \left(\mathbbm{1}_K {p^{(x)}}^{\top} \otimes I_{M_1}\right)
    \begin{bmatrix}
    \boldsymbol{\mathcal{X}}^1_{i} \\
    \boldsymbol{\mathcal{X}}^2_{i}
\end{bmatrix}
\end{aligned}
\end{equation}
Therefore, for $\| \boldsymbol{\mathcal{X}}^1_i - \boldsymbol{\mathcal{X}}_{c, i} \|^2 + \| \boldsymbol{\mathcal{X}}^2_i - \boldsymbol{\mathcal{X}}^\prime_{c, i} \|^2$, we have:
\begin{equation}
\label{eq:centroid_x_bound}
\begin{aligned}
    &\| \boldsymbol{\mathcal{X}}^1_i - \boldsymbol{\mathcal{X}}_{c, i} \|^2 + \| \boldsymbol{\mathcal{X}}^2_i - \boldsymbol{\mathcal{X}}^\prime_{c, i} \|^2\\
    &= \left\|\begin{bmatrix}
    \boldsymbol{\mathcal{X}}^1_{i} \\
    \boldsymbol{\mathcal{X}}^2_{i}
\end{bmatrix} - \begin{bmatrix}
    \boldsymbol{\mathcal{X}}_{c,i} \\
    \boldsymbol{\mathcal{X}}^\prime_{c,i}
\end{bmatrix}\right\|^2 \\
& = \left\| (I_{K} \otimes I_{M_1})\begin{bmatrix}
    \boldsymbol{\mathcal{X}}^1_{i} \\
    \boldsymbol{\mathcal{X}}^2_{i}
\end{bmatrix} - (\mathbbm{1}_K {p^{(x)}}^{\top} \otimes I_{M_1})\begin{bmatrix}
    \boldsymbol{\mathcal{X}}^1_{i} \\
    \boldsymbol{\mathcal{X}}^2_{i}
\end{bmatrix} \right\|^2 \\
& \stackrel{(a)}{=} \left\| \mathcal{P}_L \mathcal{P}_R^{\top} \begin{bmatrix}
    \boldsymbol{\mathcal{X}}^1_{i} \\
    \boldsymbol{\mathcal{X}}^2_{i}
\end{bmatrix}\right\|^2 \\
& \leq 
\|\mathcal{P}_L\|^2 \left\| \mathcal{P}_R^{\top} \begin{bmatrix}
    \boldsymbol{\mathcal{X}}^1_{i} \\
    \boldsymbol{\mathcal{X}}^2_{i}
\end{bmatrix}\right\|^2
\end{aligned}
\end{equation} where $(a)$ follows from the identity (see \eqref{eq:Bxjordan}):
\begin{align}
    (I_{K} \otimes I_{M_1}) &= (P^{-1} \otimes I_{M_1})^{\top} (P \otimes I_{M_1})^{\top} \notag\\
    &= \left(\mathbbm{1}_K {p^{(x)}}^{\top} + P_{L} P_{R}^{\top}\right) \otimes I_{M_1}
\end{align}
\end{proof}

\begin{lemma}
\label{lemma:PRTG}
    With zero-sum objectives \eqref{eq:zero-sum}, under Assumptions \ref{ass:within_conn}, \ref{ass:cross-conn_weak}, \ref{ass:lip}, \ref{ass:bdis}, \ref{ass:gradnoise}, for \textbf{ATC-ITC}, we have:
    \begin{align}
    &\mathbb{E} \left\{\left\|\mathcal{P}_R^{\top} \begin{bmatrix}
        \boldsymbol{\mathcal{G}}^1_{x, i} \\
        \boldsymbol{\mathcal{G}}^2_{x, i}
    \end{bmatrix}\right\|^2\right\} \notag\\
    &\leq 
    8 \|\mathcal{P}_R^{\top}\|^2 L_f^2 \left( \|\mathcal{P}_L\|^2 \mathbb{E} \left\{\left\| \mathcal{P}_R^{\top} \begin{bmatrix}
    \boldsymbol{\mathcal{X}}^1_{i-1} \\
    \boldsymbol{\mathcal{X}}^2_{i-1}
\end{bmatrix}\right\|^2\right\}  \right. \notag\\
& \quad \left. + \|\mathcal{Q}_L\|^2 \mathbb{E} \left\{\left\| \mathcal{Q}_R^{\top} \begin{bmatrix}
    \boldsymbol{\mathcal{Y}}^1_{i-1} \\
    \boldsymbol{\mathcal{Y}}^2_{i-1}
\end{bmatrix}\right\|^2\right\}\right) + 2\|\mathcal{P}_R^{\top}\|^2 K (2G^2 + \sigma^2)
\label{boundPR}
    \end{align}
    Furthermore, 
 $\mathbb{E} \left\{\left\|\mathcal{Q}_R^{\top} \begin{bmatrix}
        \boldsymbol{\mathcal{G}}^1_{y, i} \\
        \boldsymbol{\mathcal{G}}^2_{y, i}
    \end{bmatrix}\right\|^2\right\}$ admits an upper bound similar to the right-hand side of \eqref{boundPR}. \hfill\qed
\end{lemma}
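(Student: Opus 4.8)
The plan is to exploit the fact that the projection $\mathcal{P}_R^{\top}$ annihilates any consensus (block-repeated) vector, and then to bound the residual by splitting each networked stochastic gradient into a noise part, a Lipschitz part, and a gradient-disagreement part. First I would note from the Jordan structure in \eqref{eq:Bxjordan} that $\mathbbm{1}_K^{\top} P_R = 0$, hence $\mathcal{P}_R^{\top}(\mathbbm{1}_K \otimes v) = (P_R^{\top}\mathbbm{1}_K)\otimes v = 0$ for every $v \in \mathbb{R}^{M_1}$. Consequently I may subtract from the stacked gradient any vector whose blocks are all equal without changing $\mathcal{P}_R^{\top}[\boldsymbol{\mathcal{G}}^1_{x,i};\boldsymbol{\mathcal{G}}^2_{x,i}]$. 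The natural reference is $\mathbbm{1}_K \otimes \nabla_x J^{(1)}(\boldsymbol{x}_{c,i-1},\boldsymbol{y}_{c,i-1})$, the global Team-1 gradient evaluated at the two centroids. Setting $W_i \triangleq [\boldsymbol{\mathcal{G}}^1_{x,i};\boldsymbol{\mathcal{G}}^2_{x,i}] - \mathbbm{1}_K\otimes\nabla_x J^{(1)}(\boldsymbol{x}_{c,i-1},\boldsymbol{y}_{c,i-1})$, the submultiplicative bound $\|\mathcal{P}_R^{\top}W_i\|^2 \le \|\mathcal{P}_R^{\top}\|^2\,\|W_i\|^2$ reduces the problem to bounding $\mathbb{E}\|W_i\|^2 = \sum_{k=1}^{K}\mathbb{E}\|w_{k,i}\|^2$ blockwise.

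Next, for each block $w_{k,i}$ I would insert and subtract intermediate terms and apply Jensen's inequality twice: first split $w_{k,i}$ into a gradient-noise term $n_{k,i}$ and a remainder $r_{k,i}$, giving $\|w_{k,i}\|^2 \le 2\|n_{k,i}\|^2 + 2\|r_{k,i}\|^2$; then split $r_{k,i}$ into a Lipschitz term and a disagreement term, giving $\|r_{k,i}\|^2 \le 2\|\text{Lip}\|^2 + 2\|\text{dis}\|^2$. The noise term is controlled by \eqref{eq:b_var} (variance $\le\sigma^2$), the disagreement term by Assumption~\ref{ass:bdis} (bound $G$), and the Lipschitz term by Assumption~\ref{ass:lip} together with $(\|a\|+\|b\|)^2 \le 2\|a\|^2+2\|b\|^2$. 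A point requiring care is that for the sign-flipped Team-2 blocks $-\widehat{\nabla_x J}_k$ with $k\in\mathcal{N}^{(2)}$, the disagreement from the common reference is still bounded by $G$: by the zero-sum identity \eqref{eq:zero-sum} we have $\nabla_x J^{(1)} = -\nabla_x J^{(2)}$, so $\|-\nabla_x J_k - \nabla_x J^{(1)}\| = \|\nabla_x J_k - \nabla_x J^{(2)}\| \le G$. Equally important, the single centroid pair $(\boldsymbol{x}_{c,i-1},\boldsymbol{y}_{c,i-1})$ serves as the common reference for all $K$ agents, since Team 2's $x$-iterates are inferred adversary strategies that cluster around Team 1's centroid (matching $\boldsymbol{\mathcal{X}}^\prime_{c,i-1}$). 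Combining the three pieces yields, in expectation, $\mathbb{E}\|w_{k,i}\|^2 \le 2\sigma^2 + 4G^2 + 8 L_f^2\,\mathbb{E}(\|\boldsymbol{x}_{k,i-1}-\boldsymbol{x}_{c,i-1}\|^2 + \|\boldsymbol{y}_{k,i-1}-\boldsymbol{y}_{c,i-1}\|^2)$.

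Finally I would sum over $k$. The constant terms accumulate to $K(2\sigma^2 + 4G^2) = 2K(2G^2+\sigma^2)$, which matches the last term of \eqref{boundPR}. The Lipschitz terms collect the full within-team deviations $\sum_k\|\boldsymbol{x}_{k,i-1}-\boldsymbol{x}_{c,i-1}\|^2 = \|\boldsymbol{\mathcal{X}}^1_{i-1}-\boldsymbol{\mathcal{X}}_{c,i-1}\|^2 + \|\boldsymbol{\mathcal{X}}^2_{i-1}-\boldsymbol{\mathcal{X}}^\prime_{c,i-1}\|^2$ and the analogous $y$-sum, which I bound directly by Lemma~\ref{lemma:X1iX2i} by $\|\mathcal{P}_L\|^2\,\|\mathcal{P}_R^{\top}[\boldsymbol{\mathcal{X}}^1_{i-1};\boldsymbol{\mathcal{X}}^2_{i-1}]\|^2$ and $\|\mathcal{Q}_L\|^2\,\|\mathcal{Q}_R^{\top}[\boldsymbol{\mathcal{Y}}^1_{i-1};\boldsymbol{\mathcal{Y}}^2_{i-1}]\|^2$, respectively. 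Multiplying through by $\|\mathcal{P}_R^{\top}\|^2$ reproduces \eqref{boundPR} exactly, and the $y$-gradient bound follows identically with reference $\mathbbm{1}_K\otimes\nabla_y J^{(2)}(\boldsymbol{x}_{c,i-1},\boldsymbol{y}_{c,i-1})$ and the roles of $(\mathcal{P},M_1)$ and $(\mathcal{Q},M_2)$ interchanged. I expect the main obstacle to be the bookkeeping around the zero-sum sign flips and confirming that the single centroid reference simultaneously controls both teams' disagreement terms; once that is in place, the remainder is a routine application of Jensen's inequality, the noise and Lipschitz bounds, and Lemma~\ref{lemma:X1iX2i}.
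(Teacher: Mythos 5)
Your proposal is correct and follows essentially the same route as the paper's proof: both exploit $P_R^{\top}\mathbbm{1}_K = 0$ to center the stacked gradients at $\mathbbm{1}_K \otimes \nabla_x J^{(1)}(\boldsymbol{x}_{c,i-1},\boldsymbol{y}_{c,i-1})$, split via Jensen into gradient-noise, Lipschitz, and disagreement terms handled by Assumptions \ref{ass:gradnoise}, \ref{ass:lip}, and \ref{ass:bdis} (with the zero-sum identity absorbing the Team-2 sign flips), and close with Lemma~\ref{lemma:X1iX2i}. The only differences are cosmetic orderings (you pull $\|\mathcal{P}_R^{\top}\|^2$ out once at the start and treat the noise blockwise, whereas the paper splits off the noise at the full-vector level first), and your constants reproduce \eqref{boundPR} exactly.
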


\begin{proof}
    We use $\mathcal{G}^1_{x, i}$ and $\mathcal{G}^2_{x, i}$ to denote the deterministic realization for $\boldsymbol{\mathcal{G}}^1_{x, i}$ and $\boldsymbol{\mathcal{G}}^2_{x, i}$:
    \begin{align}
        \mathcal{G}^1_{x, i} &\triangleq \text{col} \{\nabla_x J_k\left(\boldsymbol{x}_{k, i-1}, \boldsymbol{y}_{k, i-1}\right)\}_{k=1}^{K_1} \\
        \mathcal{G}^2_{x, i} &\triangleq \text{col} \{ - \nabla_x J_k\left(\boldsymbol{x}_{k, i-1}, \boldsymbol{y}_{k, i-1}\right)\}_{k=K_1 + 1}^{K}\end{align} 
        The corresponding stochastic noise is denoted by:
        \begin{align}
        \boldsymbol{\mathcal{S}}_{x, i}^{1} &\triangleq \mathrm{col} \{\widehat{\nabla_x J}_k(\boldsymbol{x}_{k, i-1}, \boldsymbol{y}_{k, i-1})  \notag\\
        & \qquad - \nabla_x J_k\left(\boldsymbol{x}_{k, i-1}, \boldsymbol{y}_{k, i-1}\right)\}_{k=1}^{K_1} \\
        \boldsymbol{\mathcal{S}}_{x, i}^{2} &\triangleq \mathrm{col} \{- \widehat{\nabla_x J}_k(\boldsymbol{x}_{k, i-1}, \boldsymbol{y}_{k, i-1}) \notag\\
        & \qquad + \nabla_x J_k\left(\boldsymbol{x}_{k, i-1}, \boldsymbol{y}_{k, i-1}\right)\}_{k=K_1+1}^{K}
    \end{align} 
    Next note that
    \begin{equation}
        \begin{aligned}
            \left\|\mathcal{P}_R^{\top} \begin{bmatrix}
    \boldsymbol{\mathcal{G}}^1_{x, i} \\
    \boldsymbol{\mathcal{G}}^2_{x, i}
\end{bmatrix}\right\|^2 &= \left\|\mathcal{P}_R^{\top} \left( \begin{bmatrix}
    \mathcal{G}^1_{x, i} \\
    \mathcal{G}^2_{x, i}
\end{bmatrix} + \begin{bmatrix}
    \boldsymbol{\mathcal{S}}_{x, i}^{1} \\
    \boldsymbol{\mathcal{S}}_{x, i}^{2}
\end{bmatrix}
\right) \right\|^2 \\
&\leq 2\left\|\mathcal{P}_R^{\top} \begin{bmatrix}
    \mathcal{G}^1_{x, i} \\
    \mathcal{G}^2_{x, i}
\end{bmatrix}\right\|^2 + 2\left\|\mathcal{P}_R^{\top} \begin{bmatrix}
    \boldsymbol{\mathcal{S}}_{x, i}^{1} \\
    \boldsymbol{\mathcal{S}}_{x, i}^{2}
\end{bmatrix} \right\|^2 \\
            &\leq 2\left\|\mathcal{P}_R^{\top} \begin{bmatrix}
    \mathcal{G}^1_{x, i} \\
    \mathcal{G}^2_{x, i}
\end{bmatrix}\right\|^2 + 2\left\|\mathcal{P}_R^{\top}\right\|^2 \left\| \begin{bmatrix}
    \boldsymbol{\mathcal{S}}_{x, i}^{1} \\
    \boldsymbol{\mathcal{S}}_{x, i}^{2}
\end{bmatrix}\right\|^2 \\
        \end{aligned}
    \end{equation}
Taking the conditional expectation and  since $\boldsymbol{x}_{k, i-1}, \boldsymbol{y}_{k, i-1} \in \boldsymbol{\mathcal{F}}_{i-1}$, we obtain:

    \begin{equation}
    \label{eq:exp_PRTG}
        \mathbb{E} \left\{\left\|\mathcal{P}_R^{\top} \begin{bmatrix}
    \boldsymbol{\mathcal{G}}^1_{x, i} \\
    \boldsymbol{\mathcal{G}}^2_{x, i}
\end{bmatrix}\right\|^2 \middle| \boldsymbol{\mathcal{F}}_{i-1}\right\} \leq 2\left\|\mathcal{P}_R^{\top} \begin{bmatrix}
    \mathcal{G}^1_{x, i} \\
    \mathcal{G}^2_{x, i}
\end{bmatrix}\right\|^2 + 2\left\|\mathcal{P}_R^{\top}\right\|^2 K \sigma^2
    \end{equation}
where we used Assumption \ref{ass:gradnoise}.  Now we get:
\allowdisplaybreaks
    \begin{align}
\label{eq:PRTG}
    &\left\|\mathcal{P}_R^{\top} \begin{bmatrix}\mathcal{G}^1_{x, i} \\\mathcal{G}^2_{x, i}\end{bmatrix}\right\|^2 \notag\\
    &= \left\|\mathcal{P}_R^{\top} \begin{bmatrix}
        \text{col}\{\nabla_x J_k(\boldsymbol{x}_{k, i-1}, \boldsymbol{y}_{k, i-1})\}_{k=1}^{K_1} \\
        \text{col}\{-\nabla_x J_k(\boldsymbol{x}_{k, i-1}, \boldsymbol{y}_{k, i-1})\}_{k=K_1+1}^{K}
    \end{bmatrix}\right\|^2 \notag\\
    &\stackrel{(a)}{=} \Bigg\|\mathcal{P}_R^{\top} \left(\begin{bmatrix}
        \text{col}\{\nabla_x J_k(\boldsymbol{x}_{k, i-1}, \boldsymbol{y}_{k, i-1})\}_{k=1}^{K_1} \\
        \text{col}\{-\nabla_x J_k(\boldsymbol{x}_{k, i-1}, \boldsymbol{y}_{k, i-1})\}_{k=K_1+1}^{K}
    \end{bmatrix} \right. \notag\\
    &\quad\left. - \begin{bmatrix}
        \text{col}\{\nabla_x J_k(\boldsymbol{x}_{c, i-1}, \boldsymbol{y}_{c, i-1})\}_{k=1}^{K_1} \\
        \text{col}\{-\nabla_x J_k(\boldsymbol{x}_{c, i-1}, \boldsymbol{y}_{c, i-1})\}_{k=K_1+1}^{K}
    \end{bmatrix}\right) \notag\\
    &\quad + \mathcal{P}_R^{\top} \left(\begin{bmatrix}
        \text{col}\{\nabla_x J_k(\boldsymbol{x}_{c, i-1}, \boldsymbol{y}_{c, i-1})\}_{k=1}^{K_1} \\
        \text{col}\{-\nabla_x J_k(\boldsymbol{x}_{c, i-1}, \boldsymbol{y}_{c, i-1})\}_{k=K_1+1}^{K}
    \end{bmatrix}  - (\mathbbm{1}_K \right. \notag\\
    &\quad\left. {p^{(x)}}^{\top})\otimes I_{M_1} \begin{bmatrix}
        \text{col}\{\nabla_x J_k(\boldsymbol{x}_{c, i-1}, \boldsymbol{y}_{c, i-1})\}_{k=1}^{K_1} \\
        \text{col}\{-\nabla_x J_k(\boldsymbol{x}_{c, i-1}, \boldsymbol{y}_{c, i-1})\}_{k=K_1+1}^{K}
    \end{bmatrix}\right)\Bigg\|^2 \notag\\
    &\stackrel{(b)}{\leq} 2\|\mathcal{P}_R^{\top}\|^2 \sum_{k=1}^{K} \|\nabla_x J_k(\boldsymbol{x}_{k, i-1}, \boldsymbol{y}_{k, i-1}) - \nabla_x J_k(\boldsymbol{x}_{c, i-1},  \notag\\
    &\quad \boldsymbol{y}_{c, i-1})\|^2 + 2\|\mathcal{P}_R^{\top}\|^2 \sum_{k=1}^{K_1} \|\nabla_x J_k(\boldsymbol{x}_{c, i-1}, \boldsymbol{y}_{c, i-1})  \notag\\
    &\quad - \nabla_x J^{(1)}(\boldsymbol{x}_{c, i-1}, \boldsymbol{y}_{c, i-1})\|^2 + 2\|\mathcal{P}_R^{\top}\|^2 \sum_{k=K_1+1}^{K}  \notag\\
    & \quad \|-\nabla_x J_k(\boldsymbol{x}_{c, i-1}, \boldsymbol{y}_{c, i-1}) + \nabla_x J^{(2)}(\boldsymbol{x}_{c, i-1}, \boldsymbol{y}_{c, i-1})\|^2 \notag\\
    &\stackrel{(c)}{\leq} 2\|\mathcal{P}_R^{\top}\|^2 \sum_{k=1}^{K} L_f^2 (\|\boldsymbol{x}_{k, i-1} - \boldsymbol{x}_{c, i-1}\| + \|\boldsymbol{y}_{k, i-1} - \boldsymbol{y}_{c, i-1}\|)^2 \notag\\
    &\quad + 2\|\mathcal{P}_R^{\top}\|^2 K G^2 \notag\\
    &\stackrel{(d)}{\leq} 2\|\mathcal{P}_R^{\top}\|^2 L_f^2 (2\|\boldsymbol{\mathcal{X}}^1_{i-1} - \boldsymbol{\mathcal{X}}_{c,i-1}\|^2 + 2\|\boldsymbol{\mathcal{Y}}^2_{i-1} - \boldsymbol{\mathcal{Y}}_{c,i-1}\|^2 \notag\\
    &\quad + 2\|\boldsymbol{\mathcal{X}}^2_{i-1} - \boldsymbol{\mathcal{X}}^\prime_{c,i-1}\|^2 + 2\|\boldsymbol{\mathcal{Y}}^1_{i-1} - \boldsymbol{\mathcal{Y}}^\prime_{c,i-1}\|^2) \notag\\
    &\quad + 2\|\mathcal{P}_R^{\top}\|^2 K G^2.
\end{align}
where $(a)$ follows from the fact that $P_R^{\top} \mathbbm{1}_K = 0$; and $(b)$ is due to 
    \begin{equation}
        \begin{aligned}
            &(\mathbbm{1}_K {p^{(x)}}^{\top}) \otimes I_{M_1} \begin{bmatrix}
                \text{col}\{\nabla_x J_k\left(\boldsymbol{x}_{c, i-1}, \boldsymbol{y}_{c, i-1}\right)\}_{k=1}^{K_1} \\ \text{col}\{ - \nabla_x J_k\left(\boldsymbol{x}_{c, i-1}, \boldsymbol{y}_{c, i-1}\right)\}_{k=K_1+1}^{K}
                \end{bmatrix} \\
                &= \mathbbm{1}_K \otimes \sum_{k=1}^{K_1} p_k \nabla_x J_k\left(\boldsymbol{x}_{c, i-1}, \boldsymbol{y}_{c, i-1}\right)\\
                &= \mathbbm{1}_K \otimes \nabla_x \left(\sum_{k=1}^{K_1} p_k J_k\left(\boldsymbol{x}_{c, i-1}, \boldsymbol{y}_{c, i-1}\right)\right)\\
                &= \mathbbm{1}_K \otimes \nabla_x  J^{(1)}\left(\boldsymbol{x}_{c, i-1}, \boldsymbol{y}_{c, i-1}\right)
        \end{aligned}
    \end{equation}
    and the zero-sum setting \eqref{eq:zero-sum}, which implies $\nabla_x J^{(1)}\left(\boldsymbol{x}_{c, i-1}, \boldsymbol{y}_{c, i-1}\right) = - \nabla_x J^{(2)}\left(\boldsymbol{x}_{c, i-1}, \boldsymbol{y}_{c, i-1}\right)$; $(c)$ follows from Assumption \ref{ass:lip} and \ref{ass:bdis}; $(d)$ follows from the inequality $(a+b)^2 \leq 2a^2 + 2b^2$ and the fact that 
    \begin{equation}
        \begin{aligned}
            \sum_{k=1}^{K_1} \|\boldsymbol{x}_{k, i-1} - \boldsymbol{x}_{c, i-1}\|^2 &= \|\boldsymbol{\mathcal{X}}^1_{i-1} - \boldsymbol{\mathcal{X}}_{c, i-1} \|^2 \\
            \sum_{k=K_1+1}^{K_1} \|\boldsymbol{x}_{k, i-1} - \boldsymbol{x}_{c, i-1}\|^2 &= \|\boldsymbol{\mathcal{X}}^2_{i-1} - \boldsymbol{\mathcal{X}}^\prime_{c, i-1} \|^2\\
            \sum_{k=1}^{K_1}\|\boldsymbol{y}_{k, i-1} - \boldsymbol{y}_{c, i-1}\|^2 &= \|\boldsymbol{\mathcal{Y}}^1_{i-1} - \boldsymbol{\mathcal{Y}}^{\prime}_{c,i-1}\|^2\\
            \sum_{k=K_1+1}^{K}\|\boldsymbol{y}_{k, i-1} - \boldsymbol{y}_{c, i-1}\|^2 &= \|\boldsymbol{\mathcal{Y}}^2_{i-1} - \boldsymbol{\mathcal{Y}}_{c,i-1}\|^2   
        \end{aligned}
    \end{equation}
    Moreover, by Lemma \ref{lemma:X1iX2i}, we can further bound \eqref{eq:PRTG} by the following:
    \begin{equation}
    \label{eq:PRTG_2}
        \begin{aligned}
            &\left\|\mathcal{P}_R^{\top} \begin{bmatrix}\mathcal{G}^1_{x, i} \\\mathcal{G}^2_{x, i}\end{bmatrix} \right\|^2\\
            &\leq 
            2\|\mathcal{P}_R^{\top}\|^2 L_f^2 (2\|\boldsymbol{\mathcal{X}}^1_{i-1} - \boldsymbol{\mathcal{X}}_{c,i-1}\|^2 + 2\|\boldsymbol{\mathcal{Y}}^2_{i-1} - \boldsymbol{\mathcal{Y}}_{c,i-1}\|^2 \\
            & \quad + 2\|\boldsymbol{\mathcal{X}}^2_{i-1} - \boldsymbol{\mathcal{X}}^\prime_{c,i-1}\|^2 + 2\|\boldsymbol{\mathcal{Y}}^1_{i-1} - \boldsymbol{\mathcal{Y}}^\prime_{c,i-1}\|^2 )\\
            & \quad 
            + 2\|\mathcal{P}_R^{\top}\|^2 K G^2\\
            & \leq 
            4 \|\mathcal{P}_R^{\top}\|^2 L_f^2 \left( \|\mathcal{P}_L\|^2\left\| \mathcal{P}_R^{\top} \begin{bmatrix}
    \boldsymbol{\mathcal{X}}^1_{i-1} \\
    \boldsymbol{\mathcal{X}}^2_{i-1}
\end{bmatrix}\right\|^2  \right.\\
& \quad \left. + \|\mathcal{Q}_L\|^2 \left\| \mathcal{Q}_R^{\top} \begin{bmatrix}
    \boldsymbol{\mathcal{Y}}^1_{i-1} \\
    \boldsymbol{\mathcal{Y}}^2_{i-1}
\end{bmatrix}\right\|^2\right) + 2\|\mathcal{P}_R^{\top}\|^2 K G^2
        \end{aligned}
    \end{equation}
    Combining \eqref{eq:PRTG_2} with \eqref{eq:exp_PRTG}, we arrive at:
    \begin{align}
    &\mathbb{E} \left\{\left\|\mathcal{P}_R^{\top} \begin{bmatrix}
        \boldsymbol{\mathcal{G}}^1_{x, i} \\
        \boldsymbol{\mathcal{G}}^2_{x, i}
    \end{bmatrix}\right\|^2\right\} \notag\\
    &\leq 
    8 \|\mathcal{P}_R^{\top}\|^2 L_f^2 \left( \|\mathcal{P}_L\|^2 \mathbb{E} \left\{\left\| \mathcal{P}_R^{\top} \begin{bmatrix}
    \boldsymbol{\mathcal{X}}^1_{i-1} \\
    \boldsymbol{\mathcal{X}}^2_{i-1}
\end{bmatrix}\right\|^2\right\}  \right. \notag\\
& \quad \left. + \|\mathcal{Q}_L\|^2 \mathbb{E} \left\{\left\| \mathcal{Q}_R^{\top} \begin{bmatrix}
    \boldsymbol{\mathcal{Y}}^1_{i-1} \\
    \boldsymbol{\mathcal{Y}}^2_{i-1}
\end{bmatrix}\right\|^2\right\}\right) + 2\|\mathcal{P}_R^{\top}\|^2 K (2G^2 + \sigma^2)
    \end{align}
\end{proof}
\vspace{-1.8em}
\subsection{Main Proof}
\label{proof:lemma2}
From Algorithm \ref{alg:network_learning_zerosum}, the recursions for $\boldsymbol{\mathcal{X}}^1_{i}$  and $\boldsymbol{\mathcal{X}}^2_{i}$ are given by:
\begin{align}
    \boldsymbol{\mathcal{X}}^1_{i} &= {\mathcal{A}^{(1)}}^{\top} \left(\boldsymbol{\mathcal{X}}^1_{i-1} - \mu \boldsymbol{\mathcal{G}}^1_{x, i} \right)\\
    \boldsymbol{\mathcal{X}}^2_{i} &= {\mathcal{C}^{(12)}}^{\top} \boldsymbol{\mathcal{X}}_{i}^1 +{\mathcal{C}^{(2)}}^{\top} \left(\boldsymbol{\mathcal{X}}^2_{i-1} - \mu \boldsymbol{\mathcal{G}}^2_{x, i}\right)
\end{align}
In matrix form we write:
    \begin{equation}
        \begin{bmatrix}
            \boldsymbol{\mathcal{X}}^1_{i} \\
            \boldsymbol{\mathcal{X}}^2_{i}
        \end{bmatrix} =  
        \begin{bmatrix}
            {\mathcal{A}^{(1)}}^{\top} & 0 \\
            {\mathcal{C}^{(12)}}^{\top}{\mathcal{A}^{(1)}}^{\top} & {\mathcal{C}^{(2)}}^{\top} 
        \end{bmatrix}
        \begin{bmatrix}
            \boldsymbol{\mathcal{X}}^1_{i-1} - \mu \boldsymbol{\mathcal{G}}^1_{x, i} \\
            \boldsymbol{\mathcal{X}}^2_{i-1} - \mu \boldsymbol{\mathcal{G}}^2_{x, i}
        \end{bmatrix}
    \end{equation} 

We define the transition matrix for $x$ as:
\begin{equation}
    B^{(x)} \triangleq 
    \begin{bmatrix}
    {A}^{(1)} & A^{(1)} C^{(12)}\\
    0 & {C^{(2)}} 
\end{bmatrix} \in \mathbb{R}^{K \times K}
\end{equation}
where the superscript $(x)$ indicates that this matrix is related to the information flow of strategy $x$. From Lemma \ref{lemma:BxBy}, we know that $B^{(x)}$ and $B^{(y)}$ have Jordan decompositions: $B^{(x)} = P M P^{-1}$, $B^{(y)} = Q N Q^{-1}$. Formal definitions of $\mathcal{P}_R, \mathcal{P}_L, \mathcal{M}_\eta, \mathcal{Q}_R, \mathcal{Q}_L, \mathcal{N}_\iota$ can be found in Remark \ref{remark:jordan}.

By Lemma \ref{lemma:X1iX2i}, we can obtain:
\begin{align}
    &\mathbb{E} \left\{\| \boldsymbol{\mathcal{X}}^1_{i} - \boldsymbol{\mathcal{X}}_{c, i} \|^2 + \| \boldsymbol{\mathcal{X}}^2_{i} - \boldsymbol{\mathcal{X}}^\prime_{c, i} \|^2\right\} \notag\\
    &\quad \leq \|\mathcal{P}_L\|^2 \mathbb{E} \left\{\left\| \mathcal{P}_R^{\top} \begin{bmatrix}
    \boldsymbol{\mathcal{X}}^1_{i} \\
    \boldsymbol{\mathcal{X}}^2_{i}
    \end{bmatrix}\right\|^2\right\}
\end{align}
\vspace{-1.3em}
Next,
\vspace{-1em}
\allowdisplaybreaks

\begin{align}
\label{eq:PRTX}
    &\left\| \mathcal{P}_R^{\top} \begin{bmatrix}
            \boldsymbol{\mathcal{X}}^1_{i} \\
            \boldsymbol{\mathcal{X}}^2_{i}
        \end{bmatrix}\right\|^2 \notag \\
    &= \left\| \mathcal{P}_R^{\top} {\mathcal{B}^{(x)}}^{\top} \begin{bmatrix}
    \boldsymbol{\mathcal{X}}^1_{i-1} - \mu \boldsymbol{\mathcal{G}}^1_{x, i} \\
    \boldsymbol{\mathcal{X}}^2_{i-1} - \mu \boldsymbol{\mathcal{G}}^2_{x, i}
\end{bmatrix}\right\|^2 \notag \\
    &\stackrel{(a)}{=} \left\| \mathcal{M}_\eta^{\top} \mathcal{P}_R^{\top} \begin{bmatrix}
    \boldsymbol{\mathcal{X}}^1_{i-1} - \mu \boldsymbol{\mathcal{G}}^1_{x, i} \\
    \boldsymbol{\mathcal{X}}^2_{i-1} - \mu \boldsymbol{\mathcal{G}}^2_{x, i}
\end{bmatrix}\right\|^2 \notag \\
    &\stackrel{(b)}{\leq} \left\| \mathcal{M}_\eta^{\top}\right\|^2 \left(\frac{1}{\left\| \mathcal{M}_\eta^{\top}\right\|}\left\|\mathcal{P}_R^{\top} \begin{bmatrix}
    \boldsymbol{\mathcal{X}}^1_{i-1}  \\
    \boldsymbol{\mathcal{X}}^2_{i-1}
\end{bmatrix}\right\|^2 \right. \notag \\
    &\quad \left. + \frac{1}{1-\left\| \mathcal{M}_\eta^{\top}\right\|} \mu^2 \left\|\mathcal{P}_R^{\top} \begin{bmatrix}
    \boldsymbol{\mathcal{G}}^1_{x, i} \\
    \boldsymbol{\mathcal{G}}^2_{x, i}
\end{bmatrix}\right\|^2\right) \notag \\
    &= \left\| \mathcal{M}_\eta^{\top}\right\| \left\|\mathcal{P}_R^{\top} \begin{bmatrix}
    \boldsymbol{\mathcal{X}}^1_{i-1}  \\
    \boldsymbol{\mathcal{X}}^2_{i-1}
\end{bmatrix}\right\|^2 \notag \\
    &\quad + \frac{\left\| \mathcal{M}_\eta^{\top}\right\|^2}{1-\left\| \mathcal{M}_\eta^{\top}\right\|} \mu^2 \left\|\mathcal{P}_R^{\top} \begin{bmatrix}
    \boldsymbol{\mathcal{G}}^1_{x, i} \\
    \boldsymbol{\mathcal{G}}^2_{x, i}
\end{bmatrix}\right\|^2.
\end{align}
    where  $(a)$ follows from 
    \begin{equation}
    \label{eq:PRB}
        \begin{aligned}
            &(P_{R}^{\top} {B^{(x)}}^{\top}) \otimes I_{M_1} \\
            &= \left(P_{R}^{\top} \begin{bmatrix} \mathbbm{1}_K & P_{L} \end{bmatrix} \begin{bmatrix} 1 & 0 \\ 0 & M_{\eta}^{\top} \end{bmatrix} \begin{bmatrix} {p^{(x)}}^{\top} \\ P_{R}^{\top} \end{bmatrix}\right) \otimes I_{M_1} \\
            &= \left(\begin{bmatrix} 0 & I_{K-1} \end{bmatrix} \begin{bmatrix} 1 & 0 \\ 0 & M_{\eta}^{\top} \end{bmatrix} \begin{bmatrix} {p^{(x)}}^{\top} \\ P_{R}^{\top} \end{bmatrix}\right) \otimes I_{M_1} \\
            &= (M_{\eta}^{\top} P_{R}^{\top}) \otimes I_{M_1}.
        \end{aligned}
    \end{equation}
and $(b)$ follows from the sub-multiplicative property of $\|\cdot\|$, the convexity of $\|\cdot\|^2$, and Jensen's inequality: $\|a+b\|^2 \leq \frac{1}{\alpha} \|a\|^2 + \frac{1}{1-\alpha} \|b\|^2$. 
Combining the results from \eqref{eq:PRTX} and Lemma \ref{lemma:PRTG}, we have:
    \begin{align}
    \label{eq:PRTX_final}
            &\mathbb{E} \left\{\left\| \mathcal{P}_R^{\top} \begin{bmatrix}
            \boldsymbol{\mathcal{X}}^1_{i} \\
            \boldsymbol{\mathcal{X}}^2_{i}
        \end{bmatrix}\right\|^2\right\} \notag\\
            &\stackrel{(a)}{\leq}
            \left\| \mathcal{M}_\eta^{\top}\right\| \mathbb{E} \left\{\left\|\mathcal{P}_R^{\top} \begin{bmatrix}\boldsymbol{\mathcal{X}}^1_{i-1}  \\\boldsymbol{\mathcal{X}}^2_{i-1} \end{bmatrix} \right\|^2\right\} \notag\\
            & \quad + \frac{\left\| \mathcal{M}_\eta^{\top}\right\|^2}{1-\left\| \mathcal{M}_\eta^{\top}\right\|} \mu^2 \mathbb{E} \left\{\left\|\mathcal{P}_R^{\top} \begin{bmatrix}\boldsymbol{\mathcal{G}}^1_{x, i} \\\boldsymbol{\mathcal{G}}^2_{x, i}\end{bmatrix} \right\|^2\right\} \notag\\
            &\stackrel{(b)}{\leq}
            \left\| \mathcal{M}_\eta^{\top}\right\| \mathbb{E} \left\{\left\|\mathcal{P}_R^{\top} \begin{bmatrix}\boldsymbol{\mathcal{X}}^1_{i-1}  \\\boldsymbol{\mathcal{X}}^2_{i-1} \end{bmatrix} \right\|^2\right\} + 8\frac{\left\| \mathcal{M}_\eta^{\top}\right\|^2 \|\mathcal{P}_R^{\top}\|^2 \|\mathcal{P}_L\|^2}{1-\left\| \mathcal{M}_\eta^{\top}\right\|} \notag\\
            & \quad L_f^2 \mu^2 \mathbb{E} \left\{\left\|\mathcal{P}_R^{\top} \begin{bmatrix}\boldsymbol{\mathcal{X}}^1_{i-1}  \\\boldsymbol{\mathcal{X}}^2_{i-1} \end{bmatrix} \right\|^2\right\}
            + 8\frac{\left\| \mathcal{M}_\eta^{\top}\right\|^2 \|\mathcal{P}_R^{\top}\|^2 \|\mathcal{Q}_L\|^2}{1-\left\| \mathcal{M}_\eta^{\top}\right\|} L_f^2 \mu^2 \notag\\
            & \quad \mathbb{E} \left\{\left\| \mathcal{Q}_R^{\top} \begin{bmatrix}
    \boldsymbol{\mathcal{Y}}^1_{i-1} \\
    \boldsymbol{\mathcal{Y}}^2_{i-1}
\end{bmatrix}\right\|^2\right\} + 2\frac{\left\| \mathcal{M}_\eta^{\top}\right\|^2\left\|\mathcal{P}_R^{\top}\right\|^2}{1-\left\| \mathcal{M}_\eta^{\top}\right\|} \mu^2 K (2G ^2 + \sigma^2)
        \end{align}
    where $(a)$ follows from \eqref{eq:PRTX}; $(b)$ follows from Lemma \ref{lemma:PRTG}. Similarly, we get:
    \begin{equation}
    \label{eq:QRTY}
        \begin{aligned}
           &\mathbb{E} \left\{\left\| \mathcal{Q}_R^{\top} \begin{bmatrix}\boldsymbol{\mathcal{Y}}^1_{i} \\\boldsymbol{\mathcal{Y}}^2_{i}\end{bmatrix}\right\|^2\right\} \\
            &\leq
            \left\| \mathcal{N}_\iota^{\top}\right\| \mathbb{E} \left\{\left\| \mathcal{Q}_R^{\top} \begin{bmatrix}\boldsymbol{\mathcal{Y}}^1_{i-1} \\\boldsymbol{\mathcal{Y}}^2_{i-1}\end{bmatrix}\right\|^2\right\} + 8\frac{\left\| \mathcal{N}_\iota^{\top}\right\|^2 \|\mathcal{Q}_R^{\top}\|^2 \|\mathcal{Q}_L\|^2}{1-\left\| \mathcal{N}_\iota^{\top}\right\|} \\
            & \quad L_f^2 \mu^2 \mathbb{E} \left\{\left\| \mathcal{Q}_R^{\top} \begin{bmatrix}\boldsymbol{\mathcal{Y}}^1_{i-1} \\\boldsymbol{\mathcal{Y}}^2_{i-1}\end{bmatrix}\right\|^2\right\}
            + 8\frac{\left\| \mathcal{N}_\iota^{\top}\right\|^2 \|\mathcal{Q}_R^{\top}\|^2 \|\mathcal{P}_L\|^2}{1-\left\| \mathcal{N}_\iota^{\top}\right\|} L_f^2 \mu^2 \\
            & \quad \mathbb{E} \left\{\left\|\mathcal{P}_R^{\top} \begin{bmatrix}\boldsymbol{\mathcal{X}}^1_{i-1}  \\\boldsymbol{\mathcal{X}}^2_{i-1} \end{bmatrix} \right\|^2\right\} + 2\frac{\left\| \mathcal{N}_\iota^{\top}\right\|^2\left\|\mathcal{Q}_R^{\top}\right\|^2}{1-\left\| \mathcal{N}_\iota^{\top}\right\|} \mu^2 K (2G ^2 + \sigma^2)
        \end{aligned}
    \end{equation}
    For compactness, we introduce the following scalar coefficients:
    \begin{align}
    a_1 &= \frac{\left\| \mathcal{M}_\eta^{\top}\right\|^2 \|\mathcal{P}_R^{\top}\|^2 \|\mathcal{P}_L\|^2}{1-\left\| \mathcal{M}_\eta^{\top}\right\|}\\
    b_1 &= \frac{\left\| \mathcal{M}_\eta^{\top}\right\|^2 \|\mathcal{P}_R^{\top}\|^2 \|\mathcal{Q}_L\|^2}{1-\left\| \mathcal{M}_\eta^{\top}\right\|}\\
    c_1 &= \frac{\left\| \mathcal{M}_\eta^{\top}\right\|^2\left\|\mathcal{P}_R^{\top}\right\|^2}{1-\left\| \mathcal{M}_\eta^{\top}\right\|}\\
    a_2 &= \frac{\left\| \mathcal{N}_\iota^{\top}\right\|^2 \|\mathcal{Q}_R^{\top}\|^2 \|\mathcal{Q}_L\|^2}{1-\left\| \mathcal{N}_\iota^{\top}\right\|}\\
    b_2 &= \frac{\left\| \mathcal{N}_\iota^{\top}\right\|^2 \|\mathcal{Q}_R^{\top}\|^2 \|\mathcal{P}_L\|^2}{1-\left\| \mathcal{N}_\iota^{\top}\right\|}\\
    c_2 &= \frac{\left\| \mathcal{N}_\iota^{\top}\right\|^2\left\|\mathcal{Q}_R^{\top}\right\|^2}{1-\left\| \mathcal{N}_\iota^{\top}\right\|}
\end{align}
Adding the results from \eqref{eq:PRTX_final} and \eqref{eq:QRTY}, we obtain:
    \begin{equation}
        \begin{aligned}
            &\mathbb{E} \left\{\left\| \mathcal{P}_R^{\top} \begin{bmatrix}
            \boldsymbol{\mathcal{X}}^1_{i} \\
            \boldsymbol{\mathcal{X}}^2_{i}
        \end{bmatrix}\right\|^2\right\} + \mathbb{E} \left\{\left\| \mathcal{Q}_R^{\top} \begin{bmatrix}\boldsymbol{\mathcal{Y}}^1_{i} \\\boldsymbol{\mathcal{Y}}^2_{i}\end{bmatrix}\right\|^2\right\}\\
            & \leq
            \left(\left\| \mathcal{M}_\eta^{\top}\right\| + 8 a_1 L_f^2 \mu^2 + 8 b_2 L_f^2 \mu^2\right) \mathbb{E} \left\{\left\|\mathcal{P}_R^{\top} \begin{bmatrix}\boldsymbol{\mathcal{X}}^1_{i-1}  \\\boldsymbol{\mathcal{X}}^2_{i-1} \end{bmatrix} \right\|^2\right\} \\
            & \quad
            + \left( \left\|\mathcal{N}_\iota^{\top}\right\| + 8 a_2 L_f^2 \mu^2 + 8 b_1 L_f^2 \mu^2\right) \mathbb{E} \left\{\left\| \mathcal{Q}_R^{\top} \begin{bmatrix}\boldsymbol{\mathcal{Y}}^1_{i-1} \\\boldsymbol{\mathcal{Y}}^2_{i-1}\end{bmatrix}\right\|^2\right\} \\
            & \quad + 2(c_1 + c_2)\mu^2 K (2G ^2 + \sigma^2)\\
            & \stackrel{(a)}{\leq}
            \alpha \left(\mathbb{E} \left\{\left\|\mathcal{P}_R^{\top} \begin{bmatrix}\boldsymbol{\mathcal{X}}^1_{i-1}  \\\boldsymbol{\mathcal{X}}^2_{i-1} \end{bmatrix} \right\|^2\right\} + \mathbb{E} \left\{\left\| \mathcal{Q}_R^{\top} \begin{bmatrix}\boldsymbol{\mathcal{Y}}^1_{i-1} \\\boldsymbol{\mathcal{Y}}^2_{i-1}\end{bmatrix}\right\|^2\right\}\right) \\
            & \quad + 2(c_1 + c_2)\mu^2 K (2G ^2 + \sigma^2)\\
        \end{aligned}   
    \end{equation}
    where  in $(a)$ we define:
    \begin{align}
    \label{eq:zerosum_alpha}
        \alpha &= \max \bigl\{\left\| \mathcal{M}_\eta^{\top}\right\| + 8 a_1 L_f^2 \mu^2 + 8 b_2 L_f^2 \mu^2, \notag\\
        & \qquad \left\|\mathcal{N}_\iota^{\top}\right\| + 8 a_2 L_f^2 \mu^2 + 8 b_1 L_f^2 \mu^2\bigr\} < 1
    \end{align}
    Noting that $\left\| \mathcal{M}_\eta^{\top}\right\| < 1$ is independent of $\mu$, we have $1 - \alpha = O(1)$. Thus, 
\begin{equation}
    \begin{aligned}
        &\mathbb{E} \left\{\left\| \mathcal{P}_R^{\top} \begin{bmatrix}
            \boldsymbol{\mathcal{X}}^1_{i} \\
            \boldsymbol{\mathcal{X}}^2_{i}
        \end{bmatrix}\right\|^2\right\} + \mathbb{E} \left\{\left\| \mathcal{Q}_R^{\top} \begin{bmatrix}\boldsymbol{\mathcal{Y}}^1_{i} \\\boldsymbol{\mathcal{Y}}^2_{i}\end{bmatrix}\right\|^2\right\} \\
        & \stackrel{(a)}{\leq}
        O(\mu^2) + \frac{2(c_1 + c_2)\mu^2 K (2G ^2 + \sigma^2)}{1-\alpha} = O(\mu^2)
    \end{aligned}
\end{equation}
where $(a)$ holds when 
\begin{equation}
    \alpha^i \leq O(\mu^2) \Longleftrightarrow i \geq i_\alpha =\frac{\log \left(O\left(\mu^2\right)\right)}{\log \left(\alpha\right)}
\end{equation}
Therefore, by Lemma \ref{lemma:X1iX2i}, we conclude: 
\begin{equation}
\begin{aligned}
    &\mathbb{E} \{\|\boldsymbol{\mathcal{X}}^1_{i} - \boldsymbol{\mathcal{X}}_{c,i}\|^2 + \|\boldsymbol{\mathcal{Y}}^2_{i} - \boldsymbol{\mathcal{Y}}_{c,i}\|^2 + \\
    & \qquad \|\boldsymbol{\mathcal{X}}^2_{i} - \boldsymbol{\mathcal{X}}^\prime_{c,i}\|^2 + \|\boldsymbol{\mathcal{Y}}^1_{i} - \boldsymbol{\mathcal{Y}}^\prime_{c,i}\|^2\} \\
    &\leq
    \max\{\|\mathcal{P}_L\|^2, \|\mathcal{Q}_L\|^2 \}\\
    &\quad \left(\mathbb{E} \left\{\left\| \mathcal{P}_R^{\top} \begin{bmatrix}
            \boldsymbol{\mathcal{X}}^1_{i} \\
            \boldsymbol{\mathcal{X}}^2_{i}
        \end{bmatrix}\right\|^2\right\} + \mathbb{E} \left\{\left\| \mathcal{Q}_R^{\top} \begin{bmatrix}\boldsymbol{\mathcal{Y}}^1_{i} \\\boldsymbol{\mathcal{Y}}^2_{i}\end{bmatrix}\right\|^2\right\}\right) \\
        &\leq O(\mu^2) 
\end{aligned}
\end{equation}
when $i \geq i_\alpha$.

\section{Proof of Lemma \ref{lemma:zerosum_within_cross} for \textbf{ATC-C}}
\label{proof:lemma2_atcc}
In this section, we prove Lemma \ref{lemma:zerosum_within_cross} when employing \mbox{\textbf{ATC-C}.}

Note that by Assumption \ref{ass:within_conn}, 
$A^{(1)}$
and $A^{(2)}$
have Jordan decompositions 
$A^{(1)} = V J V^{-1}$ and $A^{(2)} = U K U^{-1}$ \cite{sayed2014adaptation}, where 
\begin{equation}
    V = \begin{bmatrix}
        p^{(1)} & V_{R}
    \end{bmatrix}, \
    J = \begin{bmatrix}
        1 & 0 \\
        0 & J_{\varepsilon}
    \end{bmatrix}, \
    V^{-1} = \begin{bmatrix}
        \mathbbm{1}^{\top} \\
        V_{L}^{\top}
    \end{bmatrix}
\end{equation}
\begin{equation}
    U = \begin{bmatrix}
        p^{(2)} & U_{R}
    \end{bmatrix}, \
    K = \begin{bmatrix}
        1 & 0 \\
        0 & K_{\gamma}
    \end{bmatrix}, \
    U^{-1} = \begin{bmatrix}
        \mathbbm{1}^{\top} \\
        U_{L}^{\top}
    \end{bmatrix}
\end{equation}
Here, $J_{\varepsilon}, K_\gamma$ are 
 block Jordan matrices with the eigenvalues on the diagonal and $\varepsilon, \gamma$ on the first lower subdiagonal, respectively. Using the above notation,
we can verify that 
\begin{align}
    \mathbbm{1}^{\top}_{K_1} V_{R} = 0, \mathbbm{1}^{\top}_{K_2} U_{R} = 0, V_{L}^{\top} V_{R} = I, U_{L}^{\top} U_{R} = I
\end{align}
We also define
\begin{align}
\mathcal{V}_{L} &\triangleq V_{L} \otimes I_{M_1},\!\! &
\mathcal{J}_{\varepsilon} &\triangleq J_{\varepsilon} \otimes I_{M_1},\!\! &
\mathcal{V}_{R} &\triangleq V_{R} \otimes I_{M_1} \\
\mathcal{U}_{L} &\triangleq U_{L} \otimes I_{M_2},\!\! &
\mathcal{K}_{\gamma} &\triangleq K_{\gamma} \otimes I_{M_2},\!\! &
\mathcal{U}_{R} &\triangleq U_{R} \otimes I_{M_2}
\end{align}

\subsection{Key Lemmas}
\begin{lemma}
\label{lemma:plain_within}
    Under Assumptions \ref{ass:within_conn}, for \textbf{ATC-C}, we have:
\begin{align}
    & \|\boldsymbol{\mathcal{X}}^1_{i} - \boldsymbol{\mathcal{X}}_{c,i}\|^2 \leq \|\mathcal{V}_{L} \|^2 \| \mathcal{V}_{R}^{\top} \boldsymbol{\mathcal{X}}^1_i \|^2  \\
    & \|\boldsymbol{\mathcal{Y}}^2_{i} - \boldsymbol{\mathcal{Y}}_{c,i}\|^2 \leq
    \|\mathcal{U}_L\|^2  \left\|\mathcal{U}_{R}^{\top}\boldsymbol{\mathcal{Y}}^2_{i}\right\|^2
\end{align}
\end{lemma}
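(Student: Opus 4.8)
The plan is to mirror the argument already used for Lemma \ref{lemma:X1iX2i}, but the situation here is simpler: for \textbf{ATC-C} the quantities $\boldsymbol{\mathcal{X}}^1_i$ and $\boldsymbol{\mathcal{Y}}^2_i$ evolve purely through the within-team combination matrices $A^{(1)}$ and $A^{(2)}$, with no cross-team block entering their recursions. Consequently I would work directly with the Jordan decomposition $A^{(1)} = VJV^{-1}$ introduced above, instead of the augmented matrices $B^{(x)}, B^{(y)}$ needed in the \textbf{ATC-ITC} analysis.

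First I would express the team centroid as a projection of the networked vector. Using the definitions \eqref{eq:centroid_x}--\eqref{eq:centroid_network_x} together with the identity already recorded in \eqref{eq:centroid_X_evolution}, one has $\boldsymbol{\mathcal{X}}_{c,i} = (\mathbbm{1}_{K_1}{p^{(1)}}^{\top} \otimes I_{M_1})\boldsymbol{\mathcal{X}}^1_i$, so that
\begin{equation}
\boldsymbol{\mathcal{X}}^1_i - \boldsymbol{\mathcal{X}}_{c,i} = \left((I_{K_1} - \mathbbm{1}_{K_1}{p^{(1)}}^{\top})\otimes I_{M_1}\right)\boldsymbol{\mathcal{X}}^1_i .
\end{equation}

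The key step, and the only non-mechanical one, is to rewrite the projector $I_{K_1} - \mathbbm{1}_{K_1}{p^{(1)}}^{\top}$ in factored form. Taking the transpose of the decomposition identity $VV^{-1} = p^{(1)}\mathbbm{1}_{K_1}^{\top} + V_R V_L^{\top} = I_{K_1}$ yields $\mathbbm{1}_{K_1}{p^{(1)}}^{\top} + V_L V_R^{\top} = I_{K_1}$, and hence $I_{K_1} - \mathbbm{1}_{K_1}{p^{(1)}}^{\top} = V_L V_R^{\top}$. Substituting this and using $\mathcal{V}_L \mathcal{V}_R^{\top} = (V_L V_R^{\top})\otimes I_{M_1}$ gives $\boldsymbol{\mathcal{X}}^1_i - \boldsymbol{\mathcal{X}}_{c,i} = \mathcal{V}_L \mathcal{V}_R^{\top} \boldsymbol{\mathcal{X}}^1_i$; the submultiplicative norm bound $\|\mathcal{V}_L \mathcal{V}_R^{\top}\boldsymbol{\mathcal{X}}^1_i\|^2 \leq \|\mathcal{V}_L\|^2\,\|\mathcal{V}_R^{\top}\boldsymbol{\mathcal{X}}^1_i\|^2$ then closes the first inequality.

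The bound for $\boldsymbol{\mathcal{Y}}^2_i$ follows by repeating the same three moves with $(A^{(2)}, U, p^{(2)}, U_L, U_R, M_2, K_2)$ in place of $(A^{(1)}, V, p^{(1)}, V_L, V_R, M_1, K_1)$, invoking the companion identity $I_{K_2} - \mathbbm{1}_{K_2}{p^{(2)}}^{\top} = U_L U_R^{\top}$. I do not anticipate a genuine obstacle here; the one point requiring care is tracking the transpose so that the factorization produces $V_L V_R^{\top}$ (whose left factor yields $\mathcal{V}_L$) rather than $V_R V_L^{\top}$, since it is precisely this orientation that makes the right-hand side $\|\mathcal{V}_L\|^2\,\|\mathcal{V}_R^{\top}\boldsymbol{\mathcal{X}}^1_i\|^2$ emerge after applying submultiplicativity.
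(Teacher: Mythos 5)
Your proposal is correct and follows essentially the same route as the paper's proof: both express the centroid as $(\mathbbm{1}_{K_1}{p^{(1)}}^{\top}\otimes I_{M_1})\boldsymbol{\mathcal{X}}^1_i$, invoke the transposed Jordan-decomposition identity $I_{K_1}-\mathbbm{1}_{K_1}{p^{(1)}}^{\top}=V_LV_R^{\top}$ to factor the deviation as $\mathcal{V}_L\mathcal{V}_R^{\top}\boldsymbol{\mathcal{X}}^1_i$, and finish with submultiplicativity of the norm. Your remark about the transpose orientation (obtaining $V_LV_R^{\top}$ rather than $V_RV_L^{\top}$) is exactly the point encoded in the paper's step $(a)$, so there is nothing missing.
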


\begin{proof}
    We prove for $\|\boldsymbol{\mathcal{X}}^1_{i} - \boldsymbol{\mathcal{X}}_{c,i}\|^2$; the proof for $\|\boldsymbol{\mathcal{Y}}^2_{i} - \boldsymbol{\mathcal{Y}}_{c,i}\|^2$ follows a similar argument.
    For $\| \boldsymbol{\mathcal{X}}^1_i - \boldsymbol{\mathcal{X}}_{c, i} \|^2$, we have:
    \begin{align}
    \label{eq:lemma1_first}
        &\| \boldsymbol{\mathcal{X}}^1_i - (\mathbbm{1} {p^{(1)}}^{\top} \otimes I_{M_1})\boldsymbol{ \mathcal{X}}^1_{i} \|^2 \notag\\
        &= \| (I_{K_1} \otimes I_{M_1})\boldsymbol{\mathcal{X}}^1_i - (\mathbbm{1} {p^{(1)}}^{\top} \otimes I_{M_1})\boldsymbol{ \mathcal{X}}^1_{i} \|^2 \notag\\
        & \stackrel{(a)}{=} \|\mathcal{V}_{L} \mathcal{V}_{R}^{\top} \boldsymbol{\mathcal{X}}^1_i\|^2 \notag\\
        & \leq \|\mathcal{V}_{L} \|^2 \| \mathcal{V}_{R}^{\top} \boldsymbol{\mathcal{X}}^1_i \|^2
    \end{align}
    where $(a)$ follows from the fact that 
    \begin{align}
        (I_{K_1} \otimes I_{M_1}) &= (V^{-1} \otimes I_{M_1})^{\top} (V \otimes I_{M_1})^{\top} \notag\\
        &= \left(\mathbbm{1} {p^{(1)}}^{\top} + V_{L} V_{R}^{\top}\right) \otimes I_{M_1}
    \end{align}
\end{proof}

\begin{lemma}
\label{lemma:plain_cross}
    Under Assumption \ref{ass:within_conn}, \ref{ass:cross-conn_strong}, for \textbf{ATC-C}, we have:
    \begin{align}    &\|\boldsymbol{\mathcal{X}}^2_{i} - \boldsymbol{\mathcal{X}}^\prime_{c,i}\|^2 \leq K_2\|\mathcal{V}_{L} \|^2 \| \mathcal{V}_{R}^{\top} \boldsymbol{\mathcal{X}}^1_i \|^2\\    &\|\boldsymbol{\mathcal{Y}}^1_{i} - \boldsymbol{\mathcal{Y}}^\prime_{c,i}\|^2 \leq K_1 \|\mathcal{U}_L\|^2  \left\|\mathcal{U}_{R}^{\top}\boldsymbol{\mathcal{Y}}^2_{i}\right\|^2
    \end{align} \hfill\qed
\end{lemma}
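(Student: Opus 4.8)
The plan is to reduce the cross-team deviation to the within-team deviation already controlled by Lemma~\ref{lemma:plain_within}, by exploiting the single cross-team combination step of \textbf{ATC-C} together with the left-stochasticity of $C^{(12)}$. I will prove the bound for $\boldsymbol{\mathcal{X}}$; the bound for $\boldsymbol{\mathcal{Y}}$ then follows by the symmetric argument with $C^{(21)}$, $\mathcal{U}_L$, $\mathcal{U}_R$ in place of $C^{(12)}$, $\mathcal{V}_L$, $\mathcal{V}_R$.

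First I would recall from Algorithm~\ref{alg:network_learning_general_strong} that under a strong cross-team subgraph the inferred iterate satisfies $\boldsymbol{\mathcal{X}}^2_{i} = {\mathcal{C}^{(12)}}^{\top}\boldsymbol{\mathcal{X}}^1_{i}$. The key observation is that ${\mathcal{C}^{(12)}}^{\top}$ maps the Team-1 centroid block onto the Team-2 centroid block: since $C^{(12)}$ is left-stochastic by Assumption~\ref{ass:cross-conn_strong}, we have ${C^{(12)}}^{\top}\mathbbm{1}_{K_1} = \mathbbm{1}_{K_2}$, and therefore, using the definitions in \eqref{eq:centroid_x}--\eqref{eq:centroid_network_x},
\begin{equation}
{\mathcal{C}^{(12)}}^{\top}\boldsymbol{\mathcal{X}}_{c,i} = \left({C^{(12)}}^{\top}\mathbbm{1}_{K_1}\right)\otimes \boldsymbol{x}_{c,i} = \mathbbm{1}_{K_2}\otimes \boldsymbol{x}_{c,i} = \boldsymbol{\mathcal{X}}^{\prime}_{c,i}.
\end{equation}
Subtracting this identity from the recursion gives $\boldsymbol{\mathcal{X}}^2_{i} - \boldsymbol{\mathcal{X}}^{\prime}_{c,i} = {\mathcal{C}^{(12)}}^{\top}\left(\boldsymbol{\mathcal{X}}^1_{i} - \boldsymbol{\mathcal{X}}_{c,i}\right)$, so the cross-team deviation is just a linear image of the within-team deviation.

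It then remains to control the operator norm $\|{\mathcal{C}^{(12)}}^{\top}\|$. Sub-multiplicativity yields $\|\boldsymbol{\mathcal{X}}^2_{i} - \boldsymbol{\mathcal{X}}^{\prime}_{c,i}\|^2 \le \|{\mathcal{C}^{(12)}}^{\top}\|^2\,\|\boldsymbol{\mathcal{X}}^1_{i} - \boldsymbol{\mathcal{X}}_{c,i}\|^2$. Because the Kronecker factor $I_{M_1}$ leaves the spectral norm unchanged and the spectral norm is transpose-invariant, $\|{\mathcal{C}^{(12)}}^{\top}\| = \|C^{(12)}\|$. Since $C^{(12)}$ has nonnegative entries bounded by $1$ and $K_2$ columns each summing to $1$, its Frobenius norm obeys $\|C^{(12)}\|_F^2 = \sum_{\ell,k} c_{\ell k}^2 \le \sum_{\ell,k} c_{\ell k} = K_2$, whence $\|C^{(12)}\|^2 \le \|C^{(12)}\|_F^2 \le K_2$. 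Combining this with Lemma~\ref{lemma:plain_within}, namely $\|\boldsymbol{\mathcal{X}}^1_{i} - \boldsymbol{\mathcal{X}}_{c,i}\|^2 \le \|\mathcal{V}_L\|^2\|\mathcal{V}_R^{\top}\boldsymbol{\mathcal{X}}^1_i\|^2$, delivers the claimed inequality $\|\boldsymbol{\mathcal{X}}^2_{i} - \boldsymbol{\mathcal{X}}^{\prime}_{c,i}\|^2 \le K_2\|\mathcal{V}_L\|^2\|\mathcal{V}_R^{\top}\boldsymbol{\mathcal{X}}^1_i\|^2$.

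I expect no genuine obstacle here, as the argument is essentially a one-line reduction; the only points requiring care are verifying the centroid-preservation identity ${\mathcal{C}^{(12)}}^{\top}\boldsymbol{\mathcal{X}}_{c,i} = \boldsymbol{\mathcal{X}}^{\prime}_{c,i}$, which hinges precisely on the left-stochasticity in Assumption~\ref{ass:cross-conn_strong} and on correctly matching the two distinct centroid copies $\boldsymbol{\mathcal{X}}_{c,i}$ and $\boldsymbol{\mathcal{X}}^{\prime}_{c,i}$ of differing dimension, and pinning down the constant. The appearance of $K_2$ rather than $1$ reflects that $C^{(12)}$ is only column-stochastic, so its spectral norm need not be at most $1$, and the crude Frobenius bound $\sqrt{K_2}$ is exactly what the lemma records.
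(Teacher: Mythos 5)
Your proposal is correct and follows essentially the same route as the paper: both exploit the left-stochasticity of the cross-team block to get the identity $\boldsymbol{\mathcal{X}}^2_{i} - \boldsymbol{\mathcal{X}}^{\prime}_{c,i} = {\mathcal{C}^{(12)}}^{\top}(\boldsymbol{\mathcal{X}}^1_{i} - \boldsymbol{\mathcal{X}}_{c,i})$, bound $\|{\mathcal{C}^{(12)}}^{\top}\|^2$ by the team size, and then invoke Lemma~\ref{lemma:plain_within}. The only cosmetic difference is how the constant is pinned down — you use $\|C^{(12)}\|^2 \le \|C^{(12)}\|_F^2 \le K_2$ via entries in $[0,1]$, while the paper uses $\rho(C^{(21)}{C^{(21)}}^{\top}) \le \mathbbm{1}^{\top}C^{(21)}{C^{(21)}}^{\top}\mathbbm{1} = K_1$ — both elementary and yielding the same bound.
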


\begin{proof}
    We only prove for $\|\boldsymbol{\mathcal{Y}}^1_{i} - \boldsymbol{\mathcal{Y}}^\prime_{c,i}\|^2$ as the proof for $\|\boldsymbol{\mathcal{X}}^2_{i} - \boldsymbol{\mathcal{X}}^\prime_{c,i}\|^2$ is nearly the same. As stated in Assumption \ref{ass:cross-conn_strong}, we know that $C^{(21)}$ is left-stochastic, meaning that ${C^{(21)}}^{\top} \mathbbm{1}_{K_2} = \mathbbm{1}_{K_1}$. Consequently, we can derive the following:
    \begin{equation}
        \begin{aligned}  \boldsymbol{\mathcal{Y}}^\prime_{c,i-1} 
        &= \left(\mathbbm{1}_{K_1} {p^{(2)}}^{\top} \otimes I_{M_2}\right) \boldsymbol{\mathcal{Y}}^2_{i-1} \\
        &= 
        \left({C^{(21)}}^{\top} \mathbbm{1}_{K_2} {p^{(2)}}^{\top} \otimes I_{M_2}\right) \boldsymbol{\mathcal{Y}}^2_{i-1} \\
        &=
        \left({C^{(21)}}^{\top} \otimes I_{M_2}\right) \left(\mathbbm{1}_{K_2} {p^{(2)}}^{\top} \otimes I_{M_2}\right) \boldsymbol{\mathcal{Y}}^2_{i-1} \\
        &=
        {\mathcal{C}^{(21)}}^{\top} \boldsymbol{\mathcal{Y}}_{c,i-1}.
        \end{aligned}
    \end{equation}
    Moreover, from Algorithm \ref{alg:network_learning_general_strong}, we have $\boldsymbol{\mathcal{Y}}^1_{i-1} = {\mathcal{C}^{(21)}}^{\top} \boldsymbol{\mathcal{Y}}^2_{i-1}$. As a result, we get
    \begin{equation}
        \boldsymbol{\mathcal{Y}}^1_{i-1} -\boldsymbol{\mathcal{Y}}^\prime_{c,i-1} = {\mathcal{C}^{(21)}}^{\top} (\boldsymbol{\mathcal{Y}}^2_{i-1} - \boldsymbol{\mathcal{Y}}_{c,i-1})
    \end{equation}
    Then, by Assumption \ref{ass:cross-conn_strong}, we get:
    \begin{align}
    \label{eq:C21}
        \|{\mathcal{C}^{(21)}}^{\top}\|^2 &= \rho (C^{(21)} {C^{(21)}}^{\top}) \leq \|C^{(21)} {C^{(21)}}^{\top}\|_1 \notag\\
        &\leq \sum_{ij} (C^{(21)} {C^{(21)}}^{\top})_{i,j} = \mathbbm{1}^{\top} C^{(21)} {C^{(21)}}^{\top} \mathbbm{1} \notag\\
        & = K_1
    \end{align}
    Therefore, we conclude that:
    \begin{align}      \|\boldsymbol{\mathcal{Y}}^1_{i} - \boldsymbol{\mathcal{Y}}^\prime_{c,i}\|^2 &\leq  \|{\mathcal{C}^{(21)}}^{\top}\|^2 \|\boldsymbol{\mathcal{Y}}^2_{i-1} - \boldsymbol{\mathcal{Y}}_{c,i-1}\|^2 \notag \\
    &\stackrel{(a)}{\leq} K_1 \|\mathcal{U}_L\|^2  \left\|\mathcal{U}_{R}^{\top}\boldsymbol{\mathcal{Y}}^2_{i}\right\|^2
    \end{align}
    where $(a)$ follows from Lemma \ref{lemma:plain_within} and \eqref{eq:C21}. 
\end{proof}

\begin{lemma}
\label{lemma:VRTG}
    Under Assumptions \ref{ass:within_conn}, \ref{ass:cross-conn_strong},  \ref{ass:lip}, \ref{ass:bdis}, \ref{ass:gradnoise}, for \textbf{ATC-C}, we have:
    \begin{align}
        &\mathbb{E} \{\|\mathcal{V}_{R}^{\top} \boldsymbol{\mathcal{G}}^1_{x, i}\|^2\} \notag\\
        &\leq 8 \|\mathcal{V}_{R}^{\top}\|^2 L_f^2 \left(\|\mathcal{V}_{L} \|^2 \mathbb{E} \{\| \mathcal{V}_{R}^{\top} \boldsymbol{\mathcal{X}}^1_{i-1} \|^2\} \right. \notag\\
        &\quad \left. + K_1 \|\mathcal{U}_L\|^2  \mathbb{E} \{\left\|\mathcal{U}_{R}^{\top}\boldsymbol{\mathcal{Y}}^2_{i-1}\right\|^2\}\right) + 2\|\mathcal{V}_{R}^{\top}\|^2 K_1 (2G^2 +\sigma^2)
    \end{align}
    and $\mathbb{E} \{\|\mathcal{U}_{R}^{\top} \boldsymbol{\mathcal{G}}^2_{y, i}\|^2\}$ can be bounded similarly. \hfill\qed
\end{lemma}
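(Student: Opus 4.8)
The plan is to mirror the argument of Lemma~\ref{lemma:PRTG} (the ATC-ITC counterpart), but working with the Jordan factors $\mathcal{V}_R,\mathcal{V}_L$ of $A^{(1)}$ and $\mathcal{U}_R,\mathcal{U}_L$ of $A^{(2)}$ in place of $\mathcal{P}_R,\mathcal{P}_L,\mathcal{Q}_R,\mathcal{Q}_L$, and to exploit the fact that $\boldsymbol{\mathcal{G}}^1_{x,i}$ now ranges only over Team~1 ($k=1,\dots,K_1$). First I would write $\boldsymbol{\mathcal{G}}^1_{x,i}=\mathcal{G}^1_{x,i}+\boldsymbol{\mathcal{S}}^1_{x,i}$, where $\mathcal{G}^1_{x,i}\triangleq\mathrm{col}\{\nabla_x J_k(\boldsymbol{x}_{k,i-1},\boldsymbol{y}_{k,i-1})\}_{k=1}^{K_1}$ is the deterministic realization and $\boldsymbol{\mathcal{S}}^1_{x,i}$ the gradient noise. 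Applying $\|a+b\|^2\le 2\|a\|^2+2\|b\|^2$, sub-multiplicativity of $\|\cdot\|$, taking the conditional expectation given $\boldsymbol{\mathcal{F}}_{i-1}$, and invoking the unbiasedness and variance bound $\sigma^2$ of Assumption~\ref{ass:gradnoise}, I obtain $\mathbb{E}\{\|\mathcal{V}_R^{\top}\boldsymbol{\mathcal{G}}^1_{x,i}\|^2\mid\boldsymbol{\mathcal{F}}_{i-1}\}\le 2\|\mathcal{V}_R^{\top}\mathcal{G}^1_{x,i}\|^2+2\|\mathcal{V}_R^{\top}\|^2 K_1\sigma^2$, which isolates the deterministic term.

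Next I would bound $\|\mathcal{V}_R^{\top}\mathcal{G}^1_{x,i}\|^2$. The key algebraic device is that $V_R^{\top}\mathbbm{1}_{K_1}=0$, so $\mathcal{V}_R^{\top}\big(\mathbbm{1}_{K_1}\otimes\nabla_x J^{(1)}(\boldsymbol{x}_{c,i-1},\boldsymbol{y}_{c,i-1})\big)=0$ and I may subtract this consensus vector for free. Splitting the resulting centered gradient into a Lipschitz part $\mathcal{G}^1_{x,i}-\mathrm{col}\{\nabla_x J_k(\boldsymbol{x}_{c,i-1},\boldsymbol{y}_{c,i-1})\}_{k=1}^{K_1}$ and a disagreement part $\mathrm{col}\{\nabla_x J_k(\boldsymbol{x}_{c,i-1},\boldsymbol{y}_{c,i-1})-\nabla_x J^{(1)}(\boldsymbol{x}_{c,i-1},\boldsymbol{y}_{c,i-1})\}_{k=1}^{K_1}$, I apply $\|a+b\|^2\le 2\|a\|^2+2\|b\|^2$, then bound the first piece using the $L_f$-Lipschitz condition (Assumption~\ref{ass:lip}) and the second using the gradient-disagreement bound $G$ (Assumption~\ref{ass:bdis}). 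This yields $\|\mathcal{V}_R^{\top}\mathcal{G}^1_{x,i}\|^2\le 2\|\mathcal{V}_R^{\top}\|^2 L_f^2\sum_{k=1}^{K_1}\big(\|\boldsymbol{x}_{k,i-1}-\boldsymbol{x}_{c,i-1}\|+\|\boldsymbol{y}_{k,i-1}-\boldsymbol{y}_{c,i-1}\|\big)^2+2\|\mathcal{V}_R^{\top}\|^2 K_1 G^2$.

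Finally, I would convert the sums of deviations into network quantities via the identities $\sum_{k=1}^{K_1}\|\boldsymbol{x}_{k,i-1}-\boldsymbol{x}_{c,i-1}\|^2=\|\boldsymbol{\mathcal{X}}^1_{i-1}-\boldsymbol{\mathcal{X}}_{c,i-1}\|^2$ and $\sum_{k=1}^{K_1}\|\boldsymbol{y}_{k,i-1}-\boldsymbol{y}_{c,i-1}\|^2=\|\boldsymbol{\mathcal{Y}}^1_{i-1}-\boldsymbol{\mathcal{Y}}^{\prime}_{c,i-1}\|^2$, expand the square with $(a+b)^2\le 2a^2+2b^2$, and then apply Lemma~\ref{lemma:plain_within} to bound $\|\boldsymbol{\mathcal{X}}^1_{i-1}-\boldsymbol{\mathcal{X}}_{c,i-1}\|^2\le\|\mathcal{V}_L\|^2\|\mathcal{V}_R^{\top}\boldsymbol{\mathcal{X}}^1_{i-1}\|^2$ and Lemma~\ref{lemma:plain_cross} to bound $\|\boldsymbol{\mathcal{Y}}^1_{i-1}-\boldsymbol{\mathcal{Y}}^{\prime}_{c,i-1}\|^2\le K_1\|\mathcal{U}_L\|^2\|\mathcal{U}_R^{\top}\boldsymbol{\mathcal{Y}}^2_{i-1}\|^2$. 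Combining these with the noise term produces exactly the stated inequality, including the $K_1$ factor on the $\mathcal{U}$ term. The point requiring care, rather than any genuine difficulty, is the two-variable dependence of $\boldsymbol{\mathcal{G}}^1_{x,i}$: because each Team~1 gradient depends on the agent's \emph{own} iterate $\boldsymbol{x}_{k,i-1}$ and on its \emph{inferred} adversary iterate $\boldsymbol{y}_{k,i-1}$, the two deviation sums must be routed through \emph{different} consensus lemmas—the within-team Lemma~\ref{lemma:plain_within} for the $x$-deviation and the cross-team Lemma~\ref{lemma:plain_cross} for the $y$-deviation—since the latter clusters around Team~2's centroid $\boldsymbol{y}_{c,i-1}$ rather than Team~1's. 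The bound for $\mathbb{E}\{\|\mathcal{U}_R^{\top}\boldsymbol{\mathcal{G}}^2_{y,i}\|^2\}$ then follows by the symmetric argument with the roles of the two teams interchanged.
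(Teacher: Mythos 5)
Your proposal is correct and follows essentially the same route as the paper's own proof: the noise/deterministic split with conditional expectation, the use of $V_R^{\top}\mathbbm{1}_{K_1}=0$ to center by the consensus gradient, the Lipschitz/disagreement decomposition, and the routing of the $x$- and $y$-deviations through Lemmas~\ref{lemma:plain_within} and \ref{lemma:plain_cross}, respectively. Your closing remark about why the inferred adversary deviation must go through the cross-team lemma (clustering around Team 2's centroid) is exactly the point the paper's proof relies on, and the constants ($8$, $K_1$, $2G^2+\sigma^2$) work out identically.
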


\begin{proof}
    For $k \in \mathcal{N}^{(1)}$, we define 
    \begin{align}
        \!\!\!\!\!\boldsymbol{s}_{k, i}^x &\triangleq \widehat{\nabla_x J}_k\left(\boldsymbol{x}_{k, i-1}, \boldsymbol{y}_{k, i-1}\right) - \nabla_x J_k\left(\boldsymbol{x}_{k, i-1}, \boldsymbol{y}_{k, i-1}\right) \\
        \!\!\!\!\!\mathcal{G}^1_{x, i} &\triangleq \text{col} \{\nabla_x J_k\left(\boldsymbol{x}_{k, i-1}, \boldsymbol{y}_{k, i-1}\right)\}_{k=1}^{K_1}
    \end{align}
    We can bound $\|\mathcal{V}_{R}^{\top} \boldsymbol{\mathcal{G}}^1_{x, i}\|^2$ by:
    \begin{equation}
    \label{eq:VRTG}
        \begin{aligned}
            &\|\mathcal{V}_{R}^{\top} \boldsymbol{\mathcal{G}}^1_{x, i}\|^2 \\
            &= \left\|\mathcal{V}_{R}^{\top} \left( \mathcal{G}^1_{x, i} + \text{col}\{\boldsymbol{s}_{k, i}^x\}_{k=1}^{K_1}\right) \right\|^2 \\
            & \leq 2\|\mathcal{V}_{R}^{\top} \mathcal{G}^1_{x, i}\|^2 + 2\|\mathcal{V}_{R}^{\top} \text{col}\{\boldsymbol{s}_{k, i}^x\}_{k=1}^{K_1} \|^2 \\
            &\leq 2\|\mathcal{V}_{R}^{\top} \mathcal{G}^1_{x, i}\|^2 + 2\|\mathcal{V}_{R}^{\top}\|^2 \|\text{col}\{\boldsymbol{s}_{k, i}^x\}_{k=1}^{K_1} \|^2 \\
            &= 2\|\mathcal{V}_{R}^{\top} \mathcal{G}^1_{x, i}\|^2 + 2 \|\mathcal{V}_{R}^{\top}\|^2 \sum_{k=1}^{K_1} \|\boldsymbol{s}_{k, i}^x\|^2
        \end{aligned}
    \end{equation}
    Now we take the conditional expectation of $\|\mathcal{V}_{R}^{\top} \boldsymbol{\mathcal{G}}^1_{x, i}\|^2$, since $\boldsymbol{x}_{k, i-1}, \boldsymbol{y}_{k, i-1} \in \boldsymbol{\mathcal{F}}_{i-1}$:
    \begin{equation}
    \label{EVRG}
        \mathbb{E} \{\|\mathcal{V}_{R}^{\top} \boldsymbol{\mathcal{G}}^1_{x, i}\|^2 \mid \boldsymbol{\mathcal{F}}_{i-1}\} \leq 2\|\mathcal{V}_{R}^{\top} \mathcal{G}^1_{x, i}\|^2 + 2\|\mathcal{V}_{R}^{\top}\|^2 K_1 \sigma^2
    \end{equation}
    which follows from Assumption \ref{ass:gradnoise}. The key lies in how to upper bound $\|\mathcal{V}_{R_1}^{\top} \mathcal{G}^1_{x, i}\|^2$. Recall that $\boldsymbol{x}_{c, i-1}, \boldsymbol{y}_{c, i-1}$ denote the centroids of $x, y$ for Teams 1, 2 respectively. Then,
    \begin{equation}
    \label{eq:hardlemma1}
        \begin{aligned}
            &\|\mathcal{V}_{R}^{\top} \mathcal{G}^1_{x, i}\|^2 \\
            &= \|\mathcal{V}_{R}^{\top} \text{col}\{\nabla_x J_k\left(\boldsymbol{x}_{k, i-1}, \boldsymbol{y}_{k, i-1}\right)\}_{k=1}^{K_1}\|^2 \\
            &\stackrel{(a)}{=} \|\mathcal{V}_{R}^{\top} \left(\text{col}\{\nabla_x J_k\left(\boldsymbol{x}_{k, i-1}, \boldsymbol{y}_{k, i-1}\right)\}_{k=1}^{K_1} - \text{col}\{\nabla_x J_k\left(\boldsymbol{x}_{c, i-1}, \right. \right.\\
            & \quad \left. \left.\boldsymbol{y}_{c, i-1}\right)\}_{k=1}^{K_1} \right) + \mathcal{V}_{R}^{\top} \left( \text{col} \{\nabla_x J_k\left(\boldsymbol{x}_{c, i-1}, \boldsymbol{y}_{c, i-1}\right)\}_{k=1}^{K_1} \right.\\
            & \left. \quad - (\mathbbm{1} {p^{(1)}}^{\top}) \otimes I_{M_1} \text{col} \{\nabla_x J_k\left(\boldsymbol{x}_{c, i-1}, \boldsymbol{y}_{c, i-1}\right)\}_{k=1}^{K_1} \right)\|^2 \\
            &\leq 2\|\mathcal{V}_{R}^{\top}\|^2 \sum_{k=1}^{K_1} \|\nabla_x J_k\left(\boldsymbol{x}_{k, i-1}, \boldsymbol{y}_{k, i-1}\right) - \nabla_x J_k\left(\boldsymbol{x}_{c, i-1}, \right.\\
            & \left. \quad \boldsymbol{y}_{c, i-1}\right)\|^2 + 2\|\mathcal{V}_{R}^{\top}\|^2 \sum_{k=1}^{K_1} \|\nabla_x J_k\left(\boldsymbol{x}_{c, i-1}, \boldsymbol{y}_{c, i-1}\right) \\
            & \quad - \nabla_x J^{(1)}\left(\boldsymbol{x}_{c, i-1}, \boldsymbol{y}_{c, i-1}\right)\|^2 \\
            &\stackrel{(b)}{\leq} 2\|\mathcal{V}_{R}^{\top}\|^2 \sum_{k=1}^{K_1} L_f^2 (\|\boldsymbol{x}_{k, i-1} - \boldsymbol{x}_{c, i-1}\| + \|\boldsymbol{y}_{k, i-1} - \boldsymbol{y}_{c, i-1}\|)^2 \\
            & \quad + 2\|\mathcal{V}_{R}^{\top}\|^2 K_1 G^2\\
            &\stackrel{(c)}{\leq} 2\|\mathcal{V}_{R}^{\top}\|^2 L_f^2 (2\|\boldsymbol{\mathcal{X}}^1_{i-1} - \boldsymbol{\mathcal{X}}_{c, i-1} \|^2 + 2\|\boldsymbol{\mathcal{Y}}^1_{i-1} - \boldsymbol{\mathcal{Y}}^{\prime}_{c,i-1}\|^2) \\
            & \quad + 2\|\mathcal{V}_{R}^{\top}\|^2 K_1 G^2
        \end{aligned}
    \end{equation}
    where $(a)$ follows from the fact that $V_R^{\top} \mathbbm{1} = 0$; $(b)$ follows from Assumption \ref{ass:lip} and \ref{ass:bdis}; $(c)$ follows from the inequality $(a+b)^2 \leq 2a^2 + 2b^2$ and the facts that $\sum_{k=1}^{K_1} \|\boldsymbol{x}_{k, i-1} - \boldsymbol{x}_{c, i-1}\|^2 = \|\boldsymbol{\mathcal{X}}^1_{i-1} - \boldsymbol{\mathcal{X}}_{c, i-1} \|^2$, $\sum_{k=1}^{K_1}\|\boldsymbol{y}_{k, i-1} - \boldsymbol{y}_{c, i-1}\|^2 = \|\boldsymbol{\mathcal{Y}}^1_{i-1} - \boldsymbol{\mathcal{Y}}^{\prime}_{c,i-1}\|^2$.
    By Lemmas \ref{lemma:plain_within} and \ref{lemma:plain_cross}, we can further bound \eqref{eq:hardlemma1} by:
    \begin{align}
    \label{eq:VRTG1}
        &\|\mathcal{V}_{R}^{\top} \mathcal{G}^1_{x, i}\|^2 \notag\\
        &\leq 2\|\mathcal{V}_{R}^{\top}\|^2 L_f^2 (2\|\boldsymbol{\mathcal{X}}^1_{i-1} - \boldsymbol{\mathcal{X}}_{c, i-1} \|^2 + 2\|\boldsymbol{\mathcal{Y}}^1_{i-1} - \boldsymbol{\mathcal{Y}}^{\prime}_{c,i-1}\|^2) \notag\\
        & \quad + 2\|\mathcal{V}_{R}^{\top}\|^2 K_1 G^2 \notag\\
        & \leq 2\|\mathcal{V}_{R}^{\top}\|^2 L_f^2 (2\|\mathcal{V}_{L} \|^2 \| \mathcal{V}_{R}^{\top} \boldsymbol{\mathcal{X}}^1_{i-1} \|^2 + 2K_1 \|\mathcal{U}_L\|^2  \left\|\mathcal{U}_{R}^{\top}\boldsymbol{\mathcal{Y}}^2_{i-1}\right\|^2) \notag\\
        & \quad + 2\|\mathcal{V}_{R}^{\top}\|^2 K_1 G^2
    \end{align}
    Combining \eqref{eq:VRTG1} and \eqref{EVRG}, we get:
    \begin{align}
        &\mathbb{E} \{\|\mathcal{V}_{R}^{\top} \boldsymbol{\mathcal{G}}^1_{x, i}\|^2\} \notag\\
        &\leq 8 \|\mathcal{V}_{R}^{\top}\|^2 L_f^2 \left(\|\mathcal{V}_{L} \|^2 \mathbb{E} \{\| \mathcal{V}_{R}^{\top} \boldsymbol{\mathcal{X}}^1_{i-1} \|^2\} \right. \notag\\
        &\quad \left. + K_1 \|\mathcal{U}_L\|^2  \mathbb{E} \{\left\|\mathcal{U}_{R}^{\top}\boldsymbol{\mathcal{Y}}^2_{i-1}\right\|^2\}\right) + 2\|\mathcal{V}_{R}^{\top}\|^2 K_1 (2G^2 +\sigma^2)
    \end{align}
\end{proof}
\vspace{-1.5em}
\subsection{Main Proof}
\label{proof:lemma4}
This proof below is similar to the proof of Section \ref{proof:lemma2}. 

By Lemmas \ref{lemma:plain_within} and \ref{lemma:plain_cross}, we have the following bound:
\begin{align}
\label{eq:plain_first}
    & \mathbb{E} \{\|\boldsymbol{\mathcal{X}}^1_{i} - \boldsymbol{\mathcal{X}}_{c,i}\|^2 + \|\boldsymbol{\mathcal{Y}}^2_{i} - \boldsymbol{\mathcal{Y}}_{c,i}\|^2 + \notag\\
    & \qquad \|\boldsymbol{\mathcal{X}}^2_{i} - \boldsymbol{\mathcal{X}}^\prime_{c,i}\|^2 + \|\boldsymbol{\mathcal{Y}}^1_{i} - \boldsymbol{\mathcal{Y}}^\prime_{c,i}\|^2\} \notag\\
    & \leq (1+K_1) \|\mathcal{V}_{L} \|^2 \mathbb{E} \{\| \mathcal{V}_{R}^{\top} \boldsymbol{\mathcal{X}}^1_i \|^2\} \notag \\
    & \quad +(1+K_2)\|\mathcal{U}_L\|^2 \mathbb{E} \{ \left\|\mathcal{U}_{R}^{\top}\boldsymbol{\mathcal{Y}}^2_{i}\right\|^2\}
\end{align}
To bound $\| \mathcal{V}_{R}^{\top} \boldsymbol{\mathcal{X}}^1_i \|^2$, we have:
 \begin{equation}
    \begin{aligned}
    \label{initial}
            &\| \mathcal{V}_{R}^{\top} \boldsymbol{\mathcal{X}}^1_i \|^2 \\
            & = \|\mathcal{V}_{R}^{\top} {\mathcal{A}^{(1)}}^{\top} (\boldsymbol{\mathcal{X}}^1_{i-1} - \mu \boldsymbol{\mathcal{G}}^1_{x, i})\|^2 \\
            &\stackrel{(a)}{=} \|\mathcal{J}_{\varepsilon}^{\top} \mathcal{V}_{R}^{\top} (\boldsymbol{\mathcal{X}}^1_{i-1} - \mu \boldsymbol{\mathcal{G}}^1_{x, i})\|^2 \\
            & \leq \|\mathcal{J}_{\varepsilon}^{\top}\|^2 \|\mathcal{V}_{R}^{\top} (\boldsymbol{\mathcal{X}}^1_{i-1} - \mu \boldsymbol{\mathcal{G}}^1_{x, i})\|^2 \\
            & \stackrel{(b)}{\leq} \|\mathcal{J}_{\varepsilon}^{\top}\|^2 \left(\frac{1}{\left\|\mathcal{J}_{\varepsilon}^{\top}\right\|}\left\|\mathcal{V}_{R}^{\top}\boldsymbol{\mathcal{X}}^1_{i-1}\right\|^2 + \frac{1}{1-\left\|\mathcal{J}_{\varepsilon}^{\top}\right\|} \mu^2 \|\mathcal{V}_{R}^{\top} \boldsymbol{\mathcal{G}}^1_{x, i}\|^2\right) \\
            & = \|\mathcal{J}_{\varepsilon}^{\top} \|\left\|\mathcal{V}_{R}^{\top}\boldsymbol{\mathcal{X}}^1_{i-1}\right\|^2 + \frac{\|\mathcal{J}_{\varepsilon}^{\top}\|^2}{1-\left\|\mathcal{J}_{\varepsilon}^{\top}\right\|} \mu^2 \|\mathcal{V}_{R}^{\top} \boldsymbol{\mathcal{G}}^1_{x, i}\|^2
        \end{aligned}
    \end{equation}
    where  $(a)$ follows from steps similar to \eqref{eq:PRB}
    and $(b)$ follows from the convexity of $\|\cdot\|^2$ and Jensen's inequality, i.e. $\|a+b\|^2 \leq \frac{1}{\alpha} \|a\|^2 + \frac{1}{1-\alpha} \|b\|^2$.
    
     Therefore, by Lemma \ref{lemma:VRTG}, we obtain:
    \begin{equation}
    \label{hardlemma1_EVRX}
        \begin{aligned}
            &\mathbb{E} \{\|\mathcal{V}_R^{\top} \boldsymbol{\mathcal{X}}^1_i\|^2 \}\\
            & \leq
            \|\mathcal{J}_{\varepsilon}^{\top} \| \mathbb{E} \{\left\|\mathcal{V}_{R}^{\top}\boldsymbol{\mathcal{X}}^1_{i-1}\right\|^2\} + \frac{\|\mathcal{J}_{\varepsilon}^{\top}\|^2}{1-\left\|\mathcal{J}_{\varepsilon}^{\top}\right\|} \mu^2 \mathbb{E} \{\|\mathcal{V}_{R}^{\top} \boldsymbol{\mathcal{G}}^1_{x, i}\|^2\} \\
            &\leq       \left(\|\mathcal{J}_{\varepsilon}^{\top} \| + 8 \frac{\|\mathcal{V}_{L}^{\top}\|^2 \|\mathcal{J}_{\varepsilon}^{\top}\|^2 \|\mathcal{V}_{R}^{\top}\|^2}{1-\left\|\mathcal{J}_{\varepsilon}^{\top}\right\|} \mu^2  L_f^2\right) \mathbb{E} \{\left\|\mathcal{V}_{R}^{\top}\boldsymbol{\mathcal{X}}^1_{i-1}\right\|^2\} \\
        & \quad
        + 8 \frac{\|\mathcal{J}_{\varepsilon}^{\top}\|^2 \|\mathcal{V}_{R}^{\top}\|^2 \|\mathcal{U}_L\|^2}{1-\left\|\mathcal{J}_{\varepsilon}^{\top}\right\|} \mu^2  L_f^2 K_1 \mathbb{E} \{\|\mathcal{U}_R^{\top} \boldsymbol{\mathcal{Y}}^2_{i-1}\|^2 \} \\
        & \quad + 2 \frac{\|\mathcal{J}_{\varepsilon}^{\top}\|^2 \|\mathcal{V}_{R}^{\top}\|^2}{1-\left\|\mathcal{J}_{\varepsilon}^{\top}\right\|} \mu^2 K_1 (\sigma^2 + 2G^2)
        \end{aligned}
    \end{equation}
  
    Similarly, we obtain the following:
\begin{equation}
    \begin{aligned}
        &\mathbb{E} \{\left\|\mathcal{U}_{R}^{\top}\boldsymbol{\mathcal{Y}}^2_{i}\right\|^2\}\\
        & \leq
        \left(\left\|\mathcal{K}_{\gamma}^{\top}\right\| + 8 \frac{\|\mathcal{U}_{L}^{\top}\|^2 \|\mathcal{K}_{\gamma}^{\top}\|^2 \|\mathcal{U}_{R}^{\top}\|^2}{1-\left\|\mathcal{K}_{\gamma}^{\top}\right\|} \mu^2  L_f^2\right) \mathbb{E} \{\left\|\mathcal{U}_{R}^{\top}\boldsymbol{\mathcal{Y}}^2_{i-1}\right\|^2\} \\
        & \quad
        + 8 \frac{ \|\mathcal{K}_{\gamma}^{\top}\|^2 \|\mathcal{U}_{R}^{\top}\|^2 \|\mathcal{V}_{L}^{\top}\|^2}{1-\left\|\mathcal{K}_{\gamma}^{\top}\right\|} \mu^2  L_f^2 K_2 \mathbb{E} \{\left\|\mathcal{V}_{R}^{\top}\boldsymbol{\mathcal{X}}^1_{i-1}\right\|^2\} \\
        & \quad + 2 \frac{\|\mathcal{K}_{\gamma}^{\top}\|^2 \|\mathcal{U}_{R}^{\top}\|^2}{1-\left\|\mathcal{K}_{\gamma}^{\top}\right\|} \mu^2 K_2 (\sigma^2 + 2G^2) 
    \end{aligned}
\end{equation}
For compactness, we introduce the following scalar coefficients:
\begin{align}
    d_1 &= \frac{\|\mathcal{V}_{L}^{\top}\|^2 \|\mathcal{J}_{\varepsilon}^{\top}\|^2 \|\mathcal{V}_{R}^{\top}\|^2}{1-\left\|\mathcal{J}_{\varepsilon}^{\top}\right\|}\\
    e_1 &= \frac{\|\mathcal{J}_{\varepsilon}^{\top}\|^2 \|\mathcal{V}_{R}^{\top}\|^2 \|\mathcal{U}_L\|^2}{1-\left\|\mathcal{J}_{\varepsilon}^{\top}\right\|}\\
    f_1 &= \frac{\|\mathcal{J}_{\varepsilon}^{\top}\|^2 \|\mathcal{V}_{R}^{\top}\|^2}{1-\left\|\mathcal{J}_{\varepsilon}^{\top}\right\|}\\
    d_2 &= \frac{\|\mathcal{U}_{L}^{\top}\|^2 \|\mathcal{K}_{\gamma}^{\top}\|^2 \|\mathcal{U}_{R}^{\top}\|^2}{1-\left\|\mathcal{K}_{\gamma}^{\top}\right\|}\\
    e_2 &= \frac{ \|\mathcal{K}_{\gamma}^{\top}\|^2 \|\mathcal{U}_{R}^{\top}\|^2 \|\mathcal{V}_{L}^{\top}\|^2}{1-\left\|\mathcal{K}_{\gamma}^{\top}\right\|}\\
    f_2 &= \frac{\|\mathcal{K}_{\gamma}^{\top}\|^2 \|\mathcal{U}_{R}^{\top}\|^2}{1-\left\|\mathcal{K}_{\gamma}^{\top}\right\|}
\end{align}

Therefore, we have:
\begin{equation}
\label{eq:VRXURY}
    \begin{aligned}
        & \mathbb{E} \{\left\|\mathcal{V}_{R}^{\top}\boldsymbol{\mathcal{X}}^1_{i}\right\|^2\} + \mathbb{E} \{\left\|\mathcal{U}_{R}^{\top}\boldsymbol{\mathcal{Y}}^2_{i}\right\|^2\} \\
        &\leq (\|\mathcal{J}_{\varepsilon}^{\top} \| + 8 d_1 \mu^2  L_f^2 + 8 e_2 \mu^2  L_f^2 K_2) \mathbb{E} \{\left\|\mathcal{V}_{R}^{\top}\boldsymbol{\mathcal{X}}^1_{i-1}\right\|^2\} \\
        & \quad + (\left\|\mathcal{K}_{\gamma}^{\top}\right\| + 8 d_2 \mu^2  L_f^2 + 8 e_1 \mu^2 L_f^2 K_1) \mathbb{E} \{\left\|\mathcal{U}_{R}^{\top}\boldsymbol{\mathcal{Y}}^2_{i-1}\right\|^2\}\\
        & \qquad + 2 \mu^2 (f_1  K_1 + f_2 K_2) (\sigma^2 + 2G^2)\\
        & \stackrel{(a)}{\leq}
        \beta (\mathbb{E} \{\left\|\mathcal{V}_{R}^{\top}\boldsymbol{\mathcal{X}}^1_{i-1}\right\|^2\} + \mathbb{E} \{\left\|\mathcal{U}_{R}^{\top}\boldsymbol{\mathcal{Y}}^2_{i-1}\right\|^2\}) \\
        & \quad +  2 \mu^2 (f_1  K_1 + f_2 K_2) (\sigma^2 + 2G^2)
    \end{aligned}
\end{equation}
where $(a)$ uses the fact that, for sufficiently small $\mu$ and properly selected $\epsilon, \gamma$, we have:
\begin{align}
\label{eq:beta}
    \beta &= \max \{\|\mathcal{J}_{\varepsilon}^{\top} \| + 8 \mu^2 (d_1 + e_2 K_2) L_f^2,  \notag\\
    & \quad\left\|\mathcal{K}_{\gamma}^{\top}\right\| + 8 \mu^2 (d_2 + e_1 K_1) L_f^2\} < 1
\end{align}
Noting that $\|\mathcal{J}_{\varepsilon}^{\top} \| < 1$ is independent of $\mu$, we have $1 - \beta = O(1)$. Therefore, 
\begin{equation}
    \begin{aligned}
        &\mathbb{E} \{\left\|\mathcal{V}_{R}^{\top}\boldsymbol{\mathcal{X}}^1_{i}\right\|^2\} + \mathbb{E} \{\left\|\mathcal{U}_{R}^{\top}\boldsymbol{\mathcal{Y}}^2_{i}\right\|^2\} \\
        & \leq
        \beta^{i+1} (\mathbb{E} \{\left\|\mathcal{V}_{R}^{\top}\boldsymbol{\mathcal{X}}_{-1}\right\|^2\} + \mathbb{E} \{\left\|\mathcal{U}_{R}^{\top}\boldsymbol{\mathcal{Y}}_{-1}\right\|^2\}) \\
        & \quad +  2 \mu^2 (f_1  K_1 + f_2 K_2) (\sigma^2 + 2G^2) \sum_{j=0}^{i} \beta^j \\
        & \stackrel{(a)}{\leq}
        O(\mu^2) + \frac{2 \mu^2 (f_1  K_1 + f_2 K_2) (\sigma^2 + 2G^2)}{1-\beta} \\
        & = O(\mu^2)
    \end{aligned}
\end{equation}
where $(a)$ holds when 
\begin{equation}
    \beta^i \leq O(\mu^2) \Longleftrightarrow i \geq i_\beta = \frac{\log \left(O\left(\mu^2\right)\right)}{\log \left(\beta\right)}
\end{equation}
By \eqref{eq:plain_first}, we conclude that:
\begin{align}
    & \mathbb{E} \{\|\boldsymbol{\mathcal{X}}^1_{i} - \boldsymbol{\mathcal{X}}_{c,i}\|^2 + \|\boldsymbol{\mathcal{Y}}^2_{i} - \boldsymbol{\mathcal{Y}}_{c,i}\|^2 + \\
    & \qquad \|\boldsymbol{\mathcal{X}}^2_{i} - \boldsymbol{\mathcal{X}}^\prime_{c,i}\|^2 + \|\boldsymbol{\mathcal{Y}}^1_{i} - \boldsymbol{\mathcal{Y}}^\prime_{c,i}\|^2\} \leq O(\mu^2)
\end{align}
when $i \geq i_\beta$.
\vspace{-0.3em}
\section{Proof of Lemma \ref{lemma:zerosum_learning_dynamic}}
\label{proof:lemma3}

As can be verified and already shown in \eqref{eq:centroid_x_evolution}, we have the following recursions for the centroids:
    \begin{align}
        \boldsymbol{x}_{c,i} &= \boldsymbol{x}_{c,i-1} - \mu \sum_{k \in \mathcal{N}^{(1)}} p_k \widehat{\nabla_x J}_k (\boldsymbol{x}_{k,i-1}, \boldsymbol{y}_{k,i-1})  \label{eq:learning_centroid_x}\\
        \boldsymbol{y}_{c,i} &= \boldsymbol{y}_{c,i-1} - \mu \sum_{k\in \mathcal{N}^{(2)}} p_{k} \widehat{\nabla_y J}_{k} (\boldsymbol{x}_{k,i-1}, \boldsymbol{y}_{k,i-1}) \label{eq:learning_centroid_y}
    \end{align}
    Note that $p = [p^{(1)};p^{(2)}]$.
   Therefore, for $\boldsymbol{x}_{c,i}$ from (\ref{eq:learning_centroid_x}), we have:
   \begin{equation}
   \label{eq:learning_1}
        \begin{aligned}
            \!\!\!\!\!&\boldsymbol{x}_{c,i} = \boldsymbol{x}_{c,i-1} - \mu \sum_{k \in \mathcal{N}^{(1)}} p_k \nabla_x J_k (\boldsymbol{x}_{c,i-1}, \boldsymbol{y}_{c,i-1}) \\
            & 
             +  \underline{\mu \sum_{k \in \mathcal{N}^{(1)}} p_k (\nabla_x J_k (\boldsymbol{x}_{c,i-1}, \boldsymbol{y}_{c,i-1}) - \nabla_x J_k (\boldsymbol{x}_{k,i-1}, \boldsymbol{y}_{k,i-1}))}\\
            & 
            \underline{+  \mu \sum_{k \in \mathcal{N}^{(1)}} p_k (\nabla_x J_k (\boldsymbol{x}_{k,i-1}, \boldsymbol{y}_{k,i-1}) - \widehat{\nabla_x J}_k (\boldsymbol{x}_{k,i-1}, \boldsymbol{y}_{k,i-1}))}
        \end{aligned}   
   \end{equation}
   Let $\boldsymbol{d}^{(x)}_{c,i}$ denote the underlined terms. According to the relationship between $J^{(1)}(\cdot)$ and $J_k(\cdot)$ as shown in \eqref{eq:global_game_ya}, we can rewrite (\ref{eq:learning_1}) as 
   \begin{equation}
      \boldsymbol{x}_{c,i} = \boldsymbol{x}_{c,i-1} - \mu \nabla_x J^{(1)}(\boldsymbol{x}_{c,i-1}, \boldsymbol{y}_{c, i-1}) + \boldsymbol{d}^{(x)}_{c,i} 
   \end{equation}
   Next we show that $\mathbb{E} \{\|\boldsymbol{d}^{(x)}_{c,i}\|^2\} \leq O(\mu^2)$. Indeed:
   \begin{equation}
   \label{eq:d_ki}
       \begin{aligned}
           & \mathbb{E} \{\|\boldsymbol{d}^{(x)}_{c,i}\|^2\}\\
           & \stackrel{(a)}{\leq}
           2 \mu^2 \sum_{k \in \mathcal{N}^{(1)}} p_k \mathbb{E} \{\|\nabla_x J_k (\boldsymbol{x}_{c,i-1}, \boldsymbol{y}_{c,i-1}) - \nabla_x J_k (\boldsymbol{x}_{k,i-1}, \\
           & \quad \boldsymbol{y}_{k,i-1})\|^2\} 
           + 2 \mu^2 \sum_{k \in \mathcal{N}^{(1)}} p_k \mathbb{E} \{\|\nabla_x J_k (\boldsymbol{x}_{k,i-1}, \boldsymbol{y}_{k,i-1}) \\
           & \quad -  \widehat{\nabla_x J}_k (\boldsymbol{x}_{k,i-1}, \boldsymbol{y}_{k,i-1})\|^2\}\\
           & \stackrel{(b)}{\leq}
           2 \mu^2 \sum_{k \in \mathcal{N}^{(1)}} p_k L_f^2 \mathbb{E} \{2 \|\boldsymbol{x}_{c,i-1} - \boldsymbol{x}_{k,i-1}\|^2 + 2 \|\boldsymbol{y}_{c,i-1} \\
           & \quad - \boldsymbol{y}_{k,i-1}\|^2\} 
           + 2 \mu^2 \sum_{k \in \mathcal{N}^{(1)}} p_k \sigma^2 \\
           &\stackrel{(c)}{\leq}
           4 \mu^2 L_f^2 O(\mu^2) +  2 \mu^2 \sigma^2 \\
           &= 
           O(\mu^2)
       \end{aligned}
   \end{equation}
   where $(a)$ follows from Jensen's inequality $\|a+b\|^2 \leq 2\|a\|^2 + 2\|b\|^2$; $(b)$ follows from Assumption \ref{ass:lip} and Assumption \ref{ass:gradnoise}: $\mathbb{E} \{\|\nabla_x J_k (\boldsymbol{x}_{k,i-1}, \boldsymbol{y}_{k,i-1}) - \widehat{\nabla_x J}_k (\boldsymbol{x}_{k,i-1}, \boldsymbol{y}_{k,i-1})\|^2 \mid \boldsymbol{\mathcal{F}}_{i-1}\} \leq \sigma^2$; $(c)$ follows from Lemmas \ref{lemma:zerosum_within_cross}.
   
   We will show similar results for $\boldsymbol{y}_{c,i}$ from (\ref{eq:learning_centroid_y}) in the following. After some calculation, we get:
   \begin{equation}
   \label{eq:learning_2}
        \begin{aligned}
            & \boldsymbol{y}_{c,i-1} = \boldsymbol{y}_{c,i} - \mu \sum_{k\in \mathcal{N}^{(2)}} p_k \nabla_y J_k (\boldsymbol{x}_{c,i-1}, \boldsymbol{y}_{c,i-1}) \\
            &
            + \underline{\mu \sum_{k \in \mathcal{N}^{(2)}} p_k (\nabla_y J_k (\boldsymbol{x}_{c,i-1}, \boldsymbol{y}_{c,i-1}) - \nabla_y J_k (\boldsymbol{x}_{k,i-1}, \boldsymbol{y}_{k,i-1}))}\\
            &
            \underline{+ \mu \sum_{k\in \mathcal{N}^{(2)}} p_k (\nabla_y J_k (\boldsymbol{x}_{k,i-1}, \boldsymbol{y}_{k,i-1}) - \widehat{\nabla_y J}_k (\boldsymbol{x}_{k,i-1}, \boldsymbol{y}_{k,i-1}))}
        \end{aligned}   
   \end{equation}
   Let $\boldsymbol{d}_{c,i}^{(y)}$ denote the underlined terms. From the relationship between $J^{(2)}$ and $J_k$ as shown in \eqref{eq:global_game_yb}, we can easily show: $\boldsymbol{y}_{c,i} = \boldsymbol{y}_{c,i-1} - \mu  \nabla_y J^{(2)}(\boldsymbol{x}_{c,i-1}, \boldsymbol{y}_{c,i-1}) + \boldsymbol{d}_{c,i}^{(y)}$. What remains is to
   show $\mathbb{E} \{\|\boldsymbol{d}_{c,i}^{(y)}\|^2\} \leq O(\mu^2)$:
   \begin{equation}
       \begin{aligned}
           & \mathbb{E} \{\|\boldsymbol{d}_{c,i}^{(y)}\|^2\}\\
           & \leq
           2 \mu^2 \sum_{k\in \mathcal{N}^{(2)}} p_k \mathbb{E} \{\|\nabla_y J_k (\boldsymbol{x}_{c,i-1}, \boldsymbol{y}_{c,i-1}) - \nabla_y J_k (\boldsymbol{x}_{k,i-1}, \\
           & \quad \boldsymbol{y}_{k,i-1})\|^2\}
           + 2 \mu^2 \sum_{k\in \mathcal{N}^{(2)}} p_k \mathbb{E} \{\|\nabla_y J_k (\boldsymbol{x}_{k,i-1}, \boldsymbol{y}_{k,i-1}) \\
           & \quad - \widehat{\nabla_y J}_k (\boldsymbol{x}_{k,i-1}, \boldsymbol{y}_{k,i-1})\|^2\} \\
           & \leq
           2 \mu^2 \sum_{k\in \mathcal{N}^{(2)}} p_k L_f^2 \mathbb{E} \{2 \|\boldsymbol{x}_{c,i-1} - \boldsymbol{x}_{k,i-1}\|^2 + 2 \|\boldsymbol{y}_{c,i-1} \\
           & \quad - \boldsymbol{y}_{k,i-1}\|^2\} + 2 \mu^2 \sum_{k\in \mathcal{N}^{(2)}} p_k \sigma^2 \\
           &\stackrel{(a)}{\leq}
           4 \mu^2 L_f^2 O(\mu^2) +  2 \mu^2 \sigma^2 \\
           &= 
           O(\mu^2)
       \end{aligned}
   \end{equation}
    where $(a)$ follows from Lemma \ref{lemma:zerosum_within_cross}. Now we define 
    \begin{equation}
    \label{definition_Dci}
        \boldsymbol{d}_{c,i} \triangleq \left[\begin{array}{c}
             \boldsymbol{d}^{(x)}_{c,i} \\
             \boldsymbol{d}_{c,i}^{(y)} 
        \end{array}\right]
    \end{equation}
    We arrive at the conclusion that $\mathbb{E} \{\|\boldsymbol{d}_{c,i}\|^2\} \leq O(\mu^2)$, after sufficient iterations (i.e., $i \geq i_\alpha$ for \textbf{ATC-ITC}, $i \geq i_\beta$ for \textbf{ATC-C}).
\vspace{-0.4em}
\section{Proof of Theorem \ref{thm:zerosum_conv}}
\label{proof:thm1}
From Assumption \ref{ass:lip}, we have: 
    \begin{equation}
    \label{eq:operator_lip}
        \|F(z_1) - F(z_2)\| \leq 2L_f \|z_1 -z_2\|
    \end{equation}
    We set $L \triangleq 2L_f$. From Lemma \ref{lemma:zerosum_learning_dynamic}, we have after sufficient iterations:
    \begin{equation}
        \begin{aligned} 
            & \mathbb{E} \{\left\|\boldsymbol{z}_{c,i}-z^{\star}\right\|^2\} \\ 
            & =
            \mathbb{E} \{\left\|\boldsymbol{z}_{c,i-1}-\mu F\left(\boldsymbol{z}_{c,i-1}\right) + \boldsymbol{d}_{c,i} - z^{\star}\right\|^2\} \\
            & =
            \mathbb{E} \{\left\|\boldsymbol{z}_{c,i-1}-\mu F\left(\boldsymbol{z}_{c,i-1}\right)-z^{\star}\right\|^2\} \\
            & \quad + 2\mathbb{E}\{\langle \boldsymbol{z}_{c,i-1}-\mu F\left(\boldsymbol{z}_{c,i-1}\right)-z^{\star}, \boldsymbol{d}_{c,i}\rangle\} + \mathbb{E}\{\left\|\boldsymbol{d}_{c,i}\right\|^2\}
        \end{aligned}\label{eq:convg_nash}
    \end{equation}
    The key lies in upper bounding the term $2\mathbb{E}\{\langle \boldsymbol{z}_{c,i-1}-\mu F\left(\boldsymbol{z}_{c,i-1}\right)-z^{\star}, \boldsymbol{d}_{c,i}\rangle\}$.
    From Lemma \ref{lemma:zerosum_learning_dynamic}, we have:
    
    \begin{equation}
        \begin{aligned}
            &2\mathbb{E}\{\langle \boldsymbol{z}_{c,i-1}-\mu F\left(\boldsymbol{z}_{c,i-1}\right)-z^{\star}, \boldsymbol{d}_{c,i}\rangle\} \\
            & = 
            2\mathbb{E}\{\langle \boldsymbol{x}_{c,i-1}- \mu \nabla_x J^{(1)}(\boldsymbol{x}_{c,i-1}, \boldsymbol{y}_{c,i-1}) - x^{\star}, \boldsymbol{d}^{(x)}_{c,i}\rangle\} \\
            & \quad+ 2\mathbb{E}\{\langle \boldsymbol{y}_{c,i-1} - \mu \nabla_y J^{(2)}(\boldsymbol{x}_{c,i-1}, \boldsymbol{y}_{c,i-1}) - y^{\star}, \boldsymbol{d}_{c,i}^{(y)}\rangle\}
        \end{aligned}
    \end{equation}
    We will only show the derivation for the first term, as the derivation for the second is nearly the same. Recall that
    \begin{equation}
        \begin{aligned}
            \boldsymbol{d}^{(x)}_{c,i} =& \mu \sum_{k \in \mathcal{N}^{(1)}} p_k (\nabla_x J_k (\boldsymbol{x}_{c,i-1}, \boldsymbol{y}_{c,i-1}) -\nabla_x J_k (\boldsymbol{x}_{k,i-1},\\
            &\boldsymbol{y}_{k,i-1})) + \mu \sum_{k \in \mathcal{N}^{(1)}} p_k (\nabla_x J_k (\boldsymbol{x}_{k,i-1}, \boldsymbol{y}_{k,i-1}) \\
            &- \widehat{\nabla_x J}_k (\boldsymbol{x}_{k,i-1}, \boldsymbol{y}_{k,i-1}))
        \end{aligned}
    \end{equation}
    Then, we have:
    \begin{equation}
        \begin{aligned}
            & 2\mathbb{E}\{\langle \boldsymbol{x}_{c,i-1}- \mu \nabla_x J^{(1)}(\boldsymbol{x}_{c,i-1}, \boldsymbol{y}_{c,i-1}) - x^{\star}, \boldsymbol{d}^{(x)}_{c,i}\rangle \mid \boldsymbol{\mathcal{F}}_{i-1}\}\\
            & \stackrel{(a)}{=} 2 \langle \boldsymbol{x}_{c,i-1}- \mu \nabla_x J^{(1)}(\boldsymbol{x}_{c,i-1}, \boldsymbol{y}_{c,i-1}) - x^{\star}, \\
            & \quad \mu \sum_{k \in \mathcal{N}^{(1)}} p_k (\nabla_x J_k (\boldsymbol{x}_{c,i-1}, \boldsymbol{y}_{c,i-1}) - \nabla_x J_k (\boldsymbol{x}_{k,i-1}, \boldsymbol{y}_{k,i-1}))\rangle \\
            & \stackrel{(b)}{=}
            2 \mu \delta \langle \boldsymbol{x}_{c,i-1}- \mu \nabla_x J^{(1)}(\boldsymbol{x}_{c,i-1}, \boldsymbol{y}_{c,i-1}) - x^{\star}, \\
            &\quad \frac{1}{\delta} \sum_{k \in \mathcal{N}^{(1)}} p_k (\nabla_x J_k (\boldsymbol{x}_{c,i-1}, \boldsymbol{y}_{c,i-1}) - \nabla_x J_k (\boldsymbol{x}_{k,i-1}, \boldsymbol{y}_{k,i-1}))\rangle \\
            & \stackrel{(c)}{\leq}
            \mu \delta \|\boldsymbol{x}_{c,i-1}- \mu \nabla_x J^{(1)}(\boldsymbol{x}_{c,i-1}, \boldsymbol{y}_{c,i-1}) - x^{\star}\|^2\\
            & 
            \quad + \frac{\mu}{\delta} \sum_{k \in \mathcal{N}^{(1)}} p_k L_f^2 (\|\boldsymbol{x}_{c,i-1} - \boldsymbol{x}_{k,i-1}\| + \|\boldsymbol{y}_{c,i-1} - \boldsymbol{y}_{k,i-1}\|)^2
        \end{aligned}
    \end{equation}
    where $(a)$ follows from Assumption \ref{ass:gradnoise}; $(b)$ introduces a fixed constant $\delta > 0$ that can be chosen independently from $\mu$, for example $\delta = \nu$; $(c)$ follows from Assumption \ref{ass:lip}. Then, after sufficient iterations:
    \begin{equation}
        \begin{aligned}
            & 2\mathbb{E}\{\langle \boldsymbol{x}_{c,i-1}- \mu \nabla_x J^{(1)}(\boldsymbol{x}_{c,i-1}, \boldsymbol{y}_{c,i-1}) - x^{\star}, \boldsymbol{d}^{(x)}_{c,i}\rangle \} \\
            & \leq
            \mu \delta \mathbb{E}\{\|\boldsymbol{x}_{c,i-1}- \mu \nabla_x J^{(1)}(\boldsymbol{x}_{c,i-1}, \boldsymbol{y}_{c,i-1}) - x^{\star}\|^2\}\\
            & 
            \quad + \frac{\mu}{\delta} \sum_{k \in \mathcal{N}^{(1)}} p_k L_f^2 (2 \mathbb{E}\{\|\boldsymbol{x}_{c,i-1} - \boldsymbol{x}_{k,i-1}\|^2\} \\
            & \quad + 2\mathbb{E}\{\|\boldsymbol{y}_{c,i-1} - \boldsymbol{y}_{k,i-1}\|^2\}) \\
            & \stackrel{(a)}{\leq}
            \mu \delta \mathbb{E}\{\|\boldsymbol{x}_{c,i-1}- \mu \nabla_x J^{(1)}(\boldsymbol{x}_{c,i-1}, \boldsymbol{y}_{c,i-1}) - x^{\star}\|^2\} \\
            & \quad + O\Big(\frac{\mu^3}{\delta}\Big)
        \end{aligned}
    \end{equation}
    where $(a)$ is due to Lemma \ref{lemma:zerosum_within_cross}. Also considering $2\mathbb{E}\{\langle \boldsymbol{y}_{c,i-1} - \mu \nabla_y J^{(2)}(\boldsymbol{x}_{c,i-1}, \boldsymbol{y}_{c,i-1}) - y^{\star}, \boldsymbol{d}_{c,i}^{(y)}\rangle\}$, we arrive at the following:
    \begin{equation}
        \begin{aligned}
            &2\mathbb{E}\{\langle \boldsymbol{z}_{c,i-1}-\mu F\left(\boldsymbol{z}_{c,i-1}\right)-z^{\star}, \boldsymbol{d}_{c,i}\rangle\}\\
            & \leq
            \mu \delta \mathbb{E} \{\|\boldsymbol{z}_{c,i-1}-\mu F\left(\boldsymbol{z}_{c,i-1}\right)-z^{\star}\|^2\} + O\Big(\frac{\mu^3}{\delta}\Big)
        \end{aligned}
    \end{equation}
    In addition, we have $\mathbb{E}\{\left\|\boldsymbol{d}_{c,i}\right\|^2\} \leq O(\mu^2)$. Therefore, going back to $ \mathbb{E} \{\left\|\boldsymbol{z}_{c,i}-z^{\star}\right\|^2\}$, we get:
    \begin{equation}
        \begin{aligned}
            &\mathbb{E} \{\left\|\boldsymbol{z}_{c,i}-z^{\star}\right\|^2\} \\
            & \leq 
            (1+\delta \mu)\mathbb{E} \{\left\|\boldsymbol{z}_{c,i-1}-\mu F\left(\boldsymbol{z}_{c,i-1}\right)-z^{\star}\right\|^2\} + O(\mu^2) \\
            & \leq 
            (1+\delta \mu)\left(1-2 \mu \nu+L^2 \mu^2\right) \mathbb{E}\{\left\|\boldsymbol{z}_{c,i-1}-z^{\star}\right\|^2\} + O\left(\mu^2\right)
        \end{aligned}
    \end{equation}
    where the last step is due to the following:
    \begin{equation}
        \begin{aligned}
            & \left\|\boldsymbol{z}_{c,i-1}-\mu F\left(\boldsymbol{z}_{c,i-1}\right)+\mu F\left(z^{\star}\right)-z^{\star}\right\|^2\\
            & = 
            \left\|\boldsymbol{z}_{c,i-1}-z^{\star}\right\|^2-2 \mu\left\langle F\left(\boldsymbol{z}_{c,i-1}\right)-F\left(z^{\star}\right), \boldsymbol{z}_{c,i-1} - z^{\star}\right\rangle\\
            &\quad +\mu^2\left\|F\left(\boldsymbol{z}_{c,i-1}\right)-F\left(z^{\star}\right)\right\|^2 \\ 
            & \stackrel{(a)}{\leq}
            \left\|\boldsymbol{z}_{c,i-1}-z^{\star}\right\|^2-2 \mu \nu\left\|\boldsymbol{z}_{c,i-1}-z^{\star}\right\|^2 \\
            & \quad +L^2 \mu^2\left\|\boldsymbol{z}_{c,i-1}-z^{\star}\right\|^2 \\ 
            & =
            \left(1-2 \mu \nu+L^2 \mu^2\right)\left\|\boldsymbol{z}_{c,i-1}-z^{\star}\right\|^2
        \end{aligned}
    \end{equation}
    where $(a)$ follows from Assumption 1 and (\ref{eq:operator_lip}). Now our goal is to choose a proper $\delta$ to make $(1+\delta \mu)\left(1-2 \mu \nu+L^2 \mu^2\right) <1$ when $\mu$ is sufficiently small. A reasonable choice is $\delta = \nu$, let $h(\mu) \triangleq (1+\nu \mu)\left(1-2 \mu \nu+L^2 \mu^2\right)$. We find $h(0) = 1$, $h^\prime(0) = -\nu$, which means when $\mu$ is sufficiently small, we can always ensure $(1+\delta \mu)\left(1-2 \mu \nu+L^2 \mu^2\right) <1$. Specifically, we can get: 
    \begin{equation}
        h(\mu) = 1 - \nu \mu + \left(L^2 - 2 \nu^2\right) \mu^2 + \nu L^2 \mu^3
    \end{equation}
    To let $h(\mu) < 1$, we need:
    \begin{equation}
        - \nu \mu + \left(L^2 - 2 \nu^2\right) \mu^2 + \nu L^2 \mu^3 < 0
    \end{equation}
    By simple algebra, the above inequality is satisfied when 
    \begin{equation}
    \label{eq:mu_range}
        \mu \in \left( 0, \; \dfrac{ - L^2 + 2 \nu^2 + \sqrt{ L^4 + 4 \nu^4 } }{ 2 \nu L^2 } \right)
    \end{equation}
    Therefore, just set $\delta = \nu$, we obtain that after sufficient iterations:
    \begin{equation}
        \mathbb{E} \left[\|\boldsymbol{z}_{c,i} - z^\star\|^2\right] \leq 
        d \, \mathbb{E}\left[\|\boldsymbol{z}_{c,i-1} - z^\star\|^2\right] + O\left(\mu^2\right)
    \end{equation}
    where $d \leq (1+\nu \mu)\left(1-2 \mu \nu+L^2 \mu^2\right)$, and we can ensure $d<1$ when $\mu$ is sufficiently small (e.g. \eqref{eq:mu_range}). We can conclude that:
    
\vspace{-2em}
        \begin{align}
            \limsup_{i \rightarrow \infty} \mathbb{E}\{\|\boldsymbol{z}_{c,i} - z^{\star}\|^2\} 
            &\leq 
            \frac{O\left(\mu^2\right)}{1 - (1+\delta \mu)\left(1-2 \mu \nu+L^2 \mu^2\right)}\notag\\
            & =
            \frac{O\left(\mu^2\right)}{\mu \nu - (L^2 - 2\nu^2) \mu^2 - L^2 \nu \mu^3} \notag\\
            & =
            O(\mu)
        \end{align}

\vspace{-1.1em}
\section*{Acknowledgement}   
The authors wish to thank student Vladyslav Shashkov for useful feedback on the topic of the manuscript. 

\vspace{-0.3em}
\bibliographystyle{IEEEbib}
{\footnotesize
\bibliography{refs}
}

\end{document}